\newtheorem{theorem}{Theorem}
\newtheorem{proposition}[theorem]{Proposition}
\newenvironment{proof}[1][Proof]{\noindent\textbf{#1.} }{\ \rule{0.5em}{0.5em}}
\newcommand{\be}{\begin{equation}}
\newcommand{\ee}{\end{equation}}
\newcommand{\bse}{\begin{subequations}}
\newcommand{\ese}{\end{subequations}}
\newcommand{\ket}[1]{|{#1}\rangle}
\newcommand{\bbra}[1]{\left\langle{#1}\right|}
\newcommand{\Z}{\mathbb{Z}}
\newcommand{\ii}{\mathrm{i}}
\newcommand{\Hil}{\mathcal{H}}
\newcommand{\str}{\mathcal{S}}
\newcommand{\ttr}{\mathcal{T}}
\newcommand{\bpm}{\begin{pmatrix}}
\newcommand{\epm}{\end{pmatrix}}
\newcommand{\bmm}{\begin{matrix}}
\newcommand{\emm}{\end{matrix}}
\newcommand{\elf}[4]{\alpha\left([#1#2],[#2#3],[#3#4]\right)}
\newcommand{\dlt}[3]{\delta_{[#1#2]\cdot[#2#3]\cdot[#3#1]}}
\newcommand{\Blangle}{\Biggl\langle\bmm} 
\newcommand{\BLvert}{\Biggl\vert\bmm} 
\newcommand{\Bvert}{\emm\Biggr\vert\bmm} 
\newcommand{\Brangle}{\emm\Biggr\rangle}
\newcommand{\defeq}{\stackrel{\mathrm{def}}{=}}
\newcommand{\cb}[2]{#1\underset{\!\scalebox{1}{$#2$}}{\square}}
\newcommand*{\Relbarfill@}{\arrowfill@\Relbar\Relbar\Relbar}
\newcommand*{\xeq}[2][]{\ext@arrow 0055\Relbarfill@{#1}{#2}}
\tikzset{->-/.style={decoration={
  markings,
  mark=at position .5 with {\arrow{>}}},postaction={decorate}}}
\tikzset{-<-/.style={decoration={
  markings,
  mark=at position .5 with {\arrow{<}}},postaction={decorate}}}
\newcommand{\tetrahedron}[7]{
\begin{tikzpicture}[scale=1,baseline]
\coordinate (a) at(0,0);
\coordinate (b) at (0.8,-0.75);
\coordinate (c) at (2.5,0);
\coordinate (d) at (1.25,2);
\coordinate (e) at ($ (a) ! 0.3 ! (c) $);
\coordinate (f) at ($ (a) ! 0.45 ! (c) $);
\draw (a) -- (b) -- (c) -- (d) -- cycle;
\draw (b) -- (d);
\draw (a) -- (e);
\draw (f) -- (c);
\draw[-<,>=latex,line width=0.01pt] (a) -- ($ (a) ! 0.5 ! (b) $);
\draw[-<,>=latex,line width=0.01pt] (b) -- ($ (b) ! 0.5 ! (c) $);
\draw[-<,>=latex,line width=0.01pt] (c) -- ($ (c) ! 0.5 ! (d) $);
\draw[<-,>=latex,line width=0.01pt] (f) -- (c);
\draw[-<,>=latex,line width=0.01pt] (b) -- ($ (b) ! 0.5 ! (d) $);
\draw[-<,>=latex,line width=0.01pt] (a) -- ($ (a) ! 0.5 ! (d) $);

\node[left] at(a) {\scalebox{0.7}{$#1$}};
\node[below] at(b) {\scalebox{0.7}{$#2$}};
\node[right] at(c) {\scalebox{0.7}{$#3$}};
\node[above] at(d) {\scalebox{0.7}{$#4$}};

\node[below] at ($ (a) ! 0.5 ! (b) $) {\scalebox{0.7}{$#5$}};
\node[below] at ($ (b) ! 0.5 ! (c) $) {\scalebox{0.7}{$#6$}};
\node[right] at ($ (c) ! 0.5 ! (d) $) {\scalebox{0.7}{$#7$}};
\node[below] at (f) {\scalebox{0.7}{$#5#6$}};
\node[right] at ($ (b) ! 0.5 ! (d) $) {\scalebox{0.7}{$#6#7$}};
\node[left] at ($ (a) ! 0.5 ! (d) $) {\scalebox{0.7}{$#5#6#7$}};
\end{tikzpicture}
}
\newcommand{\tetrahedronSplit}[9]{
\begin{tikzpicture}[scale=1,baseline]
\coordinate (a) at(0,0);
\coordinate (b) at (0.8,-0.75);
\coordinate (c) at (2.5,0);
\coordinate (d) at (1.25,2);
\coordinate (e) at ($ (a) ! 0.3 ! (c) $);
\coordinate (f) at ($ (a) ! 0.45 ! (c) $);
\coordinate (g) at ($ (a) ! 0.5 ! (b) $);
\coordinate (h) at ($ (c) ! 0.5 ! (d) $);
\coordinate (n) at ($ (g) ! 0.5 ! (h) $); 
\draw (a) -- (b) -- (c) -- (d) -- cycle;
\draw (b) -- (d);
\draw (a) -- (e);
\draw (f) -- (c);
\draw[-<,>=latex,line width=0.01pt] (a) -- ($ (a) ! 0.5 ! (b) $);
\draw[-<,>=latex,line width=0.01pt] (b) -- ($ (b) ! 0.5 ! (c) $);
\draw[-<,>=latex,line width=0.01pt] (c) -- ($ (c) ! 0.5 ! (d) $);
\draw[<-,>=latex,line width=0.01pt] (f) -- (c);
\draw[-<,>=latex,line width=0.01pt] (b) -- ($ (b) ! 0.5 ! (d) $);
\draw[-<,>=latex,line width=0.01pt] (a) -- ($ (a) ! 0.5 ! (d) $);

\node[left] at(a) {\scalebox{0.7}{$#1$}};
\node[below] at(b) {\scalebox{0.7}{$#2$}};
\node[right] at(c) {\scalebox{0.7}{$#3$}};
\node[above] at(d) {\scalebox{0.7}{$#4$}};
\node[circle,fill=black,outer sep=0pt, inner sep=0.6pt] at(n) {};
\node[below] at ($ (a) ! 0.5 ! (b) $) {\scalebox{0.7}{$#5$}};
\node[below] at ($ (b) ! 0.5 ! (c) $) {\scalebox{0.7}{$#6$}};
\node[right] at ($ (c) ! 0.5 ! (d) $) {\scalebox{0.7}{$#7$}};
\node at (0.6,-0.15) {\scalebox{0.7}{$#5#6$}};
\node[left] at ($ (b) ! 0.5 ! (d) $) {\scalebox{0.7}{$#6#7$}};
\node[left] at ($ (a) ! 0.5 ! (d) $) {\scalebox{0.7}{$#5#6#7$}};

\draw[dashed] (a) -- (n);
\draw[dashed] (b) -- (n);
\draw[dashed] (c) -- (n);
\draw[dashed] (d) -- (n);

\ifnum #8=1 {
  \node[right, outer sep=0pt, inner sep=1pt] at($ (n) ! 0.07 ! (d) $) {\scalebox{0.7}{$#1'$}};
  \draw[->,>=latex,line width=0.01pt] ($ (a) ! 0.4 ! (n) $) -- ($ (a) ! 0.5 ! (n) $) node[above] {\scalebox{0.7}{$#9$}};
  \draw[->,>=latex,line width=0.01pt] ($ (b) ! 0.4 ! (n) $) node[right] {\scalebox{0.7}{$#9#5$}} -- ($ (b) ! 0.5 ! (n) $) ;
  \draw[->,>=latex,line width=0.01pt] ($ (c) ! 0.4 ! (n) $) -- ($ (c) ! 0.5 ! (n) $) node[above] {\scalebox{0.7}{$#9#5#6$}};
  \draw[->,>=latex,line width=0.01pt] ($ (d) ! 0.4 ! (n) $) -- ($ (d) ! 0.5 ! (n) $) node[right,outer sep=0pt, inner sep=1pt] {\scalebox{0.6}{$#9#5#6#7$}};}
\fi
\ifnum #8=2 {
  \node[right, outer sep=0pt, inner sep=1pt] at($ (n) ! 0.07 ! (d) $) {\scalebox{0.7}{$#2'$}};
  \draw[-<,>=latex,line width=0.01pt] ($ (a) ! 0.4 ! (n) $) -- ($ (a) ! 0.5 ! (n) $) node[above, outer sep=0pt, inner sep=1pt] {\scalebox{0.7}{$#5#9^{-1}$}};
  \draw[->,>=latex,line width=0.01pt] ($ (b) ! 0.4 ! (n) $) node[right] {\scalebox{0.7}{$#9$}} -- ($ (b) ! 0.5 ! (n) $) ;
  \draw[->,>=latex,line width=0.01pt] ($ (c) ! 0.4 ! (n) $) -- ($ (c) ! 0.5 ! (n) $) node[above] {\scalebox{0.7}{$#9#6$}};
  \draw[->,>=latex,line width=0.01pt] ($ (d) ! 0.4 ! (n) $) -- ($ (d) ! 0.5 ! (n) $) node[right,outer sep=0pt, inner sep=1pt] {\scalebox{0.6}{$#9#6#7$}};}
\fi
\ifnum #8=3 {
  \node[right, outer sep=0pt, inner sep=1pt] at($ (n) ! 0.07 ! (d) $) {\scalebox{0.7}{$#3'$}};
  \draw[-<,>=latex,line width=0.01pt] ($ (a) ! 0.4 ! (n) $) -- ($ (a) ! 0.5 ! (n) $) node[above, outer sep=0pt, inner sep=1pt] {\scalebox{0.7}{$#5#6#9^{-1}$}};
  \draw[-<,>=latex,line width=0.01pt] ($ (b) ! 0.4 ! (n) $) node[right, outer sep=0pt, inner sep=1pt] {\scalebox{0.7}{$#6#9^{-1}$}} -- ($ (b) ! 0.5 ! (n) $) ;
  \draw[->,>=latex,line width=0.01pt] ($ (c) ! 0.4 ! (n) $) -- ($ (c) ! 0.5 ! (n) $) node[above] {\scalebox{0.7}{$#9$}};
  \draw[->,>=latex,line width=0.01pt] ($ (d) ! 0.4 ! (n) $) -- ($ (d) ! 0.5 ! (n) $) node[right,outer sep=0pt, inner sep=1pt] {\scalebox{0.6}{$#9#7$}};}
\fi
\ifnum #8=4 {
  \node[right, outer sep=0pt, inner sep=1pt] at($ (n) ! 0.07 ! (d) $) {\scalebox{0.7}{$#4'$}};
  \draw[-<,>=latex,line width=0.01pt] ($ (a) ! 0.4 ! (n) $) -- ($ (a) ! 0.5 ! (n) $) node[above, outer sep=0pt, inner sep=1pt] {\scalebox{0.7}{$#5#6#7#9^{-1}$}};
  \draw[-<,>=latex,line width=0.01pt] ($ (b) ! 0.4 ! (n) $) node[right, outer sep=0pt, inner sep=1pt] {\scalebox{0.7}{$#6#7#9^{-1}$}} -- ($ (b) ! 0.5 ! (n) $) ;
  \draw[-<,>=latex,line width=0.01pt] ($ (c) ! 0.4 ! (n) $) -- ($ (c) ! 0.5 ! (n) $) node[above] {\scalebox{0.7}{$#7#9^{-1}$}};
  \draw[->,>=latex,line width=0.01pt] ($ (d) ! 0.4 ! (n) $) -- ($ (d) ! 0.5 ! (n) $) node[right,outer sep=0pt, inner sep=1pt] {\scalebox{0.6}{$#9$}};}
\fi
\end{tikzpicture}
}
\newcommand{\oneTriangle}[4][0]{
\begin{tikzpicture}[scale=1,baseline]
\draw (0,0) node[left] {\scalebox{0.7}{$#2$}} -- (1,0) node[right] {\scalebox{0.7}{$#3$}} -- (60:1) node[above] {\scalebox{0.7}{$#4$}} -- cycle;
\ifnum #1=1 {
   \draw[-<,>=latex,line width=0.01pt] (0,0) -- (0.5,0) node[below] {\scalebox{0.7}{$[#2#3]$}};
   \draw[-<,>=latex,line width=0.01pt] (0,0) -- (60:0.5) node[left] {\scalebox{0.7}{$[#3#4]$}};
   \draw[->,>=latex,line width=0.01pt] (0,0) +(60:1) -- +(30:0.866) node[right] {\scalebox{0.7}{$[#2#4]$}};}
\else {
   \draw[-<,>=latex,line width=0.01pt] (0,0) -- (0.5,0) ;
   \draw[-<,>=latex,line width=0.01pt] (0,0) -- (60:0.5) ;
   \draw[->,>=latex,line width=0.01pt] (0,0) +(60:1) -- +(30:0.866);}  \fi
\end{tikzpicture}
}
\newcommand{\oneTriangleInKitaevModel}{
\begin{tikzpicture}[scale=1,baseline]
 \draw (0,0) node[left] {\scalebox{0.7}{$v_1$}} -- (1,0) node[right] {\scalebox{0.7}{$v_2$}} -- (60:1) node[above] {\scalebox{0.7}{$v_3$}} -- cycle;
   \draw[-<,>=latex,line width=0.01pt] (0,0) -- (0.5,0) node[below] {\scalebox{0.7}{$g_1$}};
   \draw[->,>=latex,line width=0.01pt] (0,0) -- (60:0.5) node[left] {\scalebox{0.7}{$g_3$}};
   \draw[->,>=latex,line width=0.01pt] (0,0) +(60:1) -- +(30:0.866) node[right] {\scalebox{0.7}{$g_2$}};
\end{tikzpicture}
}
\newcommand{\twoTriangles}[5]{
\begin{tikzpicture}[scale=0.8,baseline]
\coordinate (b) at (0,0);
\coordinate (c) at (30:1);
\coordinate (d) at (0,1);
\coordinate (a) at (150:1);

\draw (a) node[left] {\scalebox{0.7}{$#1$}} -- (b) node[below] {\scalebox{0.7}{$#2$}} -- (c) node[right] {\scalebox{0.7}{$#3$}} -- (d) node[above] {\scalebox{0.7}{$#4$}} -- cycle;
\draw[-<,>=latex,line width=0.01pt] (a) -- ($ (a) ! 0.5 ! (b) $);
\draw[-<,>=latex,line width=0.01pt] (b) -- ($ (b) ! 0.5 ! (c) $);
\draw[-<,>=latex,line width=0.01pt] (c) -- ($ (c) ! 0.5 ! (d) $);
\draw[-<,>=latex,line width=0.01pt] (a) -- ($ (a) ! 0.5 ! (d) $);

\ifnum #5=0 {
   \draw (b)  -- (d);
   \draw[-<,>=latex,line width=0.01pt] (b) -- ($ (b) ! 0.5 ! (d) $);}
\else {
   \draw (a)  -- (c);
   \draw[-<,>=latex,line width=0.01pt] (a) -- ($ (a) ! 0.5 ! (c) $);}
\fi
\end{tikzpicture}
}
\newcommand{\threeTriangles}[5]{
\begin{tikzpicture}[scale=0.8,baseline]
\coordinate (c) at (0,0);
\coordinate (d) at (0,1);
\coordinate (a) at (210:1);
\coordinate (b) at (-30:1);

\draw (a) node[left] {\scalebox{0.7}{$#1$}} -- (b) node[right] {\scalebox{0.7}{$#2$}} -- (c) node[below] {\scalebox{0.7}{$#3$}} -- (d) node[above] {\scalebox{0.7}{$#4$}} -- (a) -- (c);
\draw (b) -- (d);
\draw[-<,>=latex,line width=0.01pt] (a) -- ($ (a) ! 0.5 ! (b) $);
\draw[-<,>=latex,line width=0.01pt] (a) -- ($ (a) ! 0.5 ! (d) $);
\draw[-<,>=latex,line width=0.01pt] (b) -- ($ (b) ! 0.5 ! (d) $);

\ifnum #5=1 {
\draw[-<,>=latex,line width=0.01pt] (b) -- ($ (b) ! 0.5 ! (c) $);
\draw[->,>=latex,line width=0.01pt] (c) -- ($ (c) ! 0.5 ! (d) $);
\draw[-<,>=latex,line width=0.01pt] (a) -- ($ (a) ! 0.5 ! (c) $);}
\fi
\ifnum #5=2 {
\draw[-<,>=latex,line width=0.01pt] (b) -- ($ (b) ! 0.5 ! (c) $);
\draw[-<,>=latex,line width=0.01pt] (c) -- ($ (c) ! 0.5 ! (d) $);
\draw[-<,>=latex,line width=0.01pt] (a) -- ($ (a) ! 0.5 ! (c) $);}
\fi
\ifnum #5=3 {
\draw[->,>=latex,line width=0.01pt] (b) -- ($ (b) ! 0.5 ! (c) $);
\draw[-<,>=latex,line width=0.01pt] (c) -- ($ (c) ! 0.5 ! (d) $);
\draw[-<,>=latex,line width=0.01pt] (a) -- ($ (a) ! 0.5 ! (c) $);}
\fi
\ifnum #5=4 {
\draw[->,>=latex,line width=0.01pt] (b) -- ($ (b) ! 0.5 ! (c) $);
\draw[-<,>=latex,line width=0.01pt] (c) -- ($ (c) ! 0.5 ! (d) $);
\draw[->,>=latex,line width=0.01pt] (a) -- ($ (a) ! 0.5 ! (c) $);}
\fi
\end{tikzpicture}
}
\newcommand{\BvTri}[5]{
\begin{tikzpicture}[scale=1,baseline]
\coordinate (a) at(0,0);    
\coordinate (b) at (0.8,-0.75);  
\coordinate (c) at (2.5,0);      
\coordinate (d) at (1.25,2);     
\coordinate (e) at ($ (a) ! 0.3 ! (c) $);
\coordinate (f) at ($ (a) ! 0.45 ! (c) $);
\coordinate (g) at ($ (a) ! 0.5 ! (b) $);
\coordinate (h) at ($ (c) ! 0.5 ! (d) $);
\coordinate (n) at ($ (g) ! 0.5 ! (h) $); 
\draw (a) -- (b) -- (c) -- (d) -- cycle;
\draw (b) -- (d);
\draw (a) -- (e);
\draw (f) -- (c);
\draw[-<,>=latex,line width=0.01pt] (a) -- ($ (a) ! 0.5 ! (b) $);
\draw[-<,>=latex,line width=0.01pt] (b) -- ($ (b) ! 0.5 ! (c) $);
\draw[-<,>=latex,line width=0.01pt] (c) -- ($ (c) ! 0.5 ! (d) $);
\draw[<-,>=latex,line width=0.01pt] (f) -- (c);
\draw[-<,>=latex,line width=0.01pt] (b) -- ($ (b) ! 0.5 ! (d) $);
\draw[-<,>=latex,line width=0.01pt] (a) -- ($ (a) ! 0.5 ! (d) $);

\node[left] at(a) {\scalebox{0.7}{$#1$}};
\node[below] at(b) {\scalebox{0.7}{$#2$}};
\node[right] at(c) {\scalebox{0.7}{$#4$}};
\node[above] at(d) {\scalebox{0.7}{$#3$}};
\node[circle,fill=black,outer sep=0pt, inner sep=0.6pt] at(n) {};
\draw[dashed] (a) -- (n);
\draw[dashed] (b) -- (n);
\draw[dashed] (c) -- (n);
\draw[dashed] (d) -- (n);

\ifnum #5=1 {
  \node[right, outer sep=0pt, inner sep=1pt] at($ (n) ! 0.07 ! (d) $) {\scalebox{0.7}{$#1'$}};
  \draw[->,>=latex,line width=0.01pt] ($ (a) ! 0.4 ! (n) $) -- ($ (a) ! 0.5 ! (n) $) ;
  \draw[->,>=latex,line width=0.01pt] ($ (b) ! 0.4 ! (n) $) -- ($ (b) ! 0.5 ! (n) $) ;
  \draw[->,>=latex,line width=0.01pt] ($ (c) ! 0.4 ! (n) $) -- ($ (c) ! 0.5 ! (n) $) ;
  \draw[->,>=latex,line width=0.01pt] ($ (d) ! 0.4 ! (n) $) -- ($ (d) ! 0.5 ! (n) $) ;}
\fi
\ifnum #5=2 {
  \node[right, outer sep=0pt, inner sep=1pt] at($ (n) ! 0.07 ! (d) $) {\scalebox{0.7}{$#2'$}};
  \draw[-<,>=latex,line width=0.01pt] ($ (a) ! 0.4 ! (n) $) -- ($ (a) ! 0.5 ! (n) $) ;
  \draw[->,>=latex,line width=0.01pt] ($ (b) ! 0.4 ! (n) $) -- ($ (b) ! 0.5 ! (n) $) ;
  \draw[->,>=latex,line width=0.01pt] ($ (c) ! 0.4 ! (n) $) -- ($ (c) ! 0.5 ! (n) $) ;
  \draw[->,>=latex,line width=0.01pt] ($ (d) ! 0.4 ! (n) $) -- ($ (d) ! 0.5 ! (n) $) ;}
\fi
\ifnum #5=3 {
  \node[right, outer sep=0pt, inner sep=1pt] at($ (n) ! 0.07 ! (d) $) {\scalebox{0.7}{$#3'$}};
  \draw[-<,>=latex,line width=0.01pt] ($ (a) ! 0.4 ! (n) $) -- ($ (a) ! 0.5 ! (n) $) ;
  \draw[-<,>=latex,line width=0.01pt] ($ (b) ! 0.4 ! (n) $) -- ($ (b) ! 0.5 ! (n) $) ;
  \draw[->,>=latex,line width=0.01pt] ($ (c) ! 0.4 ! (n) $) -- ($ (c) ! 0.5 ! (n) $) ;
  \draw[->,>=latex,line width=0.01pt] ($ (d) ! 0.4 ! (n) $) -- ($ (d) ! 0.5 ! (n) $) ;}
\fi
\ifnum #5=4 {
  \node[right, outer sep=0pt, inner sep=1pt] at($ (n) ! 0.07 ! (d) $) {\scalebox{0.7}{$#4'$}};
  \draw[-<,>=latex,line width=0.01pt] ($ (a) ! 0.4 ! (n) $) -- ($ (a) ! 0.5 ! (n) $) ;
  \draw[-<,>=latex,line width=0.01pt] ($ (b) ! 0.4 ! (n) $) -- ($ (b) ! 0.5 ! (n) $) ;
  \draw[->,>=latex,line width=0.01pt] ($ (c) ! 0.4 ! (n) $) -- ($ (c) ! 0.5 ! (n) $) ;
  \draw[-<,>=latex,line width=0.01pt] ($ (d) ! 0.4 ! (n) $) -- ($ (d) ! 0.5 ! (n) $) ;}
\fi
\end{tikzpicture}
}
\newcommand{\TriVertex}[3]{
  {
  \begin{tikzpicture}[scale=0.6,baseline]
  \tikzstyle{every node}=[font=\small]
  \draw  [->,>=latex] (0,0) -- (1,0.5);
  \draw  [->,>=latex] (2,0) -- (1,0.5);
  \draw  [->,>=latex] (1,1.5) -- (1,0.5);
  \node at(0.4,0.6) {$#1$};
  \node at(1.8,0.6) {$#2$};
  \node at(1.3,1.2) {$#3$};
  \end{tikzpicture}
  }}
\newcommand{\YYa}[5]{
  {
  \begin{tikzpicture}[scale=#1,baseline]
  \tikzstyle{every node}=[font=\small]
  \draw  (-1,-1) -- (0,0);
  \draw  (1,-1) -- (0,0);
  \draw  (0,0) -- (0,2);
  \draw  (0,2) -- (-1,3);
  \draw  (0,2) -- (1,3);
  \node[gray] at (-1,1) {$#2$};
  \node[gray] at (1,1) {$#3$};
  \node[gray] at (0,-1) {$#4$};
  \node[gray] at (0,3) {$#5$};
  \end{tikzpicture}
  }}
\newcommand{\YYb}{
  {
  \begin{tikzpicture}[scale=0.5,baseline]
  \tikzstyle{every node}=[font=\small]
  \draw  (-1,-1) -- (0,0);
  \draw  (1,-1) -- (0,0);
  \draw[->,>=latex]  (0,0) -- (0,2);
  \draw  (0,2) -- (-1,3);
  \draw  (0,2) -- (1,3);
  \node[gray] at (-1.5,1) {$v_1$};
  \node[gray] at (1.5,0.8) {$v_2$};
  \node at(1.1,1.5) {$[v_1v_2]$};
  \node[gray] at (0,-1) {$v_3$};
  \node[gray] at (0,3) {$v_4$};
  \end{tikzpicture}
  }}
\newcommand{\YYc}{
  {
  \begin{tikzpicture}[scale=0.5,baseline]
  \tikzstyle{every node}=[font=\small]
  \draw  (-1,-1) -- (0,0);
  \draw  (1,-1) -- (0,0);
  \draw[<-,>=latex]  (0,0) -- (0,2);
  \draw  (0,2) -- (-1,3);
  \draw  (0,2) -- (1,3);
  \node[gray] at (-1.5,1) {$v_1$};
  \node[gray] at (1.5,1.2) {$v_2$};
  \node at (1.1,0.5) {$[v_2v_1]$};
  \node[gray] at (0,-1) {$v_3$};
  \node[gray] at (0,3) {$v_4$};
  \end{tikzpicture}
  }}
\newcommand{\HorGaugeTransformYY}[4]{
  {
  \begin{tikzpicture}[scale=#1,baseline]
  \tikzstyle{every node}=[font=\small]
  \draw [->,>=latex,line width=0.01pt] (-1,-1) -- (0,0);
  \draw [->,>=latex,line width=0.01pt] (-1,1) -- (0,0);
  \draw [->,>=latex,line width=0.01pt] (0,0) -- (2,0);
  \draw [->,>=latex,line width=0.01pt] (2,0) -- (3,-1);
  \draw [->,>=latex,line width=0.01pt] (2,0) -- (3,1);
  \node[gray] at (-1,0) {$#2$};
  \node[gray] at (3,0) {$#3$};
  \node at (1,1) {$#4$};
  \end{tikzpicture}
  }}
\newcommand{\torusgraph}{
\begin{tikzpicture}[scale=1,baseline]
\coordinate (p1) at (0,0);
\coordinate (p2) at (0,1);
\coordinate (p3) at (1,0);
\coordinate (p4) at (1,1);
\draw [->-,>=latex,line width=0.02pt] (p2) -- (p1);
\draw [->-,>=latex,line width=0.02pt] (p4) -- (p3);
\draw [->-,>=latex,line width=0.02pt] (p3) -- (p1);
\draw [->-,>=latex,line width=0.02pt] (p4) -- (p2);
\draw [->-,>=latex,line width=0.02pt] (p4) -- (p1);
\node at (-0.15,-0.15) {1};
\node at (-0.15,1.15) {2};
\node at (1.15,-0.15) {3};
\node at (1.15,1.15) {4};
\node at (0.5,-0.15) {$g$};
\node at (0.5,1.15) {$g$};
\node at (-0.15,0.5) {$h$};
\node at (1.15,0.5) {$h$};
\end{tikzpicture}
}
\newcommand{\torusgraphST}[9]{
\begin{tikzpicture}[scale=1,baseline]
\coordinate (p1) at (0,0);
\coordinate (p2) at (0,1);
\coordinate (p3) at (1,0);
\coordinate (p4) at (1,1);
\draw [->-,>=latex,line width=0.02pt] (p2) -- (p1);
\draw [->-,>=latex,line width=0.02pt] (p4) -- (p3);
\draw [->-,>=latex,line width=0.02pt] (p3) -- (p1);
\draw [->-,>=latex,line width=0.02pt] (p4) -- (p2);

\ifnum #9=0 {
    \ifnum #1<#7
       \draw [->-,>=latex,line width=0.02pt] (p4) -- (p1);
    \else   \draw [-<-,>=latex,line width=0.02pt] (p4) -- (p1);
    \fi }
\else {
    \ifnum #3<#5
       \draw [->-,>=latex,line width=0.02pt] (p3) -- (p2);
    \else   \draw [-<-,>=latex,line width=0.02pt] (p3) -- (p2);
    \fi }
\fi
\node at (-0.15,-0.15) {$#1#2$};
\node at (-0.15,1.15) {$#3#4$};
\node at (1.15,-0.15) {$#5#6$};
\node at (1.15,1.15) {$#7#8$};
\end{tikzpicture}
}
\newcommand{\SLtransformationSone}{
\begin{tikzpicture}[scale=0.8,baseline]
\coordinate (p1) at (0,0);
\coordinate (p2) at (0,1);
\coordinate (p3) at (1,0);
\coordinate (p4) at (1,1);
\draw [->,>=latex,line width=0.02pt] (-1.5,0) -- (1.5,0);
\draw [->,>=latex,line width=0.02pt] (0,-0.5) -- (0,1.5);
\draw [->-,>=latex,line width=0.02pt] (p2) -- (p1);
\draw [->-,>=latex,line width=0.02pt] (p4) -- (p3);
\draw [->-,>=latex,line width=0.02pt] (p3) -- (p1);
\draw [->-,>=latex,line width=0.02pt] (p4) -- (p2);
\draw [->-,>=latex,line width=0.02pt] (p4) -- (p1);
\node at (-0.15,-0.15) {1};
\node at (-0.15,1.15) {2};
\node at (1.15,-0.15) {3};
\node at (1.15,1.15) {4};
\end{tikzpicture}
}
\newcommand{\SLtransformationStwo}{
\begin{tikzpicture}[scale=0.8,baseline]
\coordinate (p1) at (0,0);
\coordinate (p2) at (-1,0);
\coordinate (p3) at (0,1);
\coordinate (p4) at (-1,1);
\draw [->,>=latex,line width=0.02pt] (-1.5,0) -- (1.5,0);
\draw [->,>=latex,line width=0.02pt] (0,-0.5) -- (0,1.5);
\draw [->-,>=latex,line width=0.02pt] (p2) -- (p1);
\draw [->-,>=latex,line width=0.02pt] (p4) -- (p3);
\draw [->-,>=latex,line width=0.02pt] (p3) -- (p1);
\draw [->-,>=latex,line width=0.02pt] (p4) -- (p2);
\draw [->-,>=latex,line width=0.02pt] (p4) -- (p1);
\node at (0.15,-0.15) {1};
\node at (-1.15,-0.15) {2};
\node at (0.15,1.15) {3};
\node at (-1.15,1.15) {4};
\end{tikzpicture}
}
\newcommand{\SLtransformationTone}{
\begin{tikzpicture}[scale=0.8,baseline]
\coordinate (p1) at (0,0);
\coordinate (p2) at (0,1);
\coordinate (p3) at (1,0);
\coordinate (p4) at (1,1);
\draw [->,>=latex,line width=0.02pt] (-0.5,0) -- (2.5,0);
\draw [->,>=latex,line width=0.02pt] (0,-0.5) -- (0,1.5);
\draw [->-,>=latex,line width=0.02pt] (p2) -- (p1);
\draw [->-,>=latex,line width=0.02pt] (p4) -- (p3);
\draw [->-,>=latex,line width=0.02pt] (p3) -- (p1);
\draw [->-,>=latex,line width=0.02pt] (p4) -- (p2);
\draw [->-,>=latex,line width=0.02pt] (p4) -- (p1);
\node at (-0.15,-0.15) {1};
\node at (-0.15,1.15) {2};
\node at (1.15,-0.15) {3};
\node at (1.15,1.15) {4};
\end{tikzpicture}
}
\newcommand{\SLtransformationTtwo}{
\begin{tikzpicture}[scale=0.8,baseline]
\coordinate (p1) at (0,0);
\coordinate (p2) at (1,1);
\coordinate (p3) at (1,0);
\coordinate (p4) at (2,1);
\draw [->,>=latex,line width=0.02pt] (-0.5,0) -- (2.5,0);
\draw [->,>=latex,line width=0.02pt] (0,-0.5) -- (0,1.5);
\draw [->-,>=latex,line width=0.02pt] (p2) -- (p1);
\draw [->-,>=latex,line width=0.02pt] (p4) -- (p3);
\draw [->-,>=latex,line width=0.02pt] (p3) -- (p1);
\draw [->-,>=latex,line width=0.02pt] (p4) -- (p2);
\draw [->-,>=latex,line width=0.02pt] (p4) -- (p1);
\node at (-0.15,-0.15) {1};
\node at (0.8,1.15) {2};
\node at (1.15,-0.15) {3};
\node at (2.15,1.15) {4};
\end{tikzpicture}
}
\newcommand{\Ygraph}[4][1]{
\begin{tikzpicture}[scale=0.6,baseline]
  \draw [->,>=stealth',line width=0.01pt] (30:0.1) -- (0,0) ; 
    \draw (30:1) -- (0,0) ; 
  \draw [->,>=stealth',line width=0.01pt] (150:0.1) -- (0,0); 
    \draw (150:1) -- (0,0); 
  \node at(-0.5,0.01) {\scalebox{0.8}{$#2$}};
  \node at(0.5,0.01) {\scalebox{0.8}{$#4$}};
  \ifnum #1=1 {
   \draw [->,>=stealth',line width=0.01pt] (0,-1/20) -- (0,0); 
   \node at(-0.2,-0.5) {\scalebox{0.8}{$#3$}};
  }
  \else{
    \draw [<-,>=stealth',line width=0.01pt] (0,-1/2) -- (0,0); 
    \node at(-0.4,-0.5) {\scalebox{0.8}{$-#3$}};
    }
  \fi
    \draw (0,-1) -- (0,0); 
  \end{tikzpicture}
  }
\newcommand{\Hgraph}[3][1]{
  \begin{tikzpicture}[scale=0.6,baseline]
    \coordinate (c) at (0,0);
    \coordinate (l) at (-0.7, 0);
    \coordinate (r) at ($ (c) ! -1 ! (l) $);
    \coordinate (ul) at (-0.95,0.75);
    \coordinate (lr) at ($ (c) ! -1 ! (ul) $);
    \coordinate (ll) at (-0.95,-0.75);
    \coordinate (ur) at ($ (c) ! -1 ! (ll) $);
    \draw (l) -- (r) ; 
    \draw (ul) -- (l); 
    \draw (ll) -- (l); 
    \draw (lr) -- (r); 
    \draw (ur) -- (r); 
    \ifnum #1=1 {
    \node[right] at($ (ul) ! .3 ! (l) $) {\scalebox{0.7}{$#2_1$}};
    \node[right] at($ (ll) ! .2 ! (l) $) {\scalebox{0.7}{$#2_2$}};
    \node[left] at($ (lr) ! .2 ! (r) $) {\scalebox{0.7}{$#2_3$}};
    \node[left] at($ (ur) ! .3 ! (r) $) {\scalebox{0.7}{$#2_4$}};
    \node[below] at($ (c) ! .15 ! (0,0.5) $) {\scalebox{0.7}{$#3$}};}
    \fi
    \draw[<-,>=stealth', line width=0.01pt] (l) -- (r) ;
    \draw[->,>=stealth', line width=0.01pt] (ul) -- (l);
    \draw[->,>=stealth', line width=0.01pt] (ll) -- (l);
    \draw[->,>=stealth', line width=0.01pt] (lr) -- (r);
    \draw[->,>=stealth', line width=0.01pt] (ur) -- (r);
   \end{tikzpicture}
  }
 \newcommand{\Xgraph}[3][1]{
  \begin{tikzpicture}[scale=0.6,baseline]
    \coordinate (c) at (0,0);
    \coordinate (u) at (0, 0.6);
    \coordinate (d) at ($ (c) ! -1 ! (u) $);
    \coordinate (ul) at (-0.85,0.85);
    \coordinate (lr) at ($ (c) ! -1 ! (ul) $);
    \coordinate (ll) at (-0.85,-0.85);
    \coordinate (ur) at ($ (c) ! -1 ! (ll) $);
    \draw (u) -- (d) ; 
    \draw (ul) -- (u); 
    \draw (ll) -- (d); 
    \draw (lr) -- (d); 
    \draw (ur) -- (u); 
    \ifnum #1=1 {
    \node[below] at($ (ul) ! .3 ! (u) $) {\scalebox{0.7}{$#2_1$}};
    \node[above] at($ (ll) ! .2 ! (d) $) {\scalebox{0.7}{$#2_2$}};
    \node[above] at($ (lr) ! .2 ! (d) $) {\scalebox{0.7}{$#2_3$}};
    \node[below] at($ (ur) ! .2 ! (u) $) {\scalebox{0.7}{$#2_4$}};
    \node[right] at($ (-0.5,0) ! 0.8 ! (c) $) {\scalebox{0.7}{$#3$}}; }
    \fi
    \draw[<-,>=stealth', line width=0.01pt] (u) -- (d) ;
    \draw[->,>=stealth', line width=0.01pt] (ul) -- (u);
    \draw[->,>=stealth', line width=0.01pt] (ll) -- (d);
    \draw[->,>=stealth', line width=0.01pt] (lr) -- (d);
    \draw[->,>=stealth', line width=0.01pt] (ur) -- (u);
   \end{tikzpicture}
  }
\newcommand{\Psix}[3][1]{
\begin{tikzpicture}[scale=0.8]
\node[name=s, regular polygon, regular polygon sides=6, minimum size=1cm, outer sep=0pt ,draw] at (0,0) {}; 
%
\foreach \anchor/\x/\y /\xx/\yy /\b in
{corner 1/0.17/0.17*1.732/-0.11/0.18/1, corner 2/-0.17/0.17*1.732/0.07/0.18/2, corner 3/-0.34/0/-0.15/-0.18/3, corner 4/-0.17/-0.17*1.732/-0.22/-0.05/4, corner 5/0.17/-0.17*1.732/0.2/-0.05/5, corner 6/0.34/0/0.15/-0.18/6}
{
 \draw[shift=(s.\anchor)] (0,0) -- (\x,\y) node at(\xx,\yy) {$#2_{\text{\scalebox{0.7}{$\b$}}}$};
 \ifnum #1=1
 \draw[shift=(s.\anchor),<-,>=stealth', line width=0.01pt] (s.\anchor) -- (\x,\y);
 \fi
 }
%
\foreach \anchor/\xx/\yy /\a in
{side 1/0/-0.18/1, side 2/-0.18/0.05/2, side 3/0.15/0.05/3, side 4/0/-0.18/4, side 5/-0.18/0.05/5, side 6/0.15/0.05/6}
 \draw[shift=(s.\anchor)]  node at(\xx,\yy) {$#3_{\text{\scalebox{0.7}{$\a$}}}$};
\ifnum #1=1{
  \foreach \anchorr/\anchorf in
   {corner 1/corner 2, corner 2/corner 3, corner 3/corner 4, corner 4/corner 5, corner 5/corner 6, corner 6/corner 1}
   \draw[shift=(s.\anchorr), ->, >=stealth', line width=0.01pt]  (s.\anchorr) -- (s.\anchorf);}
 \else {
  \foreach \anchorb/\anchorw in
   {corner 1/corner 2, corner 3/corner 4, corner 5/corner 6} {
   \node[fill=black, circle, minimum size=2.5, inner sep=0, outer sep=0, draw] at(s.\anchorb) {};
   \node[fill=white, circle, minimum size=2.5, inner sep=0, outer sep=0, draw] at(s.\anchorw) {};}
}
\fi
\end{tikzpicture}
}
\begin{document}

\title{Twisted Quantum Double Model of Topological Phases in Two Dimensions}

\author{Yuting Hu}
\email{yuting@physics.utah.edu}
\affiliation{Department of Physics and Astronomy,
University of Utah, Salt Lake City, UT 84112, USA}

\author{Yidun Wan}
\email{ywan@meso.t.u-tokyo.ac.jp}
\affiliation{Department of Applied Physics, Graduate School of Engineering, University of Tokyo,\\
Hongo 7-3-1, Bunkyo-ku, Tokyo 113-8656, Japan}
\author{Yong-Shi Wu}
\email{wu@physics.utah.edu} \affiliation{Key State Laboratory of 
Surface Physics, Department of Physics and \\
Center for Field Theory and Particle Physics, Fudan University,
Shanghai 200433, China} \affiliation{Department of Physics and
Astronomy, University of Utah, Salt Lake City, UT 84112, USA}

\date{\today}

\begin{abstract}
We propose a new discrete model---the twisted quantum double model---of 2D topological phases based on a finite group $G$ and a $3$--cocycle $\alpha$ over $G$. The detailed properties of the ground states are studied, and we find that the ground--state subspace can be characterized in terms of the twisted quantum double $D^{\alpha}(G)$ of $G$. When $\alpha$ is the trivial $3$--cocycle, the model becomes Kitaev's quantum double model based on the finite group $G$, in which the elementary excitations are known to be classified by the quantum double $D(G)$ of $G$. Our model can be viewed as a Hamiltonian extension of the Dijkgraaf--Witten topological gauge theories to the discrete graph case with gauge group being a finite group. We also demonstrate a duality between a large class of Levin-Wen string-net models and certain twisted quantum double models, by mapping the string--net $6j$ symbols to the corresponding $3$--cocycles.
The paper is presented in a way such that it is accessible to a wide range of physicists.\end{abstract}
\pacs{11.15.-q, 71.10.-w, 05.30.Pr, 71.10.Hf, 02.10.Kn, 02.20.Uw}
\maketitle

\section{Introduction}\label{sec:intro}
The study of possible phases of matter has gone beyond Landau's paradigm of symmetry breaking for decades, which leads to the discovery of topological phases of matter. Among all possible topological phases, there are a class of them that are believed to bear intrinsic \emph{topological order}\cite{Wen1989}, in which they display features such as robust ground state degeneracies, Abelian or non--Abelian braiding
(anyonic) statistics of quasi--particle excitations, and in many cases protected edge excitations. The classic examples of these phases include the (fractional) quantum Hall states, $\Z_2$ spin liquids, chiral spin liquids, and $p+\ii p$ superconductors\cite{Tsui1982,Laughlin1983,Read1991,Wen1991,Moessner2001,Kalmeyer1987,Wen1989a,Read2000,Gurarie2007}. The physical characteristics of topological phases urge the search of the mathematical structures that classify the topological phases. It is then natural to resort to certain theoretical models that can yield various topological phases.

There is a very general framework---the string--net models\cite{Levin2004}, also known as the Levin--Wen models---supplying exactly soluble models that incorporates a large class of intrinsically topological phases, notably those preserving time--reversal symmetry. Although it is believed that tensor category theory is the mathematical framework that underlies these models, a general classification of these models---in particular of the topological phases they describe---is yet to be found.

The intrinsically, topologically ordered systems are roughly speaking those gapped quantum phases of matter that involve long range entanglement (LRE). In contrast, there are gapped quantum phases of matter that involve short range entanglement (SRE), which, when symmetry is unbroken, give rise to nontrivial phases, called symmetry--protected topological (SPT) phases\cite{Gu2009,Pollmann2012}, such as the Haldane phase on one--dimensional spin chain\cite{Haldane1983} and topological insulators\cite{Kane2005,Kane2005a,Bernevig2006,Moore2007,Fu2007,Qi2008}. Characteristic properties of these phases are usually non--degenerate ground states and, if the system has a boundary, nontrivial edge excitations.

Very recently, however, it is discovered that a specific SPT phase, namely an Ising spin model with a gauged  $\Z_2$ symmetry, admits a dual LRE phase described by a string net model \cite{Gu2012}. This remarkable duality is then conjectured\cite{Gu2012} to exist between a general SPT phase with discrete, gauged symmetry $G$ and a string net model with fusion rules also given by the product rule of $G$. Soon after, this conjecture is confirmed in Ref\cite{Hung2012}, which henceforth implies that the classification of a large class of SPT phases provided by group cohomology in 2+1 dimensions via $H^3(G,U(1))$ described in Ref\cite{Chen2011e} (We remark that Ref\cite{Lu2012} offers a field theoretic approach that obtain the same classification.) indirectly provide classifications of the corresponding string net models.

This classification of string--net models seems feasible, as the building blocks of these models, namely the $6j$--symbols may fall into equivalence classes that are related to the $3$--cocycles in the cohomology group $H^3(G,U(1))$ of the symmetry group $G$ of the model\cite{Hung2012}. Nevertheless, in the string--net models that have been studied so far, the $6j$ symbols are assumed to respect the full tetrahedral symmetry, which may be too restrictive for a description of  topological phases. Namely, as pointed out in Ref\cite{Hung2012}, the topological phases described by the Levin--Wen model with tetrahedral symmetry may not account for all topological phases classified by $H^3(G,U(1))$.

This has motivated us to propose a new class of discrete models for 2D topological phases, called the twisted quantum double model for reasons to be clear later, whose construction involves a 3-cocycle, an element in the group cohomology group $H^3(G,U(1))$. More precisely we consider a model on a planar graph of triangles, each edge of which is graced with a group element of a finite group $G$. The Hamiltonian of the model has matrix elements constructed by a $3$--cocycle $\alpha$ belonging to the cohomology group $H^3(G,U(1))$ of $G$. We require that $\alpha$ satisfies only the $3$--cocycle condition $\delta\alpha=1$ where $\delta$ is the coboundary operator, which under the circumstance of this paper is actually the pentagon identity in disguise. Owing to the absence of extra conditions put in by hand on $\alpha$, all solutions to the $3$--cocycle condition but one---namely the trivial $3$--cocycle---do not respect the tetrahedral symmetry. In other words, any element of $H^3(G,U(1))$ defines an instance of our model.

We study our model in detail by placing it on a torus. In terms of $3$--cocycles we construct, for the ground states of our model, explicitly three topological observables, namely, the ground state degeneracy (GSD), the $\str$ and $\ttr$ operators that are a representation of the generators of the modular group $SL(2,\Z)$. This construction is a new result of ours, purely based on our model and in terms of the $3$-cocycles of $G$, without using group representation theory. These topological observables on the ground states lead to a set of topological numbers, respectively formed by the GSD, elements of $\str$-matrix, and topological spin. We show that these topological numbers depend on the cohomology classes $[\alpha]\in H^3(G,U(1))$. Moreover, equivalent $3$--cocycles define equivalent twisted quantum double models, in the sense that their Hamiltonians can be continuously deformed into each other. We present a few characteristic properties of the topological numbers, which may help to resolve this open question in future work. On top of these abstract constructions, we work out a few concrete examples for certain finite groups, Abelian and non--Abelian.

We also discourse on how our model relates to topological field theories and models  of topological phases. It turns out that our model is a reasonable Hamiltonian extension of the Dijkgraaf--Witten theory\cite{Dijkgraaf1989,Dijkgraaf1990,Propitius1995} of topological Chern--Simons gauge theory in three dimensions, as we can identify the ground states of our model defined by an $[\alpha]\in H^3(G,U(1))$ on the boundary of a three-manifold with the gauge--invariant boundary states of the Dijkgraaf--Witten theory defined by the same $[\alpha]$ in the bulk, which then equates the GSD of our model with the partition function of the corresponding Dijkgraaf--Witten theory. Since three-dimensional topological Chern-Simons theory corresponds to two-dimensional rational conformal field theory (RCFT)\cite{Dijkgraaf1990}, a connection between our model and RCFT is thus established. In particular, the GSD of our model with group $G$ agrees with the number of primary fields in the RCFT that an orbifold by the symmetry group $G$ of a holomorphic CFT.

We demonstrate that our twisted quantum double model reduces precisely to Kitaev's quantum double model in the special case where the defining $3$-cocycle is trivial. The nontrivial $3$-cocycles in our model may twist the usual group algebra $\mathbb{C}[G]$ into a twisted group algebra, which mainly motivates the name of our model.

As our model is motivated by the Levin--Wen model, we demonstrate a duality between a large class of Levin--Wen string--net models and certain twisted quantum double models, by mapping the string--net $6j$ symbols to the corresponding $3$--cocycles.

We would like to insert as an aside here that we minimized the complexity of the mathematics in this paper without sacrificing the preciseness and comprehensibility of our presentation. For instance, although group cohomology is a key word of this paper, but we assume zero prior knowledge of it because we define and present the $n$--cocycles as merely $U(1)$ functions that satisfy an algebraic condition. As such, we believe the paper is accessible to a wide range of physicists and mathematicians.

Our paper is organized as follows. In Section \ref{sec:model} we construct our new model of topological phases. Section \ref{sec:topoOb} is devoted to the general setting for the topological observables. In Section \ref{sec:GSD} we compute the ground state degeneracy (GSD) on a torus and study the corresponding topological degrees of freedom. Section \ref{sec:fractionaltopologicalnumbers} furnishes the construction of two more topological observables that give rise to fractional  topological numbers. We present a classification of the topological numbers in our model in In Section \ref{sec:classification}. Section \ref{sec:examples} offers concrete examples of our model for a number of finite groups. The next three Sections \ref{sec:Kitaev}, \ref{sec:relation2DW}, and \ref{sec:LW} relate our model respectively to Kitaev's quantum double model, Dijkgraaf--Witten topological gauge theory, and Levin--Wen string--net model. The final section (Section \ref{sec:disc}) concludes with remarks and outlook.
Appendix \ref{app:HnGU1} introduces very briefly the group cohomology of finite groups, while the other appendices collect proofs of various statements in the paper.

\section{The model}\label{sec:model}
In this section, we shall construct our model in $(2+1)$--dimension, as an exactly--soluble Hamiltonian on the Hilbert space spanned by planar graphs consisting of triangles whose edges are graced with group elements in certain finite group.
\subsection{Basic Ingredients}\label{subsec:basic}

The model is defined on a two--dimensional graph $\Gamma$ consisting of triangles only (Fig. \ref{fig:GraphConfiguration}). Such a  graph does not have any open edge and may be thought as a simplicial triangulation of certain two-dimensional Riemannian surface, e.g., a sphere; however, in this model, we shall take the graph as abstract without referring to its topological background except when we compare the model with other models, such as Dijkgraaf--Witten discrete topological gauge theories. Note that Fig. \ref{fig:GraphConfiguration} is a crop of one such graph, so the open edges in the figure are not really open. We enumerate the vertices of $\Gamma$ by any ordered set of labels. The enumerations of the vertices we choose is irrelevant as long as their relative order remains consistent during the calculation.
\begin{figure}[!ht]\centering
  \includegraphics[scale=0.6]{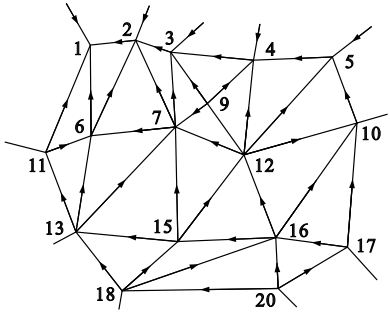}
  \caption{A portion of a graph that represent the basis vectors in the Hilbert space. Each edge carries an arrow and is assigned a group element denoted by $[ab]$ with $a<b$.}
  \label{fig:GraphConfiguration}
\end{figure}

The model is characterized by a triple $(H,G,\alpha)$, which can be denoted by $H_{G,\alpha}$ for short. The first in the triple is the Hamiltonian $H$. The second ingredient $G$  is  a finite group. Each edge of $\Gamma$ is graced with a group element of $G$. The Hilbert space is spanned by the configurations of group elements
on the edges of $\Gamma$. Each edge (see Fig. \ref{fig:GraphConfiguration}) carries an arrow that goes from the vertex with a larger label to the one with a smaller label.
To each edge $e$ of the graph $\Gamma$, we assign
a group element $g_e \in G$, and all possible assignments form
the basis vectors of the Hilbert space.
\begin{equation}
  \label{IndexByLink}
  \left\{g_1,g_2,...,g_E\right\}
\end{equation}
where $E$ is the total number
of edges in $\Gamma$.

It convenient to denote both an edge and the group element on the edge by simply $[ab]$ with $a<b$ the two boundary vertices of the edge. It is understood that $[ba]=[ab]^{-1}$. The inner product of the Hilbert space is the obvious one:
\begin{align}
  \label{eq:innerProduct}
  \Blangle
  \oneTriangle[1]{a'}{b'}{c'}
  \Bvert
  \oneTriangle[1]{a}{b}{c}
  \Brangle
  =& \delta_{[ab][a'b']}\delta_{[bc][b'c']}\delta_{[ac][a'c']}\nonumber\\
  &\dots\; ,
\end{align}
where only one triangle in $\Gamma$ is drawn, and the \textquotedblleft$\dots$" omits the $\delta$--functions on all other triangles that are not shown. Note that three group elements on the three sides of a triangle, e.g., the $[ab],[bc]$ and $[ac]$ on the RHS of Eq. (\ref{eq:innerProduct}), are independent of each other in general, i.e., $[ab]\cdot[bc]\neq[ac]$.
From now on, we shall neglect the group elements on the edges but keep only the vertex labels when we draw a basis vector.
\begin{figure}[h!]
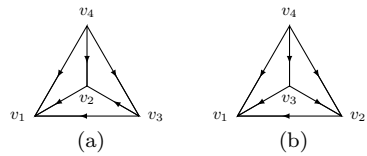

\centering
\subfigure[]{\threeTriangles{v_1}{v_3}{v_2}{v_4}{3} \label{fig:3cocycleA}}
\subfigure[]{\threeTriangles{v_1}{v_2}{v_3}{v_4}{2} \label{fig:3cocycleB}}
\caption{(a) The defining graph of the $3$--cocycle $\alpha([v_1v_2], [v_2v_3],[v_3v_4])$. (b) For $\alpha([v_1v_2], [v_2v_3],[v_3v_4])^{-1}$.}
\label{fig:3cocycle}
\end{figure}

The third element is a \textit{normalized $3$--cocycle} $\alpha\in H^3(G,U(1))$, i.e., a function
$\alpha:G^3\rightarrow U(1)$ that satisfies the \textit{3-cocycle
condition}
\begin{align}
  \label{3CocycleCondition}
  &\alpha(g_1,g_2,g_3)\alpha(g_0\cdot g_1,g_2,g_3)^{-1}\times\\
  &\quad\alpha(g_0,g_1\cdot g_2,g_3)\alpha(g_0,g_1,g_2\cdot g_3)^{-1} \alpha(g_0,g_1,g_2)=1\nonumber
\end{align}
for all $g_i \in G$, and satisfies the \textit{normalization condition}
\begin{align}
  \label{NormalizationCondition}
  \alpha(1,g,h)=\alpha(g,1,h)=\alpha(g,h,1)=1,
\end{align}
whenever $g,h\in G$ are arbitrary. A basic and brief introduction to cohomology groups $H^n(G,U(1))$ of finite groups is found in Appendix \ref{app:HnGU1}. We emphasize that this normalization condition is not an \textit{ad hoc} condition we imposed as an extra on the $3$--cocycles; rather, it is a natural condition that any group $3$--cocycle can satisfy for the following reason. A $3$--cocycle $\alpha$ is in fact an equivalence class of the $3$--cocycles that can be scaled into each other by merely a $3$--coboundary $\delta\beta$, where $\beta$ is a $2$--cochain. It can be shown that for any equivalence class of $3$--cocycles, there always exists a representative that meets the normalization condition in Eq. (\ref{NormalizationCondition}), which is in turn justified.

Note that every group has a trivial 3-cocycle $\alpha_0\equiv 1$ on the entire $G$. One can define a $3$--cocycle on any subgraph composed of three triangles, which share a vertex and any two of which share an edge. Consider Fig. \ref{fig:3cocycleA} as an example: The four vertices are in the order $v_1<v_2<v_3<v_4$; we define the $3$--cocycle for this subgraph by taking its three variables from left to right to be the three group elements, $[v_1v_2]$, $[v_2v_3]$ and $[v_3v_4]$, which are along the path from the least vertex $v_1$ to the greatest vertex $v_4$ passing $v_2$ and $v_3$ in order; hence, the $3$--cocycle reads $\alpha([v_1v_2],[v_2v_3],[v_3v_4])$.
If one lifts the vertex $v_2$ in Fig. \ref{fig:3cocycleA} above the paper plane, the three triangles turns out to be on the surface of a tetrahedron. In this sense, one can think of the $3$--cocycle as associated with a tetrahedron as well, which is useful when the graph is really interpreted as the triangulation of a Riemannian surface.

On the other hand, if one switches the vertices $v_2$ and $v_3$ in Fig. \ref{fig:3cocycleA}, one obtains Fig. \ref{fig:3cocycleB}, which defines the inverse $3$--cocycle $\alpha([v_1v_2],[v_2v_3],[v_3v_4])^{-1}$. Whether a graph defines a $3$--cocycle $\alpha$ or the inverse $\alpha^{-1}$ depends on the orientation of the four vertices in the graph by the following rule. One first reads off a list of the three vertices counter--clockwise from any of the three triangles of the defining graph of the $3$--cocycle, e.g., $(v_2,v_3,v_4)$ from Fig. \ref{fig:3cocycleA} and $(v_3,v_2,v_4)$ from Fig. \ref{fig:3cocycleB}. One then append the remaining vertex to the beginning of the list,  e.g., $(v_1,v_2,v_3,v_4)$ from Fig. \ref{fig:3cocycleA} and $(v_1,v_3,v_2,v_4)$ from Fig. \ref{fig:3cocycleB}. If the list can be turned into ascending order by even permutations, such as $(v_1,v_2,v_3,v_4)$ from Fig. \ref{fig:3cocycleA}, one has an $\alpha$ but an $\alpha^{-1}$ otherwise, as by $(v_1,v_3,v_2,v_4)$ from Fig. \ref{fig:3cocycleB}.

We would like to warn the reader of some abuse of language in the rest of the paper. For example, when we say ``a $3$--cocycle", we may refer to a class $[\alpha]$, a representative $\alpha$, or the evaluation of $\alpha$ on a tetrahedron. For another example, although there is abstractly only one $3$--cocycle condition as in Eq. (\ref{3CocycleCondition}), we may sometimes mean $3$--cocycle conditions by the evaluation of the condition on different tetrahedra. But all such and such should not cause any confusion contextually.
\subsection{The Hamiltonian}\label{subsec:Hamiltonian}
The $3$--cocycles will appear in the matrix elements of the model's Hamilton defined as follows.
\be\label{eq:Hamiltonian}
  H=-\sum_v A_v-\sum_f B_f,
\ee
where $B_f$ is the face operator defined at each triangular face $f$, and $A_v$ is the vertex operator defined on each vertex $v$. As we shall see later, this Hamiltonian is formally the same as and generalizes that of the Kitaev model\cite{Kitaev2003a,Kitaev2006}, where an operators $A_v$ behaves as a gauge transformation on the group elements respectively on the edges meeting at $v$, and a $B_f$ detects whether the flux through face $f$ is zero. This kind of Hamiltonians generically feature ground states that are gauge invariant and bear zero flux everywhere. We now elaborate more on these operators.

The action of $B_f$ on a basis vector is
\begin{align}
  \label{eq:actionOfBf}
  B_f\BLvert \oneTriangle{v_1}{v_2}{v_3} \Brangle
  =\delta_{[v_1v_2]\cdot[v_2v_3]\cdot[v_3v_1]}
  \BLvert \oneTriangle{v_1}{v_2}{v_3}\Brangle .
\end{align}
The discrete
delta function $\delta_{[v_1v_2]\cdot[v_2v_3]\cdot[v_3v_1]}$ is unity if ${[v_1v_2]\cdot[v_2v_3]\cdot[v_3v_1]=1 }$, where $1$ is the identity element in $G$, and 0 otherwise. Note again that here, the ordering of $v_1,v_2$, and $v_3$ does not matter because of the identities
$\delta_{[v_1v_2]\cdot[v_2v_3]\cdot[v_3v_1]}
=\delta_{[v_3v_1]\cdot[v_1v_2]\cdot[v_2v_3]}$ and
$\delta_{[v_1v_2]\cdot[v_2v_3]\cdot[v_3v_1]}
=\delta_{\{[v_1v_2]\cdot[v_2v_3]\cdot[v_3v_1]\}^{-1}}
=\delta_{[v_3v_1]^{-1}\cdot[v_2v_3]^{-1}\cdot[v_1v_2]^{-1}}
=\delta_{[v_1v_3]\cdot[v_3v_2]\cdot[v_2v_1]}$.
In other words,
in any state on which $B_f=1$ on a triangular face $f$, the three group degrees of freedom
around $v$ is related by a \textit{chain
rule}:
\begin{equation}
\label{ChainRule}
[v_1v_3]=[v_1v_2]\cdot[v_2v_3]
\end{equation}
for any enumeration
$v_1,v_2,v_3$ of the three vertices of the face $f$.

The operator $A_v$ is a summation
\begin{equation}
  \label{eq:Av}
  A_v=\frac{1}{|G|}\sum_{[vv']=g\in G}A_v^g,
\end{equation}
which deserves explanation. The value $|G|$ is the order of the group $G$. The operator $A_v^g$ acts on
a vertex $v$ with a group element $g\in G$ by replacing $v$ by a new enumeration $v'$ that is less than $v$ but greater than all the enumerations that are less than $v$ in the original set of enumerations before the action of the operator, such that $[v'v]=g$. $A_v^g$ does not affect any vertex other than $v$ but introduces a $U(1)$ phase, composed of $3$--cocycles determined by $v'$ and all the vertices adjacent to $v$ before the action, to the resulted state. In a dynamical language, $v'$ is understood as on the next \textquotedblleft time" slice, and there is an edge $[v'v]\in G$ in the $(2+1)$ dimensional \textquotedblleft spacetime" picture. Consider a trivalent vertex as an example (see Eq. (\ref{eq:Avg})). Without loss of generality, we assume that the enumerations of the four vertices are in the order $v_1<v_2<v_3<v_4$. The basis vector on the LHS of \eqref{eq:Avg} is specified by six group elements,
$[v_1v_3]$, $[v_2v_3]$, $[v_3v_4]$, $[v_1v_4]$, $[v_2v_1]$, and $[v_2v_4]$.
The action of $A_{v_3}^g$ on this state reads
\begin{align}
  \label{eq:Avg}
  &A_{v_3}^g\BLvert \threeTriangles{v_1}{v_2}{v_3}{v_4}{2} \Brangle
  \nonumber\\
  =&\delta_{[v'_3v_3],g}
  \alpha\left([v_1v_2],[v_2v'_3],[v'_3v_3]\right)
  \alpha\left([v_2v'_3],[v'_3v_3],[v_3v_4]\right)
  \nonumber\\
  &\times
  \alpha\left([v_1v'_3],[v'_3v_3],[v_3v_4]\right)^{-1}
  \BLvert \threeTriangles{v_1}{v_2}{v'_3}{v_4}{2} \Brangle,
\end{align}
where on the RHS, the new enumerations are in the order $v_1<v_2<v'_3< v_3<v_4$, and
the following \textit{chain rule} of group elements on the edges holds.
\be
\begin{aligned}
\label{eq:ChainRuleInBph}
&[v_1{v'_3}]=[v_1v_3]\cdot[v_3{v'_3}],\\
&[v_2{v'_3}]=[v_2v_3]\cdot[v_3{v'_3}],\\
&[{v'_3}v_4]=[{v'_3}v_3]\cdot[v_3v_4].
\end{aligned}
\ee

The phase factor consisting of three $3$--cocycles on the RHS of Eq. (\ref{eq:Avg}) encodes the non--vanishing matrix elements of $B^{v'_3}_{v_3}$, namely
\be\label{eq:AvMatrix}
\begin{aligned}
&\left(A_{v_3}^g\right)^{[v_1v_3][v_2v_3][v_3v_4]}_{[v_1v'_3] [v_2v'_3][v'_3v_4]}([v_1v_2],[v_2v_3],[v_1v_3])\\
=&  \alpha\left([v_1v_2],[v_2v'_3],[v'_3v_3]\right)
  \alpha\left([v_2v'_3],[v'_3v_3],[v_3v_4]\right)\\
  &\times
  \alpha\left([v_1v'_3],[v'_3v_3],[v_3v_4]\right)^{-1}.
\end{aligned}
\ee
For each vertex on the LHS of Eq. (\ref{eq:Avg}), we group
its three neighboring enumerations together with
the new enumeration $v'_3$ in the ascending order.
Hence, we have $(v_1,v_2,v'_3,v_3)$ for the lower vertex,
$(v_1,v'_3,v_3,v_4)$ for the upper left vertex, and
$(v_2,v'_3,v_3,v_4)$ for the upper right one, and then assign three 3-cocycles respectively to the three vertices:
  $\alpha\left([v_1v_2],[v_2v'_3],[v'_3v_3]\right)$,
  $\alpha\left([v_2v'_3],[v'_3v_3],[v_3v_4]\right)$,
  and
  $\alpha\left([v_1v'_3],[v'_3 v_3],[v_3v_4]\right)^{-1}$.
The chirality of a $3$--cocyle, or in other words, whether a vertex contributes a $3$--cocycle
$\alpha$ or the inverse $\alpha^{-1}$, is based on the following criteria. We write down a triple for the three neighboring enumerations around each vertex in the counterclockwise
direction and append $v'_3$ to the front, namely,
$(v'_3,v_1,v_2,v_3)$ for the lower vertex,
$(v'_3,v_1,v_3,v_4)$ for the upper left one,
and $(v'_3,v_2,v_4,v_3)$ for the upper right one.
If it takes (odd) even number of steps to permute a list to the ascending order,
the vertex contributes (the inverse of) the corresponding 3-cocycle
in the action.

The matrix elements in Eq. (\ref{eq:AvMatrix}) can be better motivated and understood in the following way. One may think that the graph evolves in "time" under the driven of the Hamiltonian. Focusing on the vertex operator only and considering the $A_{v_3}^g$ in Eq. (\ref{eq:Avg}), the action of the operator creates a new "time" slice by replacing the original vertex $v_3$ by $v'_3$ and connects the two vertices in the "time" direction. This scenario is shown in Fig. \ref{fig:BvTrivalent}, which is made three--dimensional ($2+1$) to illustrate the "spacetime" picture and the relation between our model and Dijkgraaf--Witten discrete topological gauge theory to be addressed in Section \ref{sec:relation2DW}.
\begin{figure}[h!]
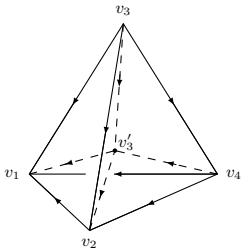

\centering
\BvTri{v_1}{v_2}{v_3}{v_4}{3}
\caption{The topology of the action of $A_{v_3}^g$.}
\label{fig:BvTrivalent}
\end{figure}

As in Fig. \ref{fig:BvTrivalent}, we can view the original three triangles on the LHS of Eq. (\ref{eq:Avg}) as a tetrahedron $v_1v_2v_3v_4$  and the three new triangles as another tetrahedron $v_1v_2v'_3v_4$, of which the vertex $v'_3$ lies inside $v_1v_2v_3v_4$ because of the ordering $v'_3<v_3$. Since $v'_3$ and $v_3$ are connected, there are three more tetrahedra generated effectively by the action of the vertex operator, namely $v_1v_2v'_3v_3$, $v_2v'_3v_3v_4$, and $v_1v'_3v_3v_4$. It looks like that the original tetrahedron is split into four tetrahedra. This splitting of tetrahedron implies the three chain rules in Eq. (\ref{eq:ChainRuleInBph}), which then enables us to endow the three tetrahedra $v_1v_2v'_3v_3$, $v_2v'_3v_3v_4$, and $v_1v'_3v_3v_4$ respectively with  the three $3$--cocycles $\alpha\left([v_1v_2], [v_2v'_3],[v'_3v_3]\right)$, $\alpha\left([v_2v'_3],[v'_3v_3], [v_3v_4]\right)$, and $\alpha\left([v_1v'_3],[v'_3 v_3],[v_3v_4] \right)^{-1}$, following the rule shown in Fig. \ref{fig:3cocycle}.

The operator $A^{g}_{v_3}$ in Eq. (\ref{eq:Avg}) is just an identity operator if $[v'_3v_3]=1$, i.e., the identity in $G$. In fact, according to Eq. (\ref{eq:Avg}), we have the following matrix element
\be\label{eq:Bph=1}
\begin{aligned}
&\alpha\left([v_1v_2],[v_2v'_3],1 \right)\alpha\left([v_2v'_3],1,[{v_3}{v_4}]\right)\\
&\times\alpha\left([v_1v'_3],1,[{v_3}{v_4}]\right)^{-1},
\end{aligned}
\ee
which is unity, by the \textit{normalization condition} \eqref{NormalizationCondition}.

The vertex operator in Eq. \eqref{eq:Avg} can naturally
extends its definition from a trivalent vertex to a vertex of any valence higher than three. The number of $3$--cocyles in the phase factor brought by the action of $A_v^g$ on a vertex is equal to the valence of the vertex.
The chirality of each $3$--cocycle in the phase factor follows the criteria described in the previous paragraph.
It is clear that $A^{g=1}_v\equiv \mathbb{I}$ by the discussion above.

It can be shown that all $B_f$ and $A_v$ are projection operators and
commute with each other (see Appendix A).
As a result, the ground states and all
elementary excitations are
thus simultaneous eigenvectors of all these local operators. Moreover,
the elementary excitations are identified as local quasi--particles
that are classified
by the the representations of the local operators.

We shall call our model \textbf{twisted quantum double model} for reasons to be explained in Section \ref{subsec:topoDof}.
\subsection{Equivalent Models}\label{subsec:equivModel}
Now that a $3$--cocycle defines a twisted quantum double model,
one may wonder since since a $3$--cocycle represents a whole equivalence class, whether two equivalent $3$--cocycles, i.e., two representatives of the same equivalent class, define the same model. Let us consider two Hamiltonians $H_{G,\alpha}$ and $H_{G,\alpha'}$, respectively defined by two equivalent $3$--cocycle $\alpha$ and $\alpha'$ that are related by the $3$--coboundary $\delta\beta$ of a normalized $2$--cochain $\beta:G^2\mapsto U(1)$ that satisfy $\beta(x,e)=1=\beta(e,x)$ for all $x \in G$,
\be\label{eq:equiv3cocycle}
\begin{aligned}
\alpha'(g_0,g_1,g_2) &=\delta\beta(g_0,g_1,g_2)\alpha(g_0,g_1,g_2)\\
&=\frac{\beta(g_1,g_2)\beta(g_0,g_1g_2)}{\beta(g_0g_1,g_2)\beta(g_0,g_1)} \alpha(g_0,g_1,g_2),
\end{aligned}
\ee
where $g_i\in G$, and $\delta$ is the coboundary operator. As each $3$--cocycle is defined on three triangles (or equally a tetrahedron) such as in Fig \ref{fig:3cocycle}, each $2$--cochain $\beta$ can be thought as defined on a triangle. Hence, Eq. (\ref{eq:equiv3cocycle}) can be viewed as a local "gauge" transformation on $\alpha$.

We now check the relation between $H_{G,\alpha'}$ and $H_{G,\alpha}$. It suffices to check only the vertex operators $A_v^g(\alpha')$ and $A_v^g(\alpha)$ because the face operators $B_f$ have merely $\delta$--functions as its matrix elements and are thus inert under the transformation in Eq. (\ref{eq:equiv3cocycle}). Without loss of generality, we consider again the vertex operator on a trivalent vertex, as that in Eq. (\ref{eq:Avg}).
By Eq. (\ref{eq:equiv3cocycle}), We immediately obtain the following.
\begin{align}
  \label{eq:AvEquiv3Cocycle}
  &A_3^g(\alpha')\BLvert \threeTriangles{1}{2}{3}{4}{2} \Brangle
  \nonumber\\
  =&\tfrac{\alpha'\left([12],[23'],[3'3]\right)
  \alpha'\left([23'],[3'3],[34]\right)}{\alpha'\left([13'],[3'3],[34]\right)}   \BLvert \threeTriangles{1}{2}{3'}{4}{2} \Brangle\nonumber\\
  =&\tfrac{\beta([12],[23])\beta([13],[34])}{\beta([23],[34])}\times \tfrac{\alpha\left([12],[23'],[3'3]\right)
  \alpha\left([23'],[3'3],[34]\right)}{\alpha\left([13'],[3'3],[34]\right)}
  \nonumber\\
  &\times
  \tfrac{\beta([23'],[3'4])}{\beta([12],[23'])\beta([13'],[3'4])}
  \BLvert \threeTriangles{1}{2}{3'}{4}{2} \Brangle,
\end{align}
where the $\delta$--function $\delta_{[3'3],g}$ is omitted for simplicity.
The second term consisting of three $\alpha$'s is precisely the matrix element of $A^g_3(\alpha)$. If we move the first fraction of $\beta$ in the second equality of the above equation to the LHS, we readily see that the action of $A^g_3(\alpha')$ on the rescaled state
\[
  \frac{\beta([23],[34])}{\beta([12],[23])\beta([13],[34])} \BLvert \threeTriangles{1}{2}{3}{4}{2} \Brangle
\]
matches perfectly the action of $A^g_3(\alpha)$ on the original state. The above rescaling is clearly a local $U(1)$ phase, which can be boiled down to the following local $U(1)$ transformation on the basis states of triangles:
\be\label{eq:basisTransformEquiv3Cocycle}
\BLvert\oneTriangle{a}{b}{c}\Brangle\mapsto\beta([ab],[bc])^{\varepsilon(a,b,c)} \BLvert \oneTriangle{a}{b}{c}\Brangle,
\ee
where $\varepsilon(a,b,c)$ is a sign, which equals $+1$ if the enumerations $a<b<c$ are clockwise on the triangle and $-1$ otherwise.
In this new basis, $A^g_v(\alpha')$ has the same matrix elements and thus the same spectrum as those of $A^g_v(\alpha)$ in the old basis.

There is a continuous deformation between any two 3--cocycles related by $\alpha'=\alpha\delta\beta$. Define a 2--cochain $\beta^{(t)}(x,y)=\beta(x,y)^t$, with
$0 \leq t \leq 1$, then $\alpha^{(t)}=\alpha\delta\beta^{(t)}$ is equivalent to $\alpha$ for all $0 \leq t \leq 1$, with $\alpha^{(0)}=\alpha$ and $\alpha^{(1)}=\alpha'$.
The corresponding transformation in Eq. (\ref{eq:basisTransformEquiv3Cocycle}) with $\beta$ replaced by $\beta^{(t)}$ is a continuous local $U(1)$ transformation; hence,
there is no phase transition in the one--parameter family of  systems with the the Hamiltonian $H_{G,\alpha^{(t)}}$ from $0\leq t \leq1$.
Thus we can conclude that the Hamiltonians $H_{G,\alpha'}$ and $H_{G,\alpha}$ due to two equivalent $3$--cocycles $\alpha'$ and $\alpha$  indeed describe the same topological phase.

\section{Topological observables and symmetries}\label{sec:topoOb}

In Hydrodynamics, topological properties of fluid, such as the stability and interactions of currents and fluxes, can be systematically studied by the diffeomorphism symmetry group acting on the fluid\cite{ArnoldHydrobook1998}. 
Analogously, the topological properties, in particular the topological observables and interactions (fusions) of the topological excitations, i.e., topological charges (currents), fluxes, and dyonic states of charge and fluxes, of a discrete model of topological phases can be systematically studied by the discrete version of diffeomorphisms, which we shall call the mutation symmetry transformations of the graph.

The symmetry we will be dealing with in this model are the mutations of the graph that preserve the spatial topology but not necessarily the local graph structure. A Hermitian operator is  \textit{a topological observable} if it is invariant under the these mutation transformations.

In most physical systems, the mutation (or diffeomorphism) symmetry does not exist. Nevertheless, in the discrete models of topological phases proposed by Kitaev\cite{Kitaev2003a,Kitaev2006}, and Levin and Wen\cite{Levin2004}, the mutation transformations to be constructed explicitly do have the space of the ground states of these models as invariant subspaces. Hence, we can use any topological observable, which is invariant under these mutation transformations, to characterize, at least partially, the topological phases in these models. One such topological observable is ground state degeneracy (GSD).

In this section, we construct the mutation transformations in our model and show that they are unitary symmetry transformations on the ground states. Then we shall define and see, as an immediate consequence, that the GSD of our model is indeed a topological observable.

All $B_f$ and $A_v$ are mutually commuting projection operators, as proven in Appendix A. Thus the ground states are the simultaneous $+1$ eigenvectors of all $B_f$, $A_v$. Define the ground state projection operator
\begin{equation}
  \label{eq:GSDprojector}
  P^0_{\Gamma}=(\prod_{f\in\Gamma}B_f)(\prod_{v\in\Gamma} A_v)
\end{equation}
and then the subspace of the ground states is
\begin{equation}
  \label{GroundStateSubspace}
  \mathcal{H}^0_{\Gamma}=\left\{|\Phi\rangle\Bigl|\Bigr. P_{\Gamma} |\Phi\rangle=|\Phi\rangle\right\}
\end{equation}

Usually, symmetry transformations in a lattice model do not affect the lattice structure and are thus well-defined on a fixed lattice. The mutation moves in our model, however, take one graph to another. Since each graph $\Gamma$ is endowed with  a Hilbert space $\Hil_{\Gamma}$ and the Hamiltonian defined in Eq. (\ref{eq:Hamiltonian}), the Hilbert space and the Hamiltonian may be subject to changes under the mutation moves.

It is known that we can always transform a triangular graph $\Gamma$ to another one $\Gamma'$ that triangulates the same Riemannian surface by a composition of the following elementary Pachner moves\cite{Pachner1978,Pachner1987}:
\begin{align}
& f_1:\; \bmm\twoTriangles{}{}{}{}{0}\emm\mapsto\bmm\twoTriangles{}{}{}{}{1}\emm\;\\
& f_2:\; \bmm\oneTriangle{}{}{}\emm\mapsto\bmm\threeTriangles{}{}{}{}{1}\emm\\
& f_3:\; \bmm\threeTriangles{}{}{}{}{1}\emm\mapsto\bmm\oneTriangle{}{}{}\emm,
\end{align}
which are the generators of all mutation transformations.

Each mutation generator $f_i:\Gamma\rightarrow\Gamma'$ induces a linear transformation $T_i:\Hil_{\Gamma}\rightarrow\Hil_{\Gamma'}$:
\begin{align}
  &T_1\BLvert \twoTriangles{v_1}{v_2}{v_3}{v_4}{0}\Brangle\label{eq:T1move}\\
  =&\sum_{[v_1v_3]\in G}\elf{v_1}{v_2}{v_3}{v_4}
  \BLvert \twoTriangles{v_1}{v_2}{v_3}{v_4}{1} \Brangle\nonumber
\end{align}
\begin{align}
  &T_2 \BLvert \oneTriangle{v_1}{v_2}{v_3} \Brangle\label{eq:T2move}\\
  =& \sum_{\substack{[v_1q],[v_2q],\\ [v_3q]\in G}}\elf{q}{v_1}{v_2}{v_3}
  \BLvert \threeTriangles{v_1}{v_2}{q}{v_3}{4} \Brangle\nonumber
\end{align}
\begin{align}
  &T_3\BLvert \threeTriangles{v_1}{v_3}{v_2}{v_4}{3}\Brangle\nonumber\\
  =& \elf{v_1}{v_2}{v_3}{v_4}
  \BLvert \oneTriangle{v_1}{v_3}{v_4}\Brangle\label{eq:T3move}
\end{align}

We now explain how we determine the linear properties of these operators.

For $T_1$, we enumerate the  four vertices in Eq. (\ref{eq:T1move}) by $v_1<v_2<v_3<v_4$. The action of $T_1$ does not change the degrees of freedom on the external edges, namely $[v_1v_2]$, $[v_2v_3]$, $[v_3v_4]$, and $[v_4v_1]$, but only changes the $[v_2v_4]$ on the internal edge to $[v_1v_3]$ on the internal edge in the new graph. The new group element $[v_1v_3]$ runs over all group elements in $G$. The operator $T_1$ also yields a $U(1)$ phase $\elf{v_1}{v_2}{v_3}{v_4}^{ \varepsilon(v_1,v_2,v_3,v_4)}$ on the basis vector in the new Hilbert space. The exponent $\varepsilon$ is a sign function that assigns an exponent $+1$ or $-1$ to the $3$--cocycle $\alpha$ according to following rule. One first picks up either of the two triangles before the action of $T_1$, then notes down as a list its three vertices counter--clockwise, e.g., either $(v_1,v_2,v_4)$ or $(v_2,v_3,v_4)$ as in Eq. (\ref{eq:T1move}). One then appends the remaining vertex to the list from the left, such as either $(v_3,v_1,v_2,v_4)$ or $(v_1,v_2,v_3,v_4)$. If it takes even permutations to shuffle the list to completely ascending order, $\varepsilon=1$, which is the case in Eq. (\ref{eq:T1move}), and $\varepsilon=-1$ otherwise.

As to $T_2$ defined in Eq. (\ref{eq:T2move}), we suppose the order of vertices is $v_1<v_2<v_3$. The action of $T_2$ creates three triangles separated by three new edges that carry respectively three new group elements. We enumerate this new vertex by $q$, which is set to be less than $v_1$, such that the three new group elements are $[qv_1]$, $[qv_2]$, and $[qv_3]$, which are then averaged out in order not to enlarge the Hilbert space.

The remaining factor in $T_2$ is also a phase, which is  in the form $\elf{v_1}{v_2}{v_3}{q}^{\varepsilon(q,v_1,v_2,v_3)}$, where the exponent is a sign depending on the orientation of the three triangles on the RHS of the equation. We determine the sign by first noting down the list of the three vertices \textit{\textbf{clockwise}} from any of the three triangles of the basis graph after the action of $T_2$, such as $(q,v_3,v_2)$ from the RHS of Eq. (\ref{eq:T2move}), then appending the remaining vertex to the beginning of the list, such as $(v_1,q,v_3,v_2);$ if the list can be turned into ascending order by even permutations, $\varepsilon=1$, which is the case in Eq. (\ref{eq:T2move}), and otherwise $\varepsilon=-1$. In general, the enumeration $q$ of the new vertex in Eq. (\ref{eq:T2move}) can have any order relative the  enumerations of the three old vertices; however, we assume $q$ is the smallest therein for simplicity.

As opposed to $T_2$, $T_3$ shrinks three triangles to a one, as in Eq. (\ref{eq:T3move}), the $3$--cocycle on the RHS of which is in general $\elf{v_1}{v_2}{v_3}{v_4}^{ \varepsilon(v_1,v_2,v_3,v_4)}$, where the exponent is a sign depending on the orientation of the three triangles on the LHS of the equation. This sign is determined this way: One first reads off the list of the three vertices \textit{\textbf{counter--clockwise}} from any of the three triangles of the original basis graph, such as $(v_2,v_3,v_4)$ from the LHS of Eq. (\ref{eq:T3move}), then appends the remaining vertex to the beginning of the list,  such as $(v_1,v_2,v_3,v_4);$ if the list can be turned into ascending order by even permutations, $\varepsilon=1$, otherwise $\varepsilon=-1$. To make life easier, in Eq. (\ref{eq:T3move}), we consider only one case.

Now we show that \textit{the mutation transformations generated by $T_1$, $T_2$, and $T_3$ are unitary symmetry transformations on the ground states.} In particular, $T_1$ is a unitarity of the entire $B_f=1$ subspace of the Hilbert space, in the sense that $\Hil^{B_f=1}_{\Gamma}\cong T_1(\Hil^{B_f=1}_{\Gamma})$. We denote the subspace of ground states of the Hilbert space $\Hil_{\Gamma}$ on a graph $\Gamma$ by $\Hil^0_{\Gamma}$. The proof consists of the following two steps.

\noindent (i). \textit{Mutation transformation preserve the space of  ground states.}

That is, if $T$ is a mutation transformation between two Hilbert spaces $\Hil_{\Gamma}$ to $\Hil_{\Gamma'}$, and if $|\Phi\rangle\in\Hil^0_{\Gamma}$, then $T|\Phi\rangle\in\Hil^0_{\Gamma'}$.

It suffices to show that $T_i P^0 _{\Gamma}=P^0_{\Gamma'}T_i$ for each mutation generator $T_i$ and each state in $\Hil^{B_f=1}$, where $P_{\Gamma}$ and $P^0_{\Gamma'}$ are the projectors respectively onto $\Hil^0_{\Gamma}$ and $\Hil^0_{\Gamma'}$ (see Appendix B). Note that however, we have $H' T \neq T H$ in general.

\noindent (ii). \textit{Mutations are unitary on ground states.}

By unitary we mean:
If $T$ is a mutation transformation between two Hilbert
spaces $\Hil_{\Gamma}$ to $\Hil_{\Gamma'}$, and if $|\Phi\rangle, |\Psi\rangle\in\Hil^0_{\Gamma}$, then
\begin{equation}
  \label{unitary}
  \left\langle T\Phi\right.\left|T\Psi\right\rangle
  =\left\langle \Phi\right.\left|\Psi\right\rangle.
\end{equation}

It is sufficient to check Eq. (\ref{unitary}) for $T_1$,$T_2$, and $T_3$ only, as seen in  Appendix B.

Consequently, there is a bijection between the ground states on any two graphs related by the mutation moves. Since two such graphs have the same spatial topology, the dimension of the ground state Hilbert space, i.e., the GSD of our model, is a a topological invariant and well--defined topological observable. Hence, our GSD can be taken as the trace of the ground state projector in Eq. (\ref{eq:GSDprojector}), which is as we have just seen a Hermitian operator that is invariant under the mutations, namely,
\be\text{GSD}=\text{tr}(P^0_{\Gamma}),\ee
where trace can be taken on any one of the graphs that are connected by the mutation moves, but the result is obviously independent of this choice.

\section{the ground state degeneracy and Topological degrees of freedom}\label{sec:GSD}

Ground state degeneracy (GSD) partially characterizes a topological phase. The nontrivial feature is that the GSD depends only on the spatial topology of the system. Two topological phases having different GSDs must be considered different.

Another important characteristic of topological phases is the emergent fractional quantum numbers of the elementary excitations in these phases and the fractional statistics of the quasiparticles of these elementary excitations. The relation with GSD is that the GSD is equal to the number of species of the quasiparticles of the elementary excitations.

The significance of the GSD lies in the degrees of freedom that are capable of distinguishing the degenerate ground states. The topological dependence of the GSD originates in that these degenerate degrees of freedom are global. An interesting question then arises: How to characterize these global degrees of freedom? Answering this question will enable us to (1) discern between two different topological phases that have the same topological dependence of GSD and (2) understand better the relationship between the global degrees of freedom in the degenerate ground states and the emergent fractional quantum numbers of the elementary excitations.

In what follows we calculate the GSD of our model on a torus and then analyze the global degrees of freedom in the degenerate ground states.

\subsection{Ground state degeneracy on a torus}\label{subsec:GSDtorus}
The topological invariance of the GSD of our model enables us to compute the GSD on the simplest triangle graph that triangulates the surface on which the model is defined.

In the case of finite groups, the GSD of our model on a 2--sphere is always unity because a  2--sphere has a trivial topology, in the sense that its fundamental group is trivial. This fact can be checked by following the  approach to be presented shortly in this section. This is a common feature of all known models of topological phases.

A torus is the simplest closed surface with a non--trivial topology. Fig. \ref{fig:torus} depicts the  simplest
triangle graph that triangulates a torus.
\begin{figure}[h!]
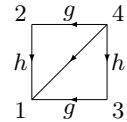

\centering
\torusgraph
\caption{Triangulation of a torus, with $g,h\in G$.}
\label{fig:torus}
\end{figure}

This graph has two triangle faces and only one vertex. But for the sake of assigning the 3--cocycles in $A_v^x$ easily, we use $1,2,3$ and $4$ to enumerate the sole vertex. This is perfectly fine because the boundary condition automatically merge the differently labeled vertices into one. We identify the boundary  edge $[12]$ with $[34]$, and $[13]$ with $[24]$. It is tricky to notice that the four enumerations can not be arbitrary. In Fig. \ref{fig:torus}, the orientations of the two boundary edges are consistently taken from higher enumerations to lower enumerations.

The subspace $\Hil^{B_f=1}$ is spanned by the basis vectors
\be
  \left\{\left|g,h\right\rangle\right.\left|g,h\in G,gh=hg\right\}
\ee
corresponding to the assignment of $[13]=g,[12]=h$ and $[14]=gh=hg$ in the above graph.

Since there is only one vertex in Fig. \ref{fig:torus}, we simplify the notation of  $A_v^x$ at this mere vertex by $A^x$, the action of which, according to its definition in Eq. \eqref{eq:Avg}, is
\begin{align}
  \label{eq:torusAxEnumeration}
  &A^x \left|[13],[12]\right\rangle
  \nonumber\\
  =&\left(\elf{1}{3}{4'}{4}\elf{1}{2}{4'}{4}^{-1}\right)
    \nonumber\\
  &\elf{1}{3'}{3}{4'}^{-1}
  \elf{1}{2'}{2}{4'}
  \nonumber\\
  &\left(\elf{1'}{1}{3'}{4'}\elf{1'}{1}{2'}{4'}^{-1}\right)
  \nonumber\\
  &\left|[1'3'],[1'2']\right\rangle,
\end{align}
where the rule $[1'1]=[2'2]=[3'3]=[4'4]=x$ holds for the new enumerations $1',2',3'$ and $4'$. The coefficient consisting of the six 3--cocycles are determined as follows. We obtain $\left(\elf{1}{3}{4'}{4}\elf{1}{2}{4'}{4}^{-1}\right)$ at enumeration $4$, then replace $4$ by $4'$ to determine the rest of the factors. Next we obtain $'\elf{1}{3'}{3}{4'}^{-1}$ at enumeration $3$ and then again replace $3$ by $3'$. Having repeated similar steps at enumerations $2$ and $1$, we arrive at the above formula. One may derive a seemingly different coefficient by following a different path, e.g., $1\rightarrow 2\rightarrow 3\rightarrow 4$. But because of the topological invariance, the new coefficient can be brought to precisely the same as that in Eq. \eqref{eq:torusAxEnumeration} by applying 3--cocycle conditions, as one can check.

Now we write down the action explicitly in terms of the group elements $g,h$ and $x$.
\begin{align}
  \label{eq:torusAx}
  &A^x\left|g,h\right\rangle
  \nonumber\\
  =&\alpha(g,hx^{-1},x)\alpha(h,gx^{-1},x)^{-1}
  \nonumber\\
  &\alpha(gx^{-1},x,hx^{-1})^{-1}
  \alpha(hx^{-1},x,gx^{-1})\\
  &\alpha(x,gx^{-1},xhx^{-1})\alpha(x,hx^{-1},xgx^{-1})^{-1}
  \left|xgx^{-1},xhx^{-1}\right\rangle\nonumber
\end{align}

One can verify the multiplication law $A^xA^y=A^{xy}$ by 3--cocycle conditions \eqref{3CocycleCondition}, which agrees with the result in Appendix \ref{app:algAvBf}. The ground-state projector is thus
\be
\label{eq:torusProjector}
P^0=\frac{1}{|G|}\sum_x A^x
\ee

Taking a trace of the ground--state projector \eqref{eq:torusProjector} computes the GSD,
\begin{align}
  \label{eq:GSDalpha}
  \text{GSD}=&\text{tr}(\frac{1}{|G|}\sum_x A^x)
  \nonumber\\
  =&\sum_{g,h}\delta_{gh,hg} \left\langle g,h\right|A^x\left|g,h\right\rangle
  \nonumber\\
  =&\frac{1}{|G|}\sum_{h,g,x}\delta_{gh,hg}\delta_{hx,xh}\delta_{xg,gx}
  \alpha(g,hx^{-1},x)
    \nonumber\\
  &\alpha(h,gx^{-1},x)^{-1}
  \alpha(gx^{-1},x,hx^{-1})^{-1}
  \alpha(hx^{-1},x,gx^{-1})
  \nonumber\\
  &\alpha(x,gx^{-1},h)\alpha(x,hx^{-1},g)^{-1}
\end{align}
where the trace is evaluated in the subspace $\mathcal{H}^{B_f=1}$.

The seemingly complicated summation of the six $3$--cocycles in Eq. (\ref{eq:GSDalpha}) can actually be simplified in many cases due to a hidden simple mathematical structure. To see this, we shall first explore in the next subsection the algebraic structure in Eq. \eqref{eq:torusAx}, after which we come back to the simplification of the GSD.

\subsection{Topological degrees of freedom}\label{subsec:topoDof}
We now proceed to extract the algebraic structure in Eq. \eqref{eq:torusAx} and explore the classification of the topological degrees of freedom in the ground states, so as to reveal the deep mathematical significance of the GSD yet not fully discussed in the previous subsection.

To this end, we rewrite Eq. (\ref{eq:torusAx}) as follows by applying appropriate 3-cocycle conditions (see Appendix C for the derivation).
\begin{align}
  \label{torusAxRewrite}
   &A^x\left|g,h\right\rangle
  \nonumber\\
  =&
  \frac{
  \alpha(g,x^{-1},xhx^{-1})\alpha(x^{-1},xhx^{-1},xgx^{-1})
  }{\alpha(x^{-1},xgx^{-1},xhx^{-1})}
  \nonumber\\
  &
  \frac{ \alpha(h,g,x^{-1})}{
  \alpha(g,h,x^{-1})\alpha(h,x^{-1},xgx^{-1})
  }
  \left|xgx^{-1},xhx^{-1}\right\rangle.
\end{align}

By defining a new function in terms of 3--cocycles as
\begin{align}
  \label{eq:betafunction}
  \beta_a(b,c)\defeq
  \frac{\alpha(a,b,c)\alpha(b,c,c^{-1}b^{-1}abc)}
  {\alpha(b,b^{-1}ab,c)},
\end{align}
$\forall a,b,c\in G$, and plugging it into Eq. (\ref{torusAxRewrite}), we obtain
\be\label{eq:torusAxBeta}
\begin{aligned}
   A^x\left|g,h\right\rangle =
  &\frac{\beta_g(x^{-1},xhx^{-1})}{\beta_g(h,x^{-1})}
  \left|xgx^{-1},xhx^{-1}\right\rangle\\
  &=\eta^g(h,x)\left|xgx^{-1},xhx^{-1}\right\rangle,
\end{aligned}
\ee
where we define
\be\label{eq:eta}
\eta^g(h,x)=\frac{\beta_g(x^{-1},xhx^{-1})}{\beta_g(h,x^{-1})},
\ee
for any given $g\in G$ and $h\in Z_g=\{x\in G|xg=gx\}$, the centralizer subgroup for $g\in G$. Let $h=g$ in the above definition, we have
\be\label{eq:etaggIs1}
\eta^g(g,x)=\frac{\beta_g(x^{-1},xgx^{-1})}{\beta_g(g,x^{-1})}=1,
\ee
for all $g,x\in G$, which can be quickly checked by directly using the $3$--cocycle condition. Interestingly, if $x\in Z_{g,h}=\{x\in G|xg=gx,xh=hx\}$, we see that the $U(1)$ number
\be\label{eq:rhoEta}
\rho^g(h,x)\defeq\eta^g(h,x)\Big|_{Z_{g,h}}=\frac{\beta_g(x^{-1},h)}{\beta_g(h,x^{-1})}
\ee
is actually a $1$--dimensional representation of the subgroup $Z_{g,h}\subseteq G$. This is because $\rho^g(h,x)\rho^g(h,y)=\rho^g(h,xy)$, which is a consequence of $A^x A^y=A^{xy}$ on the ground states.

It follows from Eq. \eqref{eq:torusAxBeta} that the ground states are spanned by the vectors
\begin{align}
  \label{eq:GGquotConj}
  \left\{\frac{1}{|G|}\sum_{x\in G} \eta^g(h,x)
  \left|xgx^{-1},xhx^{-1}\right\rangle\Biggl|\Biggr.g\in G,\ h\in Z_g\right\}.
\end{align}
This tempts one to think that counting the GSD amounts to counting the elements in $\text{Hom}\left(\pi_1(T^2),G\right)/conj$, where the $conj$ in the quotient is the conjugacy equivalence: $(g,h)\sim(xgx^{-1},xhx^{-1})$ for any $x$. This is in general not true, however, as one may over--count the states because in Eq. \eqref{eq:GGquotConj}, the terms that are summed over for some $g$ and $h$ may actually vanish, causing the corresponding states non--existing, as we now classify by studying the algebraic structure of
the function $\beta_a$ defined in Eq. (\ref{eq:betafunction}).

Using the $3$--cocycle condition of $\alpha$, one can show that the  function $\beta_a$ is in fact a normalized, \textbf{twisted} $2$--cocycle satisfying twisted $2$--cocycle condition,
\begin{align}
  \label{eq:betacondition}
  \widetilde{\delta}\beta_a(x,y,z)\equiv\frac{\beta_{x^{-1}ax}(y,z)\beta_a(x,yz)}{\beta_a(xy,z)\beta_a(x,y)}=1,
\end{align}
and the normalization condition
\begin{align}
  \label{eq:betanormalization}
  \beta_a(x,e)=\beta_a(e,x)=1,
\end{align}
for all $a,x,y,z\in{G}$. The $\widetilde{\delta}\beta_a$ is called the twisted 3--coboundary of $\beta_a$.

Furthermore, when its variables are restricted to the centralizer $Z_a$ of $a\in G$, $\beta_a$ clearly reduces to a normalized, usual $2$--cocycle over $Z_a$, which obeys the usual 2--cocycle condition,
\be  \label{eq:betaconditionUsual}
  \beta_{a}(y,z)\beta_a(xy,z)^{-1}\beta_a(x,yz)\beta_a(x,y)^{-1}=1,
\ee
for all $x,y,z\in Z_a$.

The function $\beta_a$ is closely related to the projective representations
of $Z_a$. In fact, each $\beta_a$ classifies a class of projective representations called $\beta_a$--\textbf{representations} $\widetilde{\rho}:Z_a{\rightarrow}\text{GL}\left(Z_a\right)$ obeying
\be\label{eq:betarepresentation}
\widetilde{\rho}(x)\widetilde{\rho}(y)=\beta_a(x,y)\widetilde{\rho}(xy)
\ee
It is evident that the normalization condition corresponds to $\widetilde{\rho}(e)\widetilde{\rho}(x)=\widetilde{\rho}(x)\widetilde{\rho}(e) =\widetilde{\rho}(x)$, while the $2$--cocycle condition in Eq. \eqref{eq:betacondition} corresponds to the associativity $\widetilde{\rho}(x)\left(\widetilde{\rho}(y)\widetilde{\rho}(z)\right) =\left(\widetilde{\rho}(x)\widetilde{\rho}(y)\right)\widetilde{\rho}(z)$. In particular, if the $3$--cocycles that define $\beta_a$ are the trivial one, i.e., $\alpha=\alpha^0\equiv 1$,  then $\beta_a=1$, reducing $\tilde{\rho}$ to the usual linear representations of $Z_a$.

In this paper, we are interested in the classification of $\beta_a$--representations of $Z_a$ with fixed $a\in G$. Here we record a few important properties of this kind of representations.

An element $g\in Z_a$ is $\beta_a$\textbf{--regular} if $\beta_a(g,h)=\beta_a(h,g)$ for all $h\in Z_{a,g}\subseteq Z_a$. Moreover, $g$ is $\beta_a$--regular if and only if all its conjugates are so, which can be verified by the 3--cocycle condition in Eq. \eqref{3CocycleCondition}. That is,  $g$ is $\beta_a$--regular $\Leftrightarrow$ $[g]$ is $\beta_a$--regular. We call $[g]$ a $\beta_a$\textbf{--regular conjugacy class}. In particular, each $g$ is always $\beta_g$--regular.

Let us denote all the conjugacy classes of $G$ by $C^A$ and number of such classes by $r(G)$. Since for any $a,b\in C^A$, $Z_a\cong Z_b$, it is convenient to denote these isomorphic centralizers by $Z^A$, obtained by any representative of the class. We henceforth collect any chosen set of representatives of all $C^A$ by simply $R_C=\{g^A\in C^A|A=1\dots r(G)\}$.

For $a\in Z^A$, let the number of $\beta_a$--regular conjugacy classes in $Z^A$ be $r(Z^A,\beta_a)$. Clearly, we have
\be\label{eq:rGbetaLrG}
r(Z^A,\beta_a)\leq r(Z^A).
\ee
It is known that the number of inequivalent irreducible $\beta_a$--representations  of $Z^A$ is equal to $r(Z^A,\beta_a)$. In particular, in the case where $\beta_a=1$ because of the trivial 3--cocycle $\alpha^0$, we arrive at the familiar result that the number of the irreducible linear representations equals the number of conjugacy classes. Eq. \eqref{eq:rGbetaLrG} states that irreducible $\beta_a$--representations of $Z_a$ are fewer than the irreducible liner representations.

The topological degrees of freedom is related to this classification of projective representations of $Z^A$. To show this, we reexpress the GSD in Eq. \eqref{eq:GSDalpha} as
\be
  \label{eq:GSDbeta}
\begin{aligned}
  \text{GSD}=&
  \frac{1}{|G|}\sum_{g,h,x\in G}\delta_{gh,gh}\delta_{gx,xg}\delta_{hx,xh}\eta^g(h,x)\\
  =&  \frac{1}{|G|}\sum_{g\in G}\sum_{h\in Z_g}\sum_{x\in Z_{g,h}}\rho^g(h,x)
  ,
\end{aligned}
\ee
which can be further simplified, by the identity
\begin{align}
  \label{eq:SumBetaToDelta}
  \frac{1}{|Z_{g,h}|}\sum_{x\in Z_{g,h}}
  \rho^g(h,x)
  =
  \left\{
  \begin{array}{ll}
    1, & h\text{ is }\beta_g\text{ regular}\\
    0, & \text{otherwise}
  \end{array}
  \right.
\end{align}
where $|Z_{g,h}|$ is the order of the subgroup $Z_{g,h}$ with fixed $g,h\in G$.

Here is the proof of Eq. \eqref{eq:SumBetaToDelta}. As shown below Eq. (\ref{eq:rhoEta}), the phase $\rho^g(h,x)$ in Eq. (\ref{eq:rhoEta})  is a $1$--dimensional representation of $Z_{g,h}$; it is the trivial representation $\rho^0=\mathds{1}$ if $h$ is $\beta_g$--regular and is otherwise a non--trivial irreducible representation (i.e., different from the identity representation). By the orthonormal condition
\be
\frac{1}{|Z_{g,h}|}\sum_{x\in Z_{g,h}}{\rho^g}^{(j)}(h,x)
  =\delta_{j,0},
\ee
where $j=0$ corresponds to the trivial representation and $j\neq 0$ a non-trivial irreducible representation, we obtain Eq. \eqref{eq:SumBetaToDelta}.

Equation (\ref{eq:SumBetaToDelta}) renders Eq. \eqref{eq:GSDbeta} as
\begin{align}
  \label{eq:GSDconjugacy}
  \text{GSD}=&\sum_{g\in G}\sum_{h\in Z_g}\frac{|Z_{g,h}|}{|G|}\times
  \left\{
  \begin{array}{ll}
    1, & h\text{ is }\beta_g\text{ regular}\\
    0, & \text{otherwise}
  \end{array}
  \right.
  \nonumber\\
  =&\sum_{h\in Z^A,A}\frac{|Z^A|}{|G|}\frac{|Z_{g^A,h}|}{|Z^A|}\times
  \left\{
  \begin{array}{ll}
    1, & h\text{ is }\beta_{g^A}\text{--regular}\\
    0, & \text{otherwise}
  \end{array}
  \right.
  \nonumber\\
  =&\sum_A r(Z^A,\beta_{g^A}).
\end{align}
In the last equality use is made of that $|G|/|Z^A|=|C^A|$.

According to the relationship between the number of $\beta_g$--regular conjugacy classes of $Z_g$ and the number of $\beta_g$--representations of $Z_g$ as discussed above, the GSD can take the form
\be\label{eq:GSDrepresentations}
  \text{GSD}
  =\sum_{A}\#(\beta_{g^A}\text{--representations of }Z^A),
\ee
where $\#$ stands for \textquotedblleft the number of".

\subsection{Ground States Basis}\label{subsec:GSbasis}

As promised in the previous subsection, we have simplified GSD evaluation in Eq. (\ref{eq:GSDalpha}) to counting the relevant projective representations. Computing the GSD of our model on a torus amounts to counting the irreducible projective $\beta_{g^A}$--representations of each conjugacy class $C^A$, then sum it over $C^A$ in $G$.

Hence, the ground states on a torus  can be labeled by pairs $(g^A,h)$ with $g^A$ running over $R_C$ and $h$ running over a set of $\beta_{g^A}$--regular conjugacy class representatives of $Z^A$. Equivalently, the ground states can also be labeled by pairs $(A,\mu)$ with $A=1\dots r(G)$ and $\mu$ labeling $\widetilde{\rho}^{g^A}_{\mu}$, which are the irreducible $\beta_{g^A}$--representations of $Z^A$. We posit that the basis vectors $\ket{A,\mu}$ can be defined as:
\be
  \label{eq:AmuBasis}
  \ket{A,\mu}
  =\frac{1}{\sqrt{|G|}}\sum_{g\in C^A,h\in Z_g}\,
  \widetilde{\chi}^g_{\mu}(h)\,\ket{g,h},
\ee
where $\widetilde{\chi}^g_{\mu}(h)=\text{tr}\widetilde{\rho}^g_{\mu}(h)$ is the projective character defined as usual by the trace of the representation, and $|G|$ the order of $G$. Since the centralizers $Z_g$ are isomorphic for all $g$ in a conjugacy class $C^A$, so are the set of irreducible $\beta_g$--representations of $Z_g$ for all $g\in C^A$. Therefore the same label $\mu$ works for all $\beta_g$--representations. We detail the construction of the isomorphism among the irreducible $\beta_g$--representations for $g\in C^A$ in Appendix \ref{app:solModularTrans}. The projective characters $\widetilde{\chi}^g_{\mu}(h)$ satisfy the following relation under simultaneous conjugation of $g$ and $h$:
\be
\label{eq:CharacterInConjugacyClasses}
\widetilde{\chi}^{xgx^{-1}}_{\mu}(xhx^{-1})= \eta^g(h,x)\widetilde{\chi}^g_{\mu}(h).
\ee
for all $x\in G$.
Keep in mind that if $h\in Z^A$ but $h\neq g$, then $h\notin C^A$. Practically, for each conjugacy class $C^A$ with its representative element $g^A$, if we find a $\beta_{g^A}$--representation $\widetilde{\rho}^{g^A}_{\mu}$ of $Z^A$, we can construct the $\beta_g$--representations for all other elements $g$ of $C^A$. Throughout this paper, we take the representations such that the relation \eqref{eq:CharacterInConjugacyClasses} are always satisfied.

Note that in general, the projective characters are not functions of conjugacy classes because the fact that $\widetilde{\rho}^g_{\mu}(a)\widetilde{\rho}^g_{\mu}(b)=\beta_{g}(a,b)\widetilde{\rho}^g_{\mu}(ab)$ yields the relation  $\widetilde{\chi}^g_{\mu}(xhx^{-1})= \left[\beta_{g}(hx^{-1},x)/\beta_{g}(x,hx^{-1})\right]\widetilde{\chi}^g_{\mu}(h)$. Nevertheless, the orthogonality and completeness relations of these projective characters still hold, namely, for all $\beta_{g}$--regular elements $a,b$ in $G$,
\be
\begin{aligned}
  \label{eq:characterrelation}
  &\frac{1}{|Z_g|}\sum_{h\in Z_g}
  \overline{\widetilde{\chi}^g_{\mu}(h)}
  \widetilde{\chi}^g_{\nu}(h)
  =\delta_{\mu,\nu},\\
  &\frac{|C^A|}{|Z_g|}
  \sum_{\mu}\,
  \overline{\widetilde{\chi}^g_{\mu}(a)}\,
  \widetilde{\chi}^g_{\mu}(b)
  =
  \left\{
  \begin{array}{ll}
    1, & a\text{ conjugate to }b\\
    0, & \text{otherwise}
  \end{array}
  \right.,
\end{aligned}
\ee
where $|Z_g|$ is the order of the subgroup $Z_g$, and $|C^A|$ is the cardinality of the conjugacy class $C^A$ containing $a$ in the subgroup $Z_g$.
By Eq. \eqref{eq:characterrelation}, one can verify that the basis in Eq. \eqref{eq:AmuBasis} is orthonormal. Moreover, if $h$ is not $\beta_{g}$--regular, then $\widetilde{\chi}^g_{\mu}(h)=0$, which is the very Proposition \ref{prop:chiIs0} proven in Appendix \ref{app:solModularTrans}.

We can now justify that $\ket{A,\mu}$ is indeed a ground state by its invariance under the action of the ground state projection operator $P^0$ defined in Eq. \eqref{eq:torusProjector}.
\begin{align}
& P^0\ket{A,\mu}\nonumber\\
=&\frac{1}{|G|}\sum_{x\in G}A^x\ket{A,\mu}\nonumber\\
=&\frac{1}{\sqrt{|G|^3}}\sum_{x\in G}\sum_{\substack{g\in C^A \\ h\in Z_g}}\widetilde{\chi}^g_{\mu}(h)\eta^g(h,x)\ket{xgx^{-1}, xhx^{-1}}\nonumber\\
=&\frac{1}{\sqrt{|G|^3}}\sum_{x\in G}\sum_{\substack{g\in C^A \\ h\in Z_g}}\widetilde{\chi}^{xgx^{-1}}_{\mu}(xhx^{-1})\ket{xgx^{-1}, xhx^{-1}}\nonumber\\
=&\frac{1}{\sqrt{|G|^3}}\sum_{g'\in C^A,  h'\in Z_g}\widetilde{\chi}^{g'}_{\mu}(h')\ket{g', h'}\sum_{x\in G}1\nonumber\\
=&\frac{1}{\sqrt{|G|}}\sum_{g\in C^A,h\in Z_g}
  \widetilde{\chi}^g_{\mu}(h)\,\ket{g,h}=\ket{A,\mu},\label{eq:P0Amu}
\end{align}
where Eqs. \eqref{eq:torusAxBeta} and \eqref{eq:CharacterInConjugacyClasses} are used respectively in the second and third equalities, while substitutions $g'=xgx^{-1}$ and $h=xhx^{-1}$ are made to get the fourth equality but renamed back to $g$ and $h$ in the end. Therefore, we conclude that the set of $\ket{A,\mu}$ does furnish an orthonormal basis of the ground states, i.e.,
\be\label{eq:H0basis}
\Hil^0=\mathrm{span}\{\ket{A,\mu}:A=1\dots r(G),\ \mu=1\dots r(Z^A,\beta_{g^A})\}
\ee

This uncovers the mathematical structure that classifies the topological degrees of freedom in the ground states via representation theory. We start with our model specified by a 3--cocycle $\alpha$ over $G$ and end up with the result that the topological degrees of freedom are determined by the 2--cocycles $\beta_g$ over $Z_g$.

the ground--state basis vectors $(A,\mu)$ label the set of all inequivalent irreducible representation spaces of the twisted quantum double $D^{\alpha}(G)$, which plays a central role in the orbifolds by a symmetry group $G$ of a holomorphic conformal field theory. We may understand the term ``twisted'' as twisting linear representations to projective representations. We are not going to explain the details of the twisted quantum double, which is beyond the concern of this paper. But for completeness, we note here the multiplication law in the twisted quantum double $D^{\alpha}(G)$:
\be
  \label{mult}
  (P_a\otimes x)(P_b\otimes y)=\beta_a(x,y) \delta_{a,xbx^{-1}}(P_a\otimes xy),
\ee
for all $a,b,x,y\in G$, where $P_a$ projects out $a$ while $x$ obeys the usual group multiplication with a projective phase factor.

In particular,  as to be shown in Section \ref{sec:Kitaev}, the untwisted version of our model (i.e., when $\alpha=\alpha^0$) turns out to be Kitaev's quantum double model (or, the toric code model), the GSD of which agrees with the number of irreducible representations of the quantum double $D(G)$ of the finite group $G$. Therefore, our model can be viewed as a deformation of the quantum double model by a twisting with the $\beta_a$ in Eq. \eqref{eq:betafunction}, which twists the linear representations of a group to the projective representations. This is mainly why we christen our model \textbf{twisted quantum double (TQD) model}.

\section{Fractional Topological numbers}\label{sec:fractionaltopologicalnumbers}

In the previous section, we studied the GSD as the simplest topological observable of our model. But topological phases are only partially characterizes GSD. It is possible that two models specified by two inequivalent 3--cocycles have the same GSD but in the mean time, give rise to distinct topological phases.

Hence, a natural question is how to differentiate two distinct topological phases if they bear the same GSD. It is known that the emergent fractional topological numbers in the elementary excitations can differentiate such distinct topological phases.

In this section, we shall first construct on the subspace $\Hil^{B_f=1}$ the topological observables then solve their eigen--problems to acquire the expected fractional topological numbers. These fractional topological numbers are related to the fractional statistics of quasiparticles in the elementary excitations. Actually, there is believed to exist a correspondence between the topological degrees of freedom in the ground states of the system on a torus and the local degrees of freedom of the quasiparticles in the elementary excitations. We shall come back to address this correspondence in Section \ref{sec:relation2DW}.
\subsection{Topological observables as $\text{SL}(2,\Z)$ generators}\label{sub:TopoObservableSL2Z}

Consider the graph $\Gamma$ on which the model is defined. In Section \ref{sec:topoOb}, we constructed the mutation transformations that can change the local structure of the graph but preserve the graph topology, i.e., the topology that $\Gamma$ triangulates. Under such mutations, the topological degrees of freedom of the ground states are intact. All such transformations are local. The ground--state projector
$\prod_v A_v$ can also be constructed from such mutations.

Here, on the other hand, we look into the large transformations that alter the graph structure globally but still preserve the graph topology and lead to richer topological observables.

Again, since we are not interested in the local transformations of the graph, we need only to work on the simplest triangulation of torus as in Fig. \ref{fig:torus}.

The transformations that change the topology are the familiar modular transformations, which form the group $SL_2(\Z)$ that is generated by
\be  \label{eq:SL2Z}
  \str=\bpm
     0 & -1 \\
     1 & 0
    \epm,
    \quad
  \ttr=\bpm
     1 & 1 \\
     0 & 1
    \epm,
\ee
satisfying relations $(\str\ttr)^3=\str^2$ and $\str^4=1$.

To cast the modular transformations in the form of $3$--cocycles, let us redraw the torus in Fig. \ref{fig:torus}
in the coordinate frame in Fig. \ref{fig:SL2Ztransformation},
which illustrates the $\str$ and $\ttr$ transformations on the torus.
\begin{figure}[h!]
\centering
  \begin{align*}
  &\str: \bmm\SLtransformationSone\emm
  \mapsto
  \bmm\SLtransformationStwo\emm\\
  &\ttr: \bmm\SLtransformationTone\emm
  \mapsto
  \bmm\SLtransformationTtwo\emm
  \end{align*}
\caption{$\str$ and $\ttr$ transformations of a torus.}
\label{fig:SL2Ztransformation}
\end{figure}
The $\str$ and $\ttr$ transformations on the subspace $\Hil^{B_f=1}$
are constructed as follows.

We leave the details of the construction to Appendix \ref{app:modularTrans} but claim here that
\begin{align}
  \label{eq:SxtransformationEnumerations}
  &\str^x \BLvert\torusgraphST{1}{}{2}{}{3}{}{4}{}{0}\Brangle
  \nonumber\\
  =& \elf{2'}{1}{3}{4}{\elf{2'}{1}{2}{4}}^{-1}
  \nonumber\\
   \times& {\elf{2'}{4'}{1}{3}}^{-1}{\elf{1'}{2'}{1}{2}}^{-1}
   \nonumber\\
   \times& \elf{2'}{3'}{4'}{1}{\elf{1'}{2'}{3'}{1}}^{-1}
   \nonumber\\
   \times& {\elf{1'}{2'}{3'}{4'}} \BLvert \torusgraphST{3}{'}{1}{'}{4}{'}{2}{'}{1}\Brangle
\end{align}
where we set the order of the enumerations by $1'<2'<3'<4'<1<2<3<4$ such that the orientation of the two boundary edges are taken consistently. One sees that the wave function transforms oppositely in Fig. \ref{fig:SL2Ztransformation}.

Taking $[12]=[34]=h$, $[13]=[24]=g$, and $[3'1]=[1'2]=[4'3]=[2'4]=x$, $\str^x$ is casted explicitly in terms of the group elements as
\begin{align}
  \label{eq:SxtransformationGroupElements}
  &\str^x\ket{g,h}
  \nonumber\\
  =&\alpha(xg^{-1}h^{-1},g,h)\alpha(xg^{-1}h^{-1},h,g)^{-1}
  \nonumber\\
  \times &\alpha(xh^{-1}x^{-1},xg^{-1},g)^{-1}\alpha(xgx^{-1},xg^{-1}h^{-1},h)^{-1}
  \nonumber\\
  \times &\alpha(xg^{-1}h^{-1}x^{-1},xgx^{-1},xg^{-1})
  \alpha(xgx^{-1},xg^{-1}h^{-1}x^{-1},x)^{-1}
  \nonumber\\
  \times &\alpha(xgx^{-1},xg^{-1}h^{-1}x^{-1},xgx^{-1})\ket{xh^{-1}x^{-1},xgx^{-1}},
\end{align}
where $\ket{[1'3'],[1'2']}\defeq\ket{xh^{-1}x^{-1},xgx^{-1}}$.

Similarly, we claim that $\ttr^x$  behaves as
\begin{align}
  \label{eq:Txtransformation}
  & \ttr^x \BLvert\torusgraphST{1}{}{2}{}{3}{}{4}{}{0}\Brangle
  \nonumber\\
  =& \elf{2'}{1}{3}{4}{\elf{2'}{1}{2}{4}}^{-1}
  \nonumber\\
  \times &\elf{3''}{2'}{1}{3} {\elf{4''}{2'}{1}{2}}^{-1}
  \nonumber\\
  \times &\elf{1''}{3''}{2'}{1} {\elf{1''}{4''}{2'}{1}}^{-1}
  \nonumber\\
  \times &\elf{1''}{3''}{4''}{2'}
  \BLvert \torusgraphST{1}{''}{4}{''}{3}{''}{2}{'}{1}\Brangle
\end{align}
where we set $1''<3''<4''<2'<1<2<3<4$ as the order of enumerations, and when explicitly expressed in term of group elements, becomes
\begin{align}
  \label{eq:TxtransformationGroupElements}
  &\ttr^x\ket{g,h}
  \nonumber\\
  =&\alpha(xg^{-1}h^{-1},g,h)\alpha(xg^{-1}h^{-1},h,g)^{-1}
  \nonumber\\
  \times &\alpha(xhx^{-1},xg^{-1}h^{-1},g)\alpha(xgx^{-1},xg^{-1}h^{-1},h)^{-1}
  \nonumber\\
  \times &\alpha(xgx^{-1},xhx^{-1},xg^{-1}h^{-1})\alpha(xhx^{-1},xgx^{-1},xg^{-1}h^{-1})^{-1}
  \nonumber\\
  \times &\alpha(xgx^{-1},xhg^{-1}x^{-1},xgx^{-1})\ket{xgx^{-1},xg^{-1}hx^{-1}}
\end{align}

The $s$  and $\ttr$ transformation are defined by
\be
\label{eq:STequalSumSTx}
\str=\frac{1}{|G|}\sum_x \str^x,
\quad \ttr=\frac{1}{|G|}\sum_x \ttr^x.
\ee

The operators $\str,\ttr$ in Eq. \eqref{eq:STequalSumSTx} are a representation of the $\str$ and $\ttr$ matrices in Eq. \eqref{eq:SL2Z} on the subspace $\mathcal{H}^{B_f=1}$ of the model. Indeed, a direct evaluation by the 3--cocycle conditions verifies $(\str\ttr)^3=\str^2$ and $\str^4=1$.

On a torus, the vertex operator $A^x$ and the modular transformation operators $\str^x$ and $\ttr^x$ comprise an interesting algebraic structure, namely,
\be\label{eq:algSTandA}
\begin{aligned}
&\str^x A^y=\str^{xy}=A^x S^y,\\
&\ttr^x A^y=\ttr^{xy}=A^x\ttr^y.
\end{aligned}
\ee
We shall not prove this here, as it can be done straightforwardly by manipulating the $3$--cocycles in the above equations. This algebraic structure results in the following important reexpression of the $\str$ and $\ttr$ operators.
\be\label{eq:STandP0}
P^0\ttr^e=\ttr=\ttr^e P^0,\quad P^0\str^e=\str=\str^e P^0,
\ee
the first of which is proven as follows.
\begin{align*}
\ttr=\frac{1}{|G|}\sum_{x\in G}\ttr^x\xeq[]{Eq. \eqref{eq:algSTandA}} &\frac{1}{|G|}\sum_{x\in G}\ttr^e A^x\\
=&\ttr^e(\frac{1}{|G|}\sum_{x\in G}A^x)\\
=&\ttr^e P^0=P^0\ttr^e,
\end{align*}
where $e$ is the identity element of $G$. The proof of the second relation in Eq. \eqref{eq:STandP0} follows likewise. This indicates that  the operators $\str$ and $\ttr$ are indeed topological observables and symmetries in $\Hil^0$.

We can lay the ground states in the basis composed of the eigenvectors $\{\Phi_k\}$ of $\ttr$,
\be
\label{eq:GSintermsofT}
\ttr\ket{\Phi_k}=\theta_k \ket{\Phi_k}
\ee
where $\theta_k$ is a $U(1)$ phase, and $k=1,2,...,\text{GSD}$ labels the degenerate ground states. These eigenvectors will be identified with $\ket{A,\mu}$ in the next subsection.

We remark that $\ttr$ also has other eigenvectors, whose eigenvalues are zero, which is implied by the first relation in Eq. \eqref{eq:STandP0}. These zero eigenvectors are actually the excited states of the model; however, we are not going to dwell on them in this paper.

Hence, one can regard the eigenvalues $\theta_k$ of $\ttr$ as a set of topological numbers of the model. In fact, from Fig. \ref{fig:SL2Ztransformation}, $\ttr$ can be viewed as a global twisting of the system, and thus its eigenvalues $\theta_k$ can be regarded as the topological spins of the topological sectors $\ket{\Phi_k}$.

Another set of topological numbers are the $\str$--matrix of the topological sectors,
\be
\label{eq:smatrix}
s_{ij}=\left\langle\Phi_i \right| \str\ket{\Phi_j}.
\ee
where $i,j=1,2,...,\text{GSD}$. This matrix is orthonormal:
\be \label{eq:Smatrixcondition}
  \sum_{j}s_{ij}\overline{s_{jk}}=\delta_{ik}.
\ee

Above all, apart from GSD, we obtain two more sets of topological numbers,  $\{\theta_k\}$, and $\{s_{ij}\}$, to characterize the topological phases in our model.

We remark that we have presented here a novel derivation of the modular $\str$ and $\ttr$ matrices, which is purely based on our model and in terms of $3$--cocycles of $G$, without resorting to any theory of group representations.

\subsection{$\str$ and $\ttr$ Matrices}

We now offer concrete solutions of the topological numbers $\{\theta_k,s_{ij}\}$, which are tied to the projective representation theory.
We emphasize that the topological observables $\str$ and $\ttr$ are defined on the subspace $\Hil^{B_f=1}$, whereas the solutions to their eigen--problems are to be obtained on $\Hil^0\subset \Hil^{B_f=1}$.

In the following we diagonalize the $\ttr$ matrix in Eq. \eqref{eq:STequalSumSTx}. One should bear in mind that the transformation $\ttr$ acts non--vanishingly only on the ground states. In Section \ref{subsec:GSbasis}, we see that the ground states are spanned by the orthonormal basis $\ket{A,\mu}$ defined in Eq. (\ref{eq:AmuBasis}), with $A$ running over all conjugacy classes of $G$ and $\mu$ over the irreducible $\beta_{g^A}$--representations of $Z^A$. It turns out that $\ket{A,\mu}$ are eigenvectors of $\ttr$ as we demand, Here we sketch the proof. In the $\ket{A,\mu}$ basis, the action of $\ttr$ becomes
\begin{align}
  \label{eq:TonAmu}
  &\ttr\ket{A,\mu}
  \nonumber\\
  =&\ttr^e P^0\ket{A,\mu}=\ttr^e\ket{A,\mu}
  \nonumber\\
  =&\frac{1}{|G|}\sum_{\substack{g\in C^A\\ h\in Z_g,\nu}}{\widetilde{\chi}^g_{\mu}(h)} \overline{\widetilde{\chi}^g_{\nu}(g^{-1}h)}\ket{A,\nu}\nonumber\\
  =&\frac{\widetilde{\chi}^{g^A}_{\mu}(g^A)}{\text{dim}_{\mu}}\ket{A,\mu},
\end{align}
where $\text{dim}_{\mu}$ is the dimension of the representation $\mu$, the second row uses Eq. (\ref{eq:STandP0}) and Eq. \eqref{eq:P0Amu}, and in the fourth equality use is made of the inverse transformation
\be\label{eq:ghToAmu}
  \ket{g,h}
  =\frac{1}{\sqrt{|G|}}\sum^{r(Z^B,\beta_g)}_{\nu=1}\,
  \overline{\widetilde{\chi}^g_{\nu}(h)}\,\ket{B,\nu},
\ee
which is defined in $\Hil^0$ only, with $g\in C^B$ is assumed. Appendix \ref{app:solModularTrans} proves Eq. \eqref{eq:TonAmu} step by step.

Therefore, the basis vectors $\ket{A,\mu}$ are indeed the eigenvectors of $\ttr$, with the eigenvalues
\be\label{eq:Teigenvalue}
\theta^A_{\mu}=\frac{\widetilde{\chi}^{g^A}_{\mu}(g^A)}{\text{dim}_{\mu}}.
\ee

Clearly, the projective characters are the ground--state wave functions of the system, collapsed in the basis vectors that are the eigenvectors of the $\ttr$ matrix.

The above calculation guides us to interpret $\theta^A_{\mu}$ from  the representation theory  as an invariant that characterizes the representation $\widetilde{\rho}^g_\mu$. More precisely, for any $g\in C^A$, the matrix $\widetilde{\rho}^{g}_{\mu}(g)$ commutes with all other matrices $\widetilde{\rho}^{g}_{\mu}(h)$ for $h\in Z_{g}$,
as\be
\label{eq:rhoggCommutation}
\widetilde{\rho}^{g}_{\mu}(g)\widetilde{\rho}^{g}_{\mu}(h)
=\frac{\beta_g(g,h)}{\beta_{g}(h,g)}
\widetilde{\rho}^{g}_{\mu}(h)\widetilde{\rho}^{g}_{\mu}(g)
=\widetilde{\rho}^{g}_{\mu}(h)\widetilde{\rho}^{g}_{\mu}(g),
\ee
where the second equality can be checked by a direct evaluation in terms of 3--cocycles. From Schur's lemma, the matrix $\widetilde{\rho}^{g}_{\mu}(g)$ is a multiple of the identity matrix
\be
\label{eq:rhoggSchurLemma}
\widetilde{\rho}^{g}_{\mu}(g)=\frac{\widetilde{\chi}^g_{\mu}(g)}{\text{dim}_{\mu}}\mathds{1}.
\ee
Moreover, by setting $h=g$ in Eq. \eqref{eq:CharacterInConjugacyClasses} and using Eq. (\ref{eq:etaggIs1}),
we find that $\chi^{xgx^{-1}}(xgx^{-1})=\chi^g(g)$, indicating that the topological number $\theta^A_{\mu}=\widetilde{\chi}^g_{\mu}(g)/\text{dim}_{\mu}$ is indeed an invariant on $C^A$, associated with the representation $\widetilde{\rho}^{g}_{\mu}$ of $Z^A$.

The topological spin $\theta^A_{\mu}$ are given in terms of 3--cocycles $\alpha$ by
\be\label{eq:thetaomegaA}
(\theta^A_{\mu})^{p_A}=\omega_A
\ee
with
\begin{align}
  \label{eq:omegaA}
  \omega_A\defeq\prod_{n=0}^{p_A-1}\alpha(g,g^n,g)
\end{align}
for conjugacy class $C^A$ of $G$, and $p_A$ the degree, i.e., the least integer such that $g^{p_A}=e$, where $g^{n}$ is the power of $g$. The $\omega_A$ is independent of the choice of $g\in C^A$ and thus a conjugacy class function. This relation is verified by applying Eq. \eqref{eq:betarepresentation} to $(\widetilde{\rho}^A_{\mu})^{p_A}=(\theta^A_{\mu})^{p_A}\mathds{1}$. Therefore, the topological spin $\theta^A_{\mu}$ takes values in the $p_A$--th roots of $\omega_A$. Moreover, each of the $p_A$ distinct $p_A$--th roots appears precisely $r(Z^A,\beta_{g^A})/p_A$ times in $\{\theta^A_{\mu}\}$ for all $\mu$, where the ratio $r(Z^A,\beta_{g^A})/p_A$ is an integer. To see this, one observes that each element in $Z_g$ can be uniquely written as ${g^n}h$ with $n=0,1,...,p_A$ for some $h\in Z_g$, and there are $p_A$ 1--dimensional representation of $Z_g$ by $\rho^j({g^n}h)=\exp(2\pi\ii{j}{n}/p_A)$. Then for each representation $\mu$ there exist a $\mu'$ such that $\widetilde{\rho}^A_{\mu'}(g^nh)=\rho^j(g^nh)\widetilde{\rho}^A_{\mu}(g^nh)$, and thus that $\theta^A_{\mu'}=\exp(2\pi\ii{n}/p_A)\theta^A_{\mu}$.

Similarly, the $\str$--matrix can also be evaluated in terms of the projective characters. We record as follows the final formula for the $\str$--matrix while detail the proof in Appendix \ref{app:solModularTrans}. The $\str$--matrix reads
\be  \label{eq:SonAmuBnu}
\begin{aligned}
s_{(A\mu)(B\nu)}=&\left\langle A,\mu\right|\str \ket{B,\nu}\\
  =&\frac{1}{|G|}\sum_{\substack{{g\in C^A,h\in C^B}\\{gh=hg}}}
  \overline{\widetilde{\chi}^{g}_{\mu}(h)}
  \overline{\widetilde{\chi}^{h}_{\nu}(g)}.
\end{aligned}
\ee
Again, we take the projective representations $\widetilde{\rho}^g_{\mu}$ such that the projective characters are related by Eq. \eqref{eq:CharacterInConjugacyClasses}.
This general result of the $\str$--matrix actually offers an answer to one of the open questions listed in Ref\cite{Coste2000}.

The mathematical significance of the $\str$ and $\ttr$--matrices is the following. They are the invariants carried by the projective representations of $Z^A$ for all conjugacy classes $C^A$, in which the $\beta_g$ functions play a crucial role. All these ingredients are well--organized by the representation theory of a twisted quantum group (or, a twisted Hopf algebra), called the twisted quantum double $D^{\alpha}(G)$ of the finite group $G$, which is parameterized by a 3--cocycle $\alpha$.
All the irreducible $\beta_{g^A}$--representations for all conjugacy class representatives $g^A$ form the linear irreducible representations of $D^{\alpha}(G)$.

Usually, the irreducible representations of a (twisted) quantum group classify the anyonic quasiparticle species. The invariants of each irreducible representation identifies the fractional topological quantum numbers of the corresponding quasiparticle. The $\str$--matrix has the origin as a braiding operation that exchanges any two of these quasiparticles, while the $\ttr$--matrix contains the statistical spins of the corresponding quasiparticles which are determined by the braiding operation. For the discussion of the $\str$ and $\ttr$--matrices for the twisted quantum double $D^{\alpha}(G)$, see Ref\cite{Dijkgraaf1991,Propitius1995,Coste2000}.
We also remark here that twisted quantum double has been used to classify confinement phases in planar physics\cite{Bais2002}.

In this section, we have reproduced from our $\str$ and $\ttr$ operators in terms of $3$--cocycles the familiar $\str$ and $\ttr$--matrices in terms of projective characters for the twisted quantum double $D^{\alpha}(G)$, which were originally obtained from representation theory, according to a braiding operation. Our calculations are carried purely in the ground--state subspace and root in the large transformation of the spatial graph of the system on a torus. We expect that the quasiparticles in the elementary excitations will be classified by the same topological numbers in the way that the GSD equals the number of the quasiparticle species, the $\str$ and $\ttr$--matrices on the ground states are the same as those of the quasiparticles.

\section{Topological Numbers and Topological Phases}\label{sec:classification}
We believe that the topological phases are classified by the topological numbers $\{\mathrm{GSD},\theta^A_{\mu},s_{(A\mu),(B\nu)}\}$. In all examples discussed in Section \ref{sec:examples}, we observe that they are classified by the third cohomology classes of $\alpha$, i.e., any two models $H_{G,\alpha}$ and $H_{G,\alpha'}$ have the same topological numbers if and only if $\alpha$ and $\alpha'$ are equivalent.

In this section, we study how the topological numbers depends on the cohomology classes of $\alpha$.

\subsection{When $3$--Cocycle is cohomologically trivial}\label{subsec:trivial3cocycle}

In Section \ref{subsec:equivModel}, we have shown that two equivalent $3$--cocycles define equivalent twisted quantum double models, which consequently should describe the same topological phase. We now study this topological phase in more details.

We begin with a special case, where the 3--cocycle of our model belongs to class of the trivial $3$--cocycle $\alpha^0$. Such a $3$--cocycle can take the form of a 3--coboundary:
\be
\label{eq:3couboundary}
\alpha(x,y,z)=\delta\beta(x,y,z)=\frac{\beta(y,z)\beta(x,yz)}{\beta(xy,z)\beta(x,y)},
\ee
where $\beta(x,y)$ is any normalized 2--cochain, i.e., any function $G\times G\rightarrow U(1)$ that satisfies $\beta(e,x)=1=\beta(x,e)$ for all $x\in G$. To be seen in Section \ref{sec:Kitaev}, such a model is equivalent to Kitaev's quantum double model.

The corresponding twisted $2$--cocycle $\beta_g$, defined in Eq. (\ref{eq:betafunction}), is automatically trivial such that it has the freedom to be written as a twisted $2$--coboundary:\be\label{eq:equivTbetaDue2Equiv3cocycle}
\beta_g(x,y)=\widetilde{\delta}\epsilon_g(x,y),
\ee
where
\be\label{eq:epsilonDue2Equiv3cocycle}
\epsilon_g(x)=\beta(x,x^{-1}gx)\beta(g,x)^{-1}
\ee
is a twisted $1$--cochain, whose twisted $2$--coboundary reads
\be\label{eq:twisted2coboundaryDue2Equiv3cocycle}
\widetilde{\delta}\epsilon_g(x,y)=\epsilon_g(x)\epsilon_g(xy)^{-1} \epsilon_{x^{-1}gx}(y),
\ee
for all $g,x,y \in G$.
From the relation \eqref{eq:epsilonDue2Equiv3cocycle}, we inevitably notice the following constant:
\be\label{eq:epsilonConst}
\epsilon_g(h)\epsilon_h(g)\equiv 1,\, \forall h\in Z_g, \text{ and }\epsilon_g(g)\equiv1,\, \forall g,h\in G.
\ee
This is indeed a constant as it is clearly independent of which $\alpha$ is picked in its equivalent class.

By the form of $\beta_g$ in Eq. \eqref{eq:equivTbetaDue2Equiv3cocycle}, the irreducible $\beta_g$--representations $\widetilde{\rho}^g_{\mu}$ of $Z_g$ are in one--to--one correspondence to the irreducible linear representations $\rho_{\mu}$, by
\be
\label{eq:BetaRepresToLinearRepres}
\widetilde{\rho}^g_{\mu}(h)=\epsilon_g(h)\rho^g_{\mu}(h)
\ee
for all $h\in Z_g$, by which one can directly check the definition property \eqref{eq:betarepresentation}.

The ground states in Eq. (\ref{eq:AmuBasis}) now become
\be
\label{eq:AmubaisTrivialAlpha}
\ket{A,\mu}=\sqrt{\frac{1}{|G|}}\sum_{g\in C^A,h\in Z_g}\epsilon_g(h)\chi^g_{\mu}(h)\ket{g,h},
\ee
where $\chi^g_{\mu}=\mathrm{tr}\rho^g_{\mu}$ is the usual character.

Eq. \eqref{eq:twisted2coboundaryDue2Equiv3cocycle} implies that $\beta_g(x,y)=\beta_g(y,x)$ for all $x,y\in Z_g$ with $xy=yx$. This means that all elements in $Z_g$ are $\beta_g$--regular and that $\eta_g(h,x)\equiv 1$ for all $x\in Z_{g,h}$. Hence, the GSD in Eq. (\ref{eq:GSDbeta}) now reads
\begin{align}
\label{eq:GSDforTrivialAlpha}
  \text{GSD}=\sum_{g\in G}\sum_{h\in Z_g}\sum_{x\in Z_{g,h}}\frac{1}{|G|} =\left|\frac{\text{Hom}(\pi_1(T^2),G)}{conj}\right|,
\end{align}
where the quotient means the equivalence $(g,h)\sim (xgx^{-1},xhx^{-1})$ for any $x\in G$.

By \eqref{eq:BetaRepresToLinearRepres} and the constraint \eqref{eq:epsilonConst} of the $\epsilon_a$, the topological numbers $\theta^A_{\mu}$ and the $\str$--matrix are expressed by
\be
\label{eq:TTrivialAlpha}
\theta^A_{\mu}
=\frac{\widetilde{\chi}^{g^A}_{\mu}(g^A)}{\text{dim}_{\mu}}
=\frac{\chi^{g^A}_{\mu}(g^A)}{\text{dim}_{\mu}}
\ee
and
\be  \label{eq:STrivialAlpha}
\begin{aligned}
s_{(A\mu)(B\nu)}=\frac{1}{|G|}\sum_{\substack{{g\in C^A,h\in C^B}\\{gh=hg}}}
  \overline{\chi^{g}_{\mu}(h)\chi^{h}_{\nu}(g)}.
\end{aligned}
\ee

When the 3--cocycle is $\alpha^0=1$, the ground states are labeled by the usual irreducible linear representations of all the centralizers $Z^A\subseteq G$. For $\alpha\in [\alpha^0]$ but $\alpha\neq \alpha^0$, the ground states are labeled by projective representations, which are related to the corresponding linear representations by merely a phase, of all the centralizers, since $\beta_g\neq 1$; however, all topological numbers are the same as those in the case of $\alpha^0$, as they should be.

\subsection{When twisted $2$--cocycle is cohomologically trivial}\label{subsec:trivialBeta}

When the $3$--cocycle $\alpha\notin [\alpha^0]$, it could still be ``trivial" at a lower level, in the mathematical sense that the $2$--cocycle $\beta_a$ it defines in Eq. (\ref{eq:betafunction}) is cohomologically trivial,
i.e., this $\beta_a$ is actually a twisted $2$--coboundary:
\be\label{eq:BetaToEpsilon}
  \beta_a(x,y)=\widetilde{\delta}\epsilon_a(x,y),
\ee
for all $a,x,y\in G$. Note that however, the twisted $1$--cochain $\epsilon_a$ in this case does not necessarily have the closed form in Eq. \eqref{eq:epsilonDue2Equiv3cocycle} in general because $\alpha$ is not cohomologically trivial; hence, $\epsilon_g(g)\neq 1$ in general. The twisted $2$--cocycle condition in  Eq. \eqref{eq:betacondition} yields
\be\label{eq:EtaForTrivialBeta}
  \eta^g(h,x)=\frac{\epsilon_{xgx^{-1}}(xhx^{-1})}{\epsilon_g(h)}
\ee
for all $h\in Z_g$ and $x \in G$, which is unity for all $x\in Z_{g,h}$.

Similar to the previous case, the ground--state subspace in the current case are also spanned by the basis vectors $\ket{A,\mu}$ of the form in Eq. \eqref{eq:AmubaisTrivialAlpha}, where $\mu$ labels the $\beta_g$--representations $\widetilde{\rho}^g_{\mu}$ of $Z_g$, which are again related to the usual linear representations $\rho_{\mu}$ by Eq. \eqref{eq:BetaRepresToLinearRepres}.

Since Eq. \eqref{eq:EtaForTrivialBeta} renders all elements in $Z_g$ $\beta_g$--regular, as before, the GSD in this case copies that in Eq. \eqref{eq:GSDforTrivialAlpha}.

By the form \eqref{eq:BetaRepresToLinearRepres}, the topological numbers $\theta^A_{\mu}$ and the $s$ matrix are related to $\epsilon_g(h)$ by,
\be
\label{eq:TTrivialBeta}
\theta^A_{\mu}
=\epsilon_{g^A}(g^A)\frac{\chi^{g^A}_{\mu}(g^A)}{\text{dim}_{\mu}}
\ee
and
\be  \label{eq:STrivialBeta}
\begin{aligned}
s_{(A\mu)(B\nu)}=\frac{1}{|G|}\sum_{\substack{{g\in C^A,h\in C^B}\\{gh=hg}}}
  \overline{\chi^{g}_{\mu}(h)\chi^{h}_{\nu}(g)}\,\overline{\epsilon_g(h)\epsilon_h(g)}.
\end{aligned}
\ee

The GSD is the same as the one in the $[\alpha^0]$--model, a result of the cohomologically trivial $\beta_g$.
Nevertheless, the topological numbers $\theta^A_{\mu}$ and the $\str$--matrix characterize the difference between the current model and the untwisted model.
The Eq. \eqref{eq:TTrivialBeta} exhibits the physical relevance of the phases $\epsilon$: They endow each basis ground state $\ket{A,\mu}$ an extra spin factor $\epsilon_{g^A}(g^A)$ in addition to the topological spins in Eq. \eqref{eq:TTrivialAlpha}.

\subsection{General properties of the topological numbers}\label{subsec:GeneralCase}
Since our model is defined in terms of $3$--cocycles $\alpha$, all of the topological numbers discovered must depend on $\alpha$ at the bottom level. Notwithstanding this, the special cases belabored above imply that in general, some topological numbers may have a higher--level dependence of $\alpha$ via the quantities derived from $\alpha$, such as the equivalence class of $\alpha$ and the twisted $2$--cocycles. We now list as follows the general characteristic properties of the topological numbers $\{\text{GSD},\theta^A_{\mu},s_{(A\mu)(B\nu)}\}$ of our model on torus.
\begin{enumerate}
\item
The set $\{\text{GSD},\theta^A_{\mu},s_{(A\mu)(B\nu)}\}$ of all topological numbers depends on the equivalence class $[\alpha]\in H^3(G,U(1))$ of the $\alpha$ that defines the model.
\item The GSD depends only on the equivalence classes $[\beta_{g^A}]\in H^2(Z^A,U(1))$ for $g^A\in R_C,\ A=1,\dots,r(G)$, independent of the representatives $g^A$.
\item The topological spins $\{\theta^A_{\mu}\}$ are classified by $\{r(Z^A,\beta_A),\omega_A\}$ for all conjugacy classes $C^A$.
\end{enumerate}
We now elaborate on the two properties above in order.

For property 1, one can check that any two equivalent $3$--cocycles $\alpha'$ and $\alpha$ related by $\alpha'=\alpha\delta\beta$, as in Eq. (\ref{eq:equiv3cocycle}), give rise to two equivalent twisted $2$--cocycles $\beta'_a$ and $\beta_a$ related by a twisted $2$--coboundary as follows.
\be\label{eq:equivTbetaDue2Equiv3cocycle}
\beta'_a(b,c)=\beta_a(b,c)\widetilde{\delta}\epsilon_a(b,c),
\ee
where the twisted 2--coboundary $\widetilde{\delta}\epsilon$ happen to be those defined in Eq. \eqref{eq:epsilonDue2Equiv3cocycle} and Eq. (\ref{eq:twisted2coboundaryDue2Equiv3cocycle}).
The constraint  in Eq. (\ref{eq:twisted2coboundaryDue2Equiv3cocycle}) implies that $\text{GSD}'=\text{GSD}$, ${\theta'}^A_{\mu}={\theta}^A_{\mu}$, and $s'_{(A\mu)(B\nu)}=s_{(A\mu)(B\nu)}$.

Property 2 is manifest in Eq. (\ref{eq:GSDconjugacy}), in which the GSD is a sum of the numbers $r(Z^A,\beta_{g^A})$ over all conjugacy classes $C^A$ of $G$. One further confirms this by looking at Eq. (\ref{eq:GSDbeta}), where the GSD is a sum of the phases $\rho^g$ as a function of $\beta_g$ defined in Eq. (\ref{eq:rhoEta}), which is a $1$--dimensional representation of $Z_{g,h}$ with $h\in Z_g$.   Suppose two 3--cocycles $\alpha$ and $\alpha'$, not necessarily inequivalent, that define $\beta_a$ and $\beta'_a$ equivalent for all $a\in G$, in the sense that there exists functions $\epsilon_a:G\rightarrow U(1)$ parameterized by $a\in G$, which satisfies $\epsilon_a(e)=1$, such that $\beta_a$ and $\beta'_a$ are different by merely a twisted $2$--coboundary, namely,
\be\label{eq:BetaBetaEpsilon}
  \beta'_a(x,y)=\frac{\epsilon_a(x)\epsilon_{x^{-1}ax}(y)}{\epsilon_a(xy)}\beta_a(x,y),
\ee
for all $a,x,y \in G$. In the restriction to $x\in Z_a$ and $y\in Z_{a,x}$, one sees that $\beta'_a$ and $\beta_a$ are equivalent up to a usual 2--coboundary over $Z_a$. By Eq. (\ref{eq:BetaBetaEpsilon}) and Eq. \eqref{eq:GSDbeta}, we obtain
\[
\rho'^g(h,x)=\rho^g(h,x),
\]
which verifies that $\mathrm{GSD}'=\mathrm{GSD}$. Furthermore, Eq. (\ref{eq:SumBetaToDelta}) states that the sum of $\rho^g(h,x)$ over $Z_{g,h}$ is either unity or zero, regardless of which representative $g\in C^A$ is chosen. Therefore, property 1 holds, as expected from the analysis in Section \ref{subsec:equivModel}.

Property 3 is straightforward. As discussed in Section \ref{sec:fractionaltopologicalnumbers}, each of the $p_A$ distinct $p_A$--roots appears precisely $r(Z^A,\beta^{g^A})/p_A$ times in $\{\theta^A_{\mu}\}$ for all $\mu$. Then $\omega_{\alpha}=\omega_{\alpha'}$ yields $\theta^A_{\mu}={\theta'}^A_{\mu}$ up to a relabeling of $\mu=0,1,...,r(Z^A,\beta_A)-1$, assuming $r(Z^A,\beta_A)=r(Z^A,\beta'_A)$.

\section{Examples}\label{sec:examples}
In this section, we explicitly compute various examples of our model, making contact with the structure discussed in the previous sections. We declare that some parts of the results here are adapted from certain known results\cite{Dijkgraaf1989,Dijkgraaf1990,Propitius1995,Coste2000} of $3$--cocycles and projective representations that were otherwise discovered in studies of conformal field theory by means of representation theory, which now, however, as we show, become applicable to describing topological phases, owing to the lucid connection revealed by our model between topological phases and group cohomology.
In all examples to be discussed in this section, any two models $H_{G,\alpha}$ and $H_{G,\alpha'}$ have the same GSD and satisfy the condition in Eq. \eqref{eq:epsilonConst} if and only if $\alpha$ and $\alpha'$ are equivalent via Eq. \eqref{eq:equiv3cocycle}.

\subsection{$G=\Z_m$}
When $G$ is the cyclic group $\Z_m$ of order $m$, it is known that the cohomology group is $H^3\left(\Z_m,U(1)\right)\cong \Z_m$, and hence there are $m$ inequivalent classes of 3-cocycles\cite{Moore1989}.

We denote by $a\in\{0,1,...,m-1\}$ the elements of $\Z_m$. The multiplication in $\Z_m$ is $a\cdot b=a+b\mod{m}$.

The $m$ cohomology classes of $3$-cocycles are generated by\cite{Moore1989,Wakui1992,Coste2000}
\be\label{eq:ZmCocycle}
\begin{aligned}
  &\alpha:\Z_m\times \Z_m \times \Z_m\rightarrow U(1)\\
  &\alpha(a,b,c)=\exp\left\{\frac{2{\pi}\ii}{m^2}a\left[b+c-\langle b+c\rangle)\right]\right\},
\end{aligned}
\ee
where $a,b,c\in\Z_m$, and $\langle a\rangle$ is the residue of $a \mod{m}$. By ``generated'' we mean that the $m$ classes of 3-cocycles can be represented by the powers of $\alpha$ in Eq. \eqref{eq:ZmCocycle} as
\be\label{eq:PowerCocycle}
  \{\alpha^k(a,b,c)\vert\; k=0,1,...,m-1\}.
\ee
One verifies that $\alpha^m=1$ and $\alpha^k\alpha^l=\alpha^{k+l}$.

The $\beta_g$ for each $\alpha^k$ has the form \eqref{eq:BetaToEpsilon}, with
\be
  \epsilon_a(b)=\exp\{\frac{2\pi\ii}{m^2}k \times a\times b\}.
\ee
Hence each $\alpha^k$ gives rise to $\text{GSD}=m^2$.

The linear characters are $\chi_{\mu}(a)=\exp(2\pi\ii \mu a/m)$, for $\mu=0,1,...,m-1$. Applying  this to Eqs. \eqref{eq:TTrivialBeta} and \eqref{eq:STrivialBeta} yields
\begin{align}
  &\theta^a_{\mu}=\exp\left[2\pi\ii (k a^2+ma\mu)/m^2 \right]
  \nonumber\\
  &s_{(a\mu),(b\nu)}=\frac{1}{m}\exp\left\{-2\pi\ii[2kab+m(a\nu+b\mu)]/m^2\right\}
\end{align}

\subsection{$G=\Z_2$ and $\Z_3$}\label{subsec:ExZ2Z3}
In the special case where $G$ is the simplest finite group $\Z_2$, there are two classes of 3-cocycles. The first is the trivial one,
namely,
\begin{align}
  \label{eq:Z2Cocycle0}
  \alpha^0(a,b,c)=1,
\end{align}
where $a,b,c=0,1$ are elements of $\Z_2$.

The second one is given as follows, according to Eq. \eqref{eq:ZmCocycle},
\be\label{eq:Z2Cocycle1}
\begin{aligned}
  &\alpha^1(1,1,1)=-1,\\
  &\alpha^1(a,b,c)=1, \text{  for all other }a,b,c
\end{aligned}
\ee

We recognize the $\alpha^0$ model as Kitaev's toric code model, or dual to the Levin-Wen model with the $6j$ symbols determined by irreducible representations of $\Z_2$.
The $\alpha^1$ model is dual to the Levin-Wen model with the $6j$ symbols determined by the semisimple irreducible representations of the quantum group $\mathcal{U}_q(sl(2,\mathds{C}))$ for $q=\exp(\ii\pi/3)$, up to a local unitary transformation (see Section \ref{sec:LW}). These form the complete solutions to Levin--Wen models with $\Z_2$ fusion rule.
The topological spins $\theta^x_{\mu}$ for ground states $\left(\substack{x\\ \mu}\right)$ are given in Table \ref{tab:thetaZ2}.
\\
\begin{table}
\begin{center}
\begin{tabular}{lccccc}
\hline\hline
      & & $\left(\substack{0\\ 0}\right)$ & $\left(\substack{0\\1}\right)$ &  $\left(\substack{1\\ 0}\right)$ &  $\left(\substack{1\\ 1}\right)$ \\\hline
$\alpha^0$  & & 1 & 1 & 1 & $-1$\\
$\alpha^1$  & & 1 & 1 & $\ii$  & $-\ii$\\
\hline
\end{tabular}\caption{ $\theta^x_{\mu}$ for models with $G=\Z_2$}\label{tab:thetaZ2}
\end{center}
\end{table}

When $G=\Z_3$, there are three classes of $3$--cocycles, denoted by $\alpha^0, \alpha^1$, and $\alpha^2$. The $\alpha^0$ model is dual to the Levin--Wen model with the $6j$ symbols determined by the irreducible representations of $\Z_3$ (or equivalent to Kitaev's model with $\Z_3$), whereas the $\alpha^1$ and $\alpha^2$ models are not dual to any Levin--Wen models with the $\Z_3$ fusion rule, as there is only one Levin--Wen model in the circumstance. The topological spins $\theta^x_{\mu}$ are also tabulated in Table \ref{tab:thetaZ3}.
\\
\begin{table}
\begin{center}
\begin{tabular}{lcccccccccc}
\hline\hline
       & & $\left(\substack{0\\ 0}\right)$ & $\left(\substack{0\\1}\right)$ &  $\left(\substack{0\\ 2}\right)$ &  $\left(\substack{1\\ 0}\right)$ & $\left(\substack{1\\ 1}\right)$ & $\left(\substack{1\\ 2}\right)$ & $\left(\substack{2\\ 0}\right)$ & $\left(\substack{2\\ 1}\right)$ & $\left(\substack{2\\ 2}\right)$ \\\hline
$\alpha^0$  & & 1 & 1 & 1 & 1 & $\text{e}^{\frac{2\pi\ii}{3}}$ & $\text{e}^{-\frac{2\pi\ii}{3}}$ & 1 & $\text{e}^{-\frac{2\pi\ii}{3}}$ & $\text{e}^{\frac{2\pi\ii}{3}}$ \\
$\alpha^1$  & & 1 & 1 & 1  & $\text{e}^{\frac{2\pi\ii}{9}}$ & $\text{e}^{\frac{8\pi\ii}{9}}$ & $\text{e}^{-\frac{4\pi\ii}{9}}$ & $\text{e}^{\frac{8\pi\ii}{9}}$ & $\text{e}^{\frac{2\pi\ii}{9}}$ & $\text{e}^{-\frac{4\pi\ii}{9}}$ \\
$\alpha^2$  & & 1 & 1 & 1  & $\text{e}^{\frac{4\pi\ii}{9}}$ & $\text{e}^{-\frac{8\pi\ii}{9}}$ & $\text{e}^{-\frac{2\pi\ii}{9}}$ & $\text{e}^{-\frac{2\pi\ii}{9}}$ & $\text{e}^{-\frac{8\pi\ii}{9}}$ & $\text{e}^{\frac{4\pi\ii}{9}}$ \\
\hline
\end{tabular}\caption{ $\theta^x_{\mu}$ for models with $G=\Z_3$}\label{tab:thetaZ3}
\end{center}
\end{table}

Consider the complex conjugation $\mathcal{K}$. The models $H_{\Z_2,\alpha^0}$, $H_{\Z_2,\alpha^1}$ and $H_{\Z_3,\alpha^0}$ are invariant under $\mathcal{K}$, whereas the models $H_{\Z_3,\alpha^1}$ and $H_{\Z_3,\alpha^2}$ are not.  The complex conjugation $\mathcal{K}$ transforms the topological spins in the $H_{\Z_3,\alpha^1}$ model to those in the $H_{\Z_3,\alpha^2}$ model, as seen in Table \ref{tab:thetaZ3}. The $H_{\Z_3,\alpha^1}$ and $H_{\Z_3,\alpha^2}$ models are the simplest models that break the complex conjugation symmetry which persists in the Levin--Wen models.

\subsection{$G=\Z_m \times \Z_m$}
The simplest non--cyclic Abelian group is $G=\Z_m^2$. But
nothing is really new compared with the case where $G=\Z_m$.
The cohomology group $H^3(\Z_m^2,U(1))=\Z_m^3$ has three generators.
We will label the group elements in $G=\Z_m^2$ as pairs $a=(a_1,a_2)$
with $a_1,a_2=0,1,...,m-1$. The multiplication is the obvious one
$(a_1,a_2)(b_1,b_2)=(a_1+b_1\text{ mod }m,a_2+b_2\text{ mod }m)$.
The three 3--cocycle generators are
\be  \label{eq:ZmZmgenerators}
\begin{aligned}
  &\alpha_I^{(1)}(a,b,c)=\exp\{\frac{2{\pi}\ii}{m^2}a_1(b_1+c_1-\langle b_1+c_1\rangle)\},\\
  &\alpha_I^{(2)}(a,b,c)=\exp\{\frac{2{\pi}\ii}{m^2}a_2(b_2+c_2-\langle b_2+c_2\rangle)\},\\
  &\alpha_{II}^{(12)}(a,b,c)=\exp\{\frac{2{\pi}\ii}{m^2}a_1(b_2+c_2-\langle b_2+c_2\rangle)\},
\end{aligned}
\ee
where $\langle x\rangle=x \text{ mod }m$ is the residue of $x$.
The $m^3$ classes of 3--cocycles are the products of powers of these three
generators.
The $\beta_a$ function for all these three generators has the form \eqref{eq:BetaToEpsilon}, with
\be  \label{eq:ZmZmEpsilon}
\begin{aligned}
  &{(\epsilon_I^{(1)})}_{a}(b)=\exp\{\frac{2{\pi}\ii}{m^2}a_1 \times b_1\},\\
  &{(\epsilon_I^{(2)})}_{a}(b)=\exp\{\frac{2{\pi}\ii}{m^2}a_2 \times b_2\},\\
  &{(\epsilon_{II}^{(12)})}_{a}(b)=\exp\{\frac{2{\pi}\ii}{m^2}a_1 \times b_2\}.
\end{aligned}
\ee
Therefore the associated $\beta_a$ for all 3--cocycles are equivalent,
and correspond to the trivial element in $H^2(\Z_m^2,U(1))=\Z_m$,
though the second cohomology group itself is non-trivial.

We conclude the models specified by all 3--cocycles have $\text{GSD}=m^4$.

\subsection{$G=\Z_m \times \Z_m \times \Z_m$}

When it comes to the case of $G=\Z_m^3$, things become more interesting.
We label the group elements by triples $a=(a_1,a_2,a_3)$ with
$a_1,a_2,a_3=0,1,...,m-1$.
The cohomology group $H^3(\Z_m^3,U(1))=\Z_m^7$ has seven generators,
\be\label{eq:ZmZmZmgenerators}
\begin{aligned}
  &\alpha_I^{(i)}(a,b,c)=\exp\{\frac{2{\pi}\ii}{m^2}a_i(b_i+c_i-\langle b_i+c_i\rangle)\}, \\
  &\alpha_{II}^{(ij)}(a,b,c)=\exp\{\frac{2{\pi}\ii}{m^2}a_i(b_j+c_j-\langle b_j+c_j\rangle)\}, \\
  &\alpha_{III}(a,b,c)=\exp\{\frac{2{\pi} \ii}{m}a_1 b_2 c_3\},
\end{aligned}
\ee
where $1\leq i\leq 3$ and $1\leq i\leq j\leq 3$ are assumed respectively in the first two lines, and $\langle x\rangle$ is the residue of $x \mod{m}$.

The $\beta_a$ function for the first two types has the form of Eq. \eqref{eq:BetaToEpsilon}, with
\be  \label{eq:ZmZmZmEpsilon}
\begin{aligned}
  &{(\epsilon_I^{(j)})}_{a}(b)=\exp\{\frac{2{\pi}\ii}{m^2}a_j \times b_j\},\\
  &{(\epsilon_{II}^{(jk)})}_{a}(b)=\exp\{\frac{2{\pi}\ii}{m^2}a_j \times b_k\},
\end{aligned}
\ee

But $\alpha_{III}$ cannot be decomposed as in Eq. \eqref{eq:BetaToEpsilon}.
This provides further classification of the models defined
by the $m^7$ 3--cocycle representatives.
Precisely speaking, each 3--cocycle class has a representative of the form
of a product of powers of the seven generators.
They are classified into $m$ classes, depending on the power of $\alpha_{III}$
they contain, namely the form $\alpha_{III}^q$ for $q=0,1,...,m-1$.
The 2--cocycle for each $\alpha_{III}^q$ is given by
\begin{align}
  \label{eq:ZmZmZmIIIbeta}
  \beta_a(b,c)=\exp\{\frac{2{\pi}\ii}{m}q(a_1b_2c_3-b_1a_2c_3+b_1c_2a_3)\}.
\end{align}
If $\alpha_1$ and $\alpha_2$
belong to two different classes with the powers $q_1$ and $q_2\neq q_1$,
the corresponding models have inequal GSD in general.

Then the GSD in Eq. (\ref{eq:GSDalpha}) for the model specified by $\alpha$ depends
on the class only, i.e., the power $q$ of $\alpha_{III}$.
Specifically, the GSD is determined by Eq. \eqref{eq:GSDbeta}, with
$\beta_a$ given by Eq. \eqref{eq:ZmZmZmIIIbeta}.
According to the analysis in Section \ref{subsec:GSDtorus}, computing the GSD
amounts to count the number of $\beta_a$--regular conjugacy classes of
$Z_a$, for each conjugacy class representative $a$ of $G$, then sum over all
$a$. The GSD is
\begin{align}
  \label{eq:GSDZmZmZm}
  \text{GSD}=\frac{m^6}{f^3}
  \prod_p \left[(p^{k_p}-1)(1+p^{-1}+p^{-2})+1\right],
\end{align}
where $f=m/\gcd(q,m)$ is the greatest common divisor of $q$ and $m$, $p$ and $k_p$
are the prime number and the corresponding power in the prime decomposition
$f=\prod_p p^{k_p}$. The GSD in $m=2,3,4,5$ cases are given in Table \ref{tab:GSDZmZmZm},
which can be computed either from the original formula Eq. \eqref{eq:GSDalpha}
or the final reduced formula Eq. \eqref{eq:GSDZmZmZm}.
\\
\begin{table}
\begin{center}
\begin{tabular}{lcccccc}
\hline\hline
       & & $q=0$ & $q=1$ &  $q=2$ &  $q=3$ &  $q=4$ \\\hline
$m=2$  & & 64 & 22 & & &\\
$m=3$  & & 729 & 105 & 105 & &\\
$m=4$  & & 4096 & 400 & 1408 & 400 &\\
$m=5$  & & 15625 & 745 & 745 & 745 & 745\\
\hline
\end{tabular}\caption{ GSD with $G=\Z_m\times\Z_m\times\Z_m$}\label{tab:GSDZmZmZm}
\end{center}
\end{table}

The models with $q=0$, i.e. with only $\alpha_I$ and $\alpha_{II}$ as the defining $3$--cocycles, possess similar topological numbers $\{\text{GSD},\theta,s\}$ as those in the previous examples for $G=\Z_m$ and $G=\Z_m\times\Z_m$. The $\theta$ and $S$ are derived from the linear characters
\be
\label{eq:ZmZmZmCharacters}
\chi_{\mu}(x)=\exp\left\{\frac{2\pi \ii}{m}(\mu_1 x_1+\mu_2 x_2+\mu_3 x_3)\right\},
\ee
with $\mu_1,\mu_2,\mu_3=0,1,...,m-1$ labeling the irreducible representations of $G=\Z_m\times\Z_m\times\Z_m$.

The models with $q\neq 0$, i.e., involving $\alpha_{III}$ in the defining $3$--cocycle, possess more interesting topological numbers. Though the finite group $G$ is Abelian, the topological charges of the ground states are non-Abelian.

\subsection{$G=\Z_m^n$}
We now study the Abelian non-cyclic group $\Z_m^n$ for some integer $m$ and $n$ more generally, whose special cases where $n\leq 3$ were investigated in previous sub sections. For $n>3$,
things are similar to $G=\Z_m^3$ case. The second and the third
cohomology groups are
\begin{align}
\label{ZmnCohomologyGroups}
&H^2(\Z_m^n,U(1))\simeq \Z_m^{n(n-1)/2},
\nonumber\\
&H^3(\Z_m^n,U(1))\simeq \Z_m^{n+n(n-1)/2+n(n-1)(n-2)/3!}.
\end{align}
Like in the $\Z_m^3$ case, there are three types of 3--cocycles
taking the following form
\be\label{eq:Zmngenerators}
\begin{aligned}
  &\alpha_I^{(i)}(a,b,c)=\exp\{\frac{2{\pi}\ii}{m^2}a_i(b_i+c_i-\langle b_i+c_i\rangle)\}, \\
  &\alpha_{II}^{(ij)}(a,b,c)=\exp\{\frac{2{\pi}\ii}{m^2}a_i(b_j+c_j-\langle b_j+c_j\rangle)\}, \\
  &\alpha_{III}^{(ijk)}(a,b,c)=\exp\{\frac{2{\pi} \ii}{m}a_i b_j c_k\},
\end{aligned}
\ee
where $1\leq i\leq 3$, $1\leq i\leq j\leq 3$, and $1\leq i\leq j\leq k\leq 3$ are assumed respectively in the three lines. the number of three types of generators are $n$, $n(n-1)/2$, and $n(n-1)(n-2)/3!$,
corresponding to the number of generators in $H^3(\Z_m^n,U(1))$.
Topological phases are classified by the elements in $H^3(\Z_m^n,U(1))$.
The 2--cocycles $\beta_a$ obtained from type--III generators correspond
to non-trivial elements in $H^2(\Z_m^n,U(1))$.

\subsection{$G=D_m$ for odd $m$}

The simplest non-Abelian finite groups are the dihedral groups $D_m$.
Specifically, $D_3$ (equivalent to the permutation group $S_3$)
is the simplest non-Abelian group.
We will only consider odd $m$  here, in which all 3--cocycles
can be decomposed as in \eqref{eq:BetaToEpsilon}.

We will label the elements in $D_m$ by pairs $(A,a)$
for $A=0,1$ and $a=0,1,...,m-1$. The multiplication law
takes the form
\be\label{eq:DmMultiplication}
  (A,a)(B,b)=({\langle A+B\rangle}_2,{\langle (-1)^B a+b\rangle}_m)
\ee
where ${\langle x\rangle}_2=x \text{ mod }2$ and ${\langle x\rangle}_m=x \text{ mod }m$ means taking the residue.

The cohomology group $H^3(D_m,U(1))=\Z_{2m}$ has only one generator of the 3--cocycles:
\begin{widetext}
\begin{align}
  \label{eq:Dmgenerator}
  \alpha\bigl((A,a),(B,b),(C,c)\bigr)
  =\exp\left\{\frac{2{\pi}\ii}{m^2}
  \left[(-1)^{B+C}a\left[(-1)^C b+c-{\langle(-1)^C b+c\rangle}_m\right] +\frac{m^2}{2}ABC\right]\right\}.
\end{align}
\end{widetext}

The representatives of each 3--cocycle class takes the form $\alpha^p$ for $p=0,1,...,2m-1$.

The $\beta_a$ for this 3--cocycle generator takes the form \eqref{eq:BetaToEpsilon}, with
\begin{align}
  \label{eq:DmEpsilon}
  &\epsilon_{(A,a)}\bigl((B,b)\bigr)
  \nonumber\\
  =&\exp\left(\frac{2{\pi}\ii}{m^2}\left\{b\left[(-1)^B a+2Ab\right]-Ab^2+\frac{m^2}{4}AB\right\}\right),
\end{align}
for all $(A,a)$ and $(B,b)$ that satisfy $(A,a)(B,b)=(B,b)(A,a)$.
The GSD is the same for all 3--cocycles, and are given by $\text{GSD}=\tfrac{m^2+7}{2}$.

\section{Kitaev's quantum double model: $\alpha$ is trivial}\label{sec:Kitaev}

In this section, we show that in the special case where the
3-cocycle is trivial, our model becomes the Kitaev's quantum
double(QD) model. By ``trivial'' we mean that the 3--cocycle takes the constant value 1,
\begin{equation}
  \label{TrivialCocycle}
  \alpha^0(x,y,z)=1, \text{ for all }x,y,z\in G
\end{equation}

With this $\alpha^0$, the definition
\eqref{eq:Avg} of $A_v$ operator is reduced to
\be\label{BphAlpha0}
A_{v_3}\BLvert \threeTriangles{v_1}{v_2}{v_3}{v_4}{2} \Brangle
=\BLvert \threeTriangles{v_1}{v_2}{v'_3}{v_4}{2} \Brangle,
\ee

Then the model defined by \eqref{eq:Hamiltonian} and \eqref{IndexByLink}
becomes the familiar Kitaev's quantum double model on triangle graphs. With a nontrivial 3-cocycle $\alpha$, our model can be
viewed as the twisted version of Kitaev's QD model, where
the twisting is specified by the 3-cocycle $\alpha$.
We will explain the twisting
in more detail in the next section.

To gain more intuition, we would like to briefly
review Kitaev's quantum double
model in the language of gauge theory.

To set up a gauge theory on the graph $\Gamma$,
we need to specify the connections and the gauge transformations.
Each basis vector in
Eq. \eqref{IndexByLink} corresponds to
a connection, namely, an assignment $g:E\rightarrow{G}$ to each edge $e$ of $\Gamma$
a group element $g_e$ of $G$.
A gauge transformation $h$ on $\Gamma$ is an
assignment to each vertex $v$ a group element $h_v$ of $G$.
The action $L(h)$ of a gauge transformation $h$ on a connection $g$
is given by $[L(h) g]_e=g_{t(e)}g_e{g^{-1}_{s(e)}}$ for each $e$,
where $s(e)$ and $t(e)$ are the starting and ending vertices
of the edge $e$.
For example, on one edge $e$ orienting from $v_1$ to $v_2$,
the action of a gauge transformation $h$
is
\begin{align}
  \label{GaugeTransform}
  L(h): \BLvert\HorGaugeTransformYY{0.5}{v_1}{v_2}{g_e}\Brangle
  \rightarrow \BLvert\HorGaugeTransformYY{0.5}{v_1}{v_2}{h_{v_2}g_e h_{v_1}^{-1}}\Brangle
\end{align}

The action of any gauge transformation can be decomposed into
local operators defined at each vertex.
We denote by $L_v(h_v)$ the action of a
local gauge transformation of at vertex $v$, which is defined as,
\begin{align}
  \label{LocalGaugeTransform}
  L_v(h_v): \BLvert\TriVertex{g_1}{g_2}{g_3}\Brangle
  \rightarrow \BLvert\TriVertex{h_v g_1}{h_v g_2}{h_v g_3}\Brangle
\end{align}

The Hamiltonian of Kitaev's QD model is
\be
H=-\sum_{v\in{V}}A_v-\sum_{f\in{F}}B_f
\ee

It includes two types of local operators $A_v$ and $B_f$.
The operator $A_v$ at vertex $v$ defined by
\be A_v=|G|^{-1}\sum_{h_v\in{G}}L_v(h_v),\ee
is an average of all local gauge transformations at $v$.
This is the same as Eq. \eqref{BphAlpha0}
By checking that
$L_v(h'_v)A_v=|G|^{-1}\sum_{h_v\in{G}}L_v(h'_vh_v)=|G|^{-1}\sum_{h_v\in{G}}L_v(h_v)=A_v$,
we see this is the projector that projects onto states that are invariant
under the local gauge transformation $L_v(h'_v)$ at vertex $v$ for any $h'_v\in{G}$.
Therefore, $A_v$ prefers gauge symmetry at vertex $v$. While the gauge symmetry broken
states are allowed, it costs a energy of 1 to break the gauge symmetry.
An important consequence of this gauge symmetry breaking is that
a quantum number emerges at vertex $v$, and it is classified by
the representations of the gauge group $G$.
This quantum number identifies a quasiparticle at vertex $v$.
The group element $g_e$ represent the action on the states of
the parallel transport of this emergent quasiparticle
along the edge $e$ of the graph.

The operator $B_f$ on face $f$ is defined via
\begin{align}
  \label{ActionOfBfInKitaevModel}
  B_f\BLvert \oneTriangleInKitaevModel \Brangle
  =\delta_{g_1\cdot g_2 \cdot g_3}
  \BLvert \oneTriangleInKitaevModel\Brangle,
\end{align}
which is the same as Eq. \eqref{eq:actionOfBf}.

Here $g_1 g_2 g_3$ is the holonomy around the face $f$, and the delta function $\delta_{a}=1$ if the group element $a$ equals the identity element $e$ in $G$ and $0$ otherwise. The delta function can be expanded in terms of characters,
\be
\delta_{g}=|G|^{-1}\sum_{\rho\in{\mathrm{Irrep}(G)}}\dim_{\rho}\chi^{\rho}(g),
\nonumber
\ee
where $\mathrm{Irrep}(G)$ is the set of all irreducible representations of $G$,
$\dim_{\rho}$ the dimension of the representation $\rho$,
and $\chi^{\rho}$ the character of the $\rho$.
Thus $B_f$ is a projector that measures whether the holonomy around the face $f$ is trivial or not.

Returning to the cases where $\alpha$ is in general nontrivial,
our model \eqref{eq:Hamiltonian} may be viewed as the twisted version
of Kitaev's QD model.
In this interpretation,
$A_v^g$ is the action of the twisted gauge transformation at $v$, and
the $A_v$ is the average of all local twisted gauge transformations.
To make this interpretation precise, we need to study the
algebra of all local operators, which is the main task of the next section.


\section{Relation to Dijkgraaf--Witten Topological gauge theory and conformal field theory}\label{sec:relation2DW}
In this section, we dwell on the relation between our model, a lattice realization due to Dijkgraaf and Witten  of  topological Chern--Simons gauge theories, and conformal field theories.

We begin with a quick review of the gist of the part of Dijkgraaf--Witten gauge theories that is relevant to our model. In Ref\cite{Dijkgraaf1990}, Dijkgraaf and Witten established a correspondence between the three dimensional Chern--Simons gauge theories with a compact gauge group $G$ and the two dimensional sigma models with Wess-Zumino interactions of the group $G$, in the sense that there is a natural map from the cohomology group $H^4(BG,\Z)$, which classifies the Chern--Simons theories, and the group $H^3(G,\Z)$, which classifies the Wess--Zumino interactions. The classifying space of the group $G$ is denoted by $BG$. In general, the prescription of the topological action of a three dimensional Chern--Simons gauge theory is rather abstract; however, in view of that $H^4(BG,\Z)$ is isomorphic to $H^3(BG,U(1))$ when $G$ is finite, Dijkgraaf and Witten constructed a concrete lattice realization of the topological action in the case of finite gauge groups. From now on in this section, we restrict the discussion to finite groups only.

So, more precisely, consider a topological gauge theory defined in a three dimensional manifold $M$, with a finite gauge group $G$, the lattice realization is defined on a three-skeleton, i.e., a triangulation $\mathbf{T}$, of $M$, with a group element of $G$ living on each $1$--simplex, which is oriented, of the triangulation (see Fig. \ref{fig:tetrahedron}). The topological partition function of such a lattice gauge theory reads
\be\label{eq:DWpartition}
\mathcal{Z}(\mathbf{T}(M),G)=\frac{1}{|G|}\prod_i W(\mathbf{T}_i)^{\varepsilon_i},
\ee
where the product runs over all of the tetrahedra $\mathbf{T}_i$ in the triangulation $\mathbf{T}(M)$, and $\varepsilon_i$ is a sign, $+1$ or $-1$, depending on whether the four vertices of the corresponding tetrahedron are in a right--handed arrangement or left--handed arrangement. It is shown\cite{Dijkgraaf1990} that the $W(\mathbf{T}_i)$ associated with each tetrahedron $\mathbf{T}_i$ is a $3$--cocycle over $G$. For example, for the tetrahedron in Fig. \ref{fig:tetrahedron}, $W(T)=\alpha(g,h,k)$.

\begin{figure}[h!]
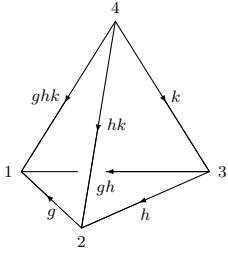

\centering
\tetrahedron{1}{2}{3}{4}{g}{h}{k}
\caption{A tetrahedron whose edges are graced with group elements; group multiplication rule applies to each of the four triangles. The corresponding $3$--cocycle is $\alpha(g,h,k)$.}
\label{fig:tetrahedron}
\end{figure}

Note that $\alpha$ is an equivalence class, any two representatives of the class are related by a $3$--coboundary. If the manifold $M$ is closed, the value of $\mathcal{Z}(\mathbf{T}(M),G)$ does not depend on the choice of the representative of an equivalent class of $3$--cocycles. The partition function is also invariant under the Pachner moves that connect two simplicial triangulations of $M$. Another remark is that the labeling of the vertices of the tetrahedron in Fig. \ref{fig:tetrahedron} is fixed once for all.
At this point, the partition function does not have tetrahedral symmetry.

On closed manifolds, the partition function in Eq. (\ref{eq:DWpartition}) also has a gauge invariance. Consider the single tetrahedron in Fig. \ref{fig:tetrahedron} as an example, the gauge transformation that acts on vertex $1$ transforms the topological action as follows.
\be\label{eq:gaugeTransDW}
W(\mathbf{T})\rightarrow W'(\mathbf{T})=\frac{\alpha(c,g,h)\alpha(c,gh,k)}{\alpha(c,g,hk)}W(\mathbf{T}),
\ee
where $c\in G$ is the gauge parameter. This gauge transformation can be understood topologically as in Fig. \ref{fig:gaugeTransform}.
\begin{figure}[h!]
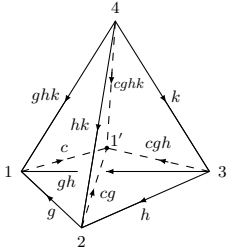

\centering
\tetrahedronSplit{1}{2}{3}{4}{g}{h}{k}{1}{c}
\caption{Gauge transformation acting on the vertex $1$ in Fig. \ref{fig:tetrahedron}; a new vertex $1'$ is created at the barycenter, accompanied by a new group element $c$ on the edge $[1'1]$. Also understood as the $1\rightarrow 4$ Pachner move.}
\label{fig:gaugeTransform}
\end{figure}
The tetrahedron $[1234]$ is associated with the original topological action $W(\mathbf{T})$. The gauge transformation acting on vertex $1$ creates a new vertex $1'$ inside the tetrahedron (can be thought as being at the barycenter) and thus created four new tetrahedra, of which the tetrahedron $[0'123]$ is associated with the new topological action $W'(\mathbf{T})=\alpha(cg,h,k)$. There are five tetrahedra all told in Fig. \ref{fig:gaugeTransform}, associated with which the five $3$--cocycles satisfy the $3$--cocycle condition and thus lead to Eq. (\ref{eq:gaugeTransDW}). On closed manifolds, the topological action is invariant under the gauge transformation Eq. (\ref{eq:gaugeTransDW}) because the factors on the RHS of Eq. (19) can be canceled by those produced by the gauge transformation on the neighbouring tetrahedra. Topologically, the gauge transformation behaves like a $1\rightarrow 4$ Pachner move that splits a tetrahedron at it barycenter into four tetrahedra. Such a Pachner move can be visualized only in four--dimension, whereas Fig. \ref{fig:gaugeTransform} is the three--dimensional projection of a $4$--simplex whose five boundary $3$--simplices are the five tetrahedra in the figure.

To gain a deeper understanding of the gauge transformation Eq. (\ref{eq:gaugeTransDW}), let us rewrite the equation in terms of $3$--cocycles only as follows.
\be\label{eq:gaugeTransform3cocycle}
\begin{aligned}
\alpha(g,h,k) &\rightarrow \alpha'(g,h,k)\\
&=\alpha(cg,h,k)=\gamma(c,g,h,k)\alpha(g,h,k),
\end{aligned}
\ee
where we define
\[\gamma(c,g,h,k)=\frac{\alpha(c,g,h)\alpha(c,gh,k)}{\alpha(c,g,hk)}.
\]
In general, however, the new object $\alpha'(g,h,k)$ is not a $3$--cocycle any longer because one can check that it does not meet the $3$--cocycle condition Eq. (\ref{3CocycleCondition}).
Nonetheless, that $\alpha'$ is not a $3$--cocycle makes it possible to choose a convenient gauge such that the prescription of the topological partition function becomes simpler. Indeed, according to Ref\cite{Dijkgraaf1990}, depending on the divisibility of $|G|$, the following gauge of the $3$--cocycles may be imposed.
\be\label{eq:normalization}
\alpha'(g,g^{-1},h)=\alpha'(g,h,h^{-1})=1.
\ee
Under this gauge, the ordering of the vertices of a tetrahedron is irrelevant; in other words, the topological action $W'(\mathbf{T})$ acquires tetrahedral symmetry, in the sense that it is invariant under the change of the labeling of the vertices. Therefore, the $3$--cocycle condition and the gauge in Eq. \eqref{eq:gaugeTransform3cocycle} are incompatible unless the $3$--cocycle under consideration is equivalent to the trivial one, namely $\alpha\in[\alpha^0]$.

If the manifold $M$ has a boundary (open or closed), however, the gauge transformation in Eq. (\ref{eq:gaugeTransDW}) ceases to apply because a boundary condition must be imposed on $M$, which fixes the boundary value of the embedding of $M$ into the classifying space $BG$ and hence forbids the Pachner moves that involve the boundary simplices. Effectively, there are now degrees of freedom that cannot be gauged away on the two--dimensional boundary $\partial M$ of $M$. As such, the three--dimensional partition function turns out to be the wave function of the corresponding boundary state at certain time.
Since the Dijkgraaf--Witten Chern--Simons theory is a topological gauge theory, there is no nonvanishing Hamiltonian due to Legendre transform that can enable the notion of ground and excited states for the boundary states. Instead, here we have only gauge--invariant and noninvariant states. In particular, however, if the manifold $M$ has a closed boundary, e.g., a solid torus with its boundary a 2--torus, the remaining boundary states are automatically only those gauge--invariant ones. This can be understood by the standard technique of "gluing" and "sewing" if a topological quantum field theory, which in the current case gives the size of the Hilbert space on the boundary. As such, the Dijkgraaf--Witten partition function becomes the dimension of the Hilbert space of the gauge--invariant states on the closed boundary of $M$.

At this point, one may ask if it is possible to construct a Hamiltonian on the closed boundary of $M$ whose ground states happen to be the gauge--invariant boundary states of the Dijkgraaf--Witten theory in $M$. Yes, the Hamiltonian of our TQD model turns out to be a positive answer to this question, as explained as follows.

Staring at Fig. \ref{fig:tetrahedron} again as if it is a triangulation of a $3$--ball, then the the right--to--left projection of the four triangles comprising the boundary $2$--sphere to the paper plane is the very graph in Fig. \ref{fig:3cocycleB}, a basis graph of our twisted quantum double model. Note that in this case, there are four triangles in Fig. \ref{fig:3cocycleB}, including the triangle $[124]$ in the back. When the corresponding state of the graph in Fig. \ref{fig:3cocycleB} is a ground state, the $B_f$ operator is unity acting on any of the triangles, complying with the group multiplication rule on each of the triangles in Fig \ref{fig:tetrahedron}.

By comparing $A^c_3$ with $c\in G$ at vertex 3 on the graph in Fig. \ref{fig:3cocycleB} to Eq. \ref{eq:gaugeTransDW}, one may see the vertex operator $A^g_v$ in the Hamiltonian of our model is formally identical to the gauge transformation on the partition function in Dijkgraaf--Witten theory. In our model, the operators $A_v$ evolve the states; however, we have $A_v=\mathds{1}$ on the ground states, which implies that they are invariant under the Dijkgraaf--Witten gauge transformation in Eq. (\ref{eq:gaugeTransDW}). If we embed the graph of the TQD model on a torus or any surface homeomorphic to it, say, $\partial X\times S^1$, where $X$ is homeomorphic to a disk, we soon see that the ground states of the TQD model are the gauge--invariant boundary states of the Dijkraaf--Witten theory defined on $M=X\times S^1$. Therefore, we can conclude that
\[
\mathcal{Z}_{CS}\left(\mathbf{T}(X\times S^1),G\right)=\mathrm{GSD}_{TQD} \left(T(\partial X\times S^1)\right).
\]
The results obtained in Sections \ref{sec:GSD} and \ref{sec:fractionaltopologicalnumbers} fall into this latter case, which can be verified by comparing them to the corresponding results in Ref\cite{Dijkgraaf1990,Coste2000}.

Now that we have gone through the logic of this correspondence, we can claim that our twisted quantum double models may indeed be regarded as a valid Hamiltonian extension of the Dijkgraaf--Witten discrete Chern--Simons theories. Although the discussion so far is restricted to $3$--dimension, the correspondence described above may be readily generalizable to higher dimensions.

In light of this correspondence, the gauge transformation in Eq. \eqref{eq:gaugeTransform3cocycle} implies that all twisted quantum double models but the one defined by $[\alpha^0]$ do not have tetrahedral symmetry.

On the other hand, the CS theories in a $3$--manifold $M$ also correspond to other two--dimensional theories on the boundary $\partial M$, namely the rational conformal field theories (RCFT), such as Wess--Zumino--Witten (WZW) models. The salient point of this correspondence is that the CS Hilbert space on $\partial M$ is isomorphic by a canonical identification to the space of holomorphic conformal blocks of the RCFT on $\partial M$, while the CS wave--function reproduces the fusion algebra of the holomorphic sector of the RCFT. Note that this correspondence is level by level, in the sense that the CS theory and the WZW interaction of the corresponding RCFT are both at the same level, say, $k$.

What follows naturally is a correspondence between our twisted quantum double models  with a finite group $G$ and a type of RCFTs, namely the CFTs as orbifolds by the group $G$ of a holomorphic CFT. The modular data of a $G$--orbifold is twisted by a $3$--cocycle over $G$. This correspondence is also level by level, in the following sense. The third cohomology group over $G$, $H^3(G,U(1))$, is a discrete group; in particular it is $\Z$ if $G$ is a compact Lie group. Hence, one can label the equivalence classes as the elements of $H^3(G,U(1))$ by integers, say, $[k]$. The situation is similar when $G$ is finite, in which case $H^3(G,U(1))$ also becomes finite, indicating that there are only finite number of levels available. Hence, a twisted quantum double model defined by a $3$--cocycle $\alpha\in [k]$ corresponds to a $G$--orbifold twisted by the same $\alpha$, which has a twisted modular data at level $k$ too. As such, when $\alpha=\alpha^0\in [0]$, the twisted quantum double model is actually untwisted, which is equivalent to Kitaev model, and thus corresponds to the usual untwisted $G$--orbifold.

Let $H_{G,\alpha}$ be a twisted quantum double model on a torus and $\mathcal{C}_{G,\alpha}$ a twisted toroidal orbifold of a holomorphic CFT $\mathcal{C}$. One can check that they correspond to each other in the respects tabulated row by row as follows.
\begin{table}[h!]
\begin{center}
\begin{tabular}{|c|c|}\hline
$H_{G,\alpha}$ & $\mathcal{C}_{G,\alpha}$ \\\hline
States: $\ket{g,h}$ & conformal blocks: $\cb{h}{g}$ \\\hline
Ground State: $\ket{A,\mu}$ & 1--loop characters: $\varkappa^A_{\mu}$ \\\hline
Ground state degeneracy & Number of primary fields \\\hline
$\str$ \& $\ttr$ matrices in $\ket{A,\mu}$ basis & $\str$ \& $\ttr$ matrices in $\varkappa^A_{\mu}$ basis \\\hline
\end{tabular}
\caption{Correspondence between a TQD and a twisted $G$--orbifold.}\label{tab:TQDorbifold}
\end{center}
\end{table}

A few remarks on the table are in order. The equality between the GSD of the TQD model and the number of primary fields of the corresponding orbifold is not surprising, as each primary field is associated with a highest--weight vector of an irreducible representation of the Virasoro algebra that is annihilated by the positive modes of the algebra and thus can be thought as a ``ground state". At this moment, such a relation may appear to be abstract; however, if a TQD model has a boundary, it may be possible to  construct a boundary CFT whose number of primary fields matches the GSD of the TQD model. An example is shown numerically for a $(2+1)$ dimensional Haldane model and its boundary CFT\cite{Cincio2012}. Since each 1--loop character counts all the descendants of a primary field, including the primary field itself, it naturally corresponds to a unique ground state of the TQD model. Like the TQD ground states $\ket{A,\mu}$, The 1--loop characters $\varkappa^A_{\mu}$ form an orthonormal basis, in which the $\ttr$ operator is diagonal. The conformal blocks $\cb{h}{g}$ projects onto this basis as
\[
\cb{h}{g}=\sum_{\mu}\overline{\widetilde{\chi}^{g^A}_{\mu}(h)} \varkappa^A_{\mu},
\]
which is precisely how a TQD state $\ket{g,h}$ projects onto the ground state basis, as in Eq. (\ref{eq:ghToAmu}).

Although we have been talking about the fractional topological numbers and statistics of the quasiparticles of our model, we do not have in hand the operators that can create or annihilate these quasiparticles, nor do we know the exact wave functions of these quasiparticles. Nevertheless, as a ramification of the correspondence with the orbifold CFTs, that we can study the topological numbers and statistics of the possible quasiparticles of our model by using only the modular matrices can be expected. This ramification is further propped by a similar correspondence between the $(2+1)$--dimensional Hamiltonian formulation of fractional quantum Hall effect (FQHE) systems and two--dimensional RCFT, which maps the holomorphic wave functions of the quasiparticles of the FQHE system to the conformal blocks of the CFT.
\section{Relations to Levin--Wen models}\label{sec:LW}
In this section, we discuss the relation between Levin--Wen models and our TQD models. In particular, we demonstrate a duality map of a class of Levin--Wen models into certain TQD models.

To begin with, let us briefly review Levin--Wen Models. Levin--Wen models, also known as string--net models, were proposed to generate the ground states that exhibit the phenomenon of string--net condensation as a physical mechanism for the time reversal invariant topological phases.
They are believed to be a Hamiltonian formulation of the Turaev--Viro topological field theories\cite{Hu2012,Wang2010,Turaev1994,KADAR2009}, analogous to that our TQD models are a Hamiltonian extension of topological Chern--Simons theories, as belabored in the previous section. Levin--Wen models are usually defined on the honeycomb lattice.

String degrees of freedom reside on the edges of the honeycomb lattice, each link of which is graced with one of $N+1$ string types. In the most general setting, the $N+1$ string types form a finite set $I$ equipped with a duality map $*:I \rightarrow I$ such that $j^{**}=j$ for all $j\in I$. These abstract string types are usually considered to label the irreducible representations of certain group or algebra (e.g., a quantum group).

A Levin-Wen model is specified by a triple of the input data $\{d,\delta,F\}$. Quantum numbers $d_j$ are called quantum dimensions and are complex numbers associated with the group elements $j\in G$, satisfying $d_j=d_{j^*}$. There are in principle two ways of setting up the fusion rules $\delta$. First, one can let the tensor product rules of the irreducible representations labeled by the string types as the fusion rules. But we do not consider this case here. Second, which is the case to be discussed in this section, one can use the product rule of certain group $G$ as the fusion rule, which is in fact a Kronecker $\delta$--function associated with each triple of string types $\{i,j,k\}$ respectively on the three links meeting at a vertex, such that $\delta_{ijk}$ equals $1$ if the group multiplication $ijk$ is the identity element $e$, and $0$ otherwise. The quantum dimensions and fusion rules must satisfy
\begin{align}
  \label{eq:ddelta}
  d_i d_j= \delta_{ijk^*}d_k.
\end{align}
The dual string type $j^*=j^{-1}$ can also represent the corresponding inverse group element in $G$. Fig. \ref{fig:snVertex} illustrates the fusion rule on the honeycomb lattice, by showing just one vertex. A link $a$ of the lattice is graced with a string type $s_a$ and endowed with an orientation, specified by an arrow. Such a string of type $s_a$  can also be represented by a flipped arrow, but with the conjugate string type
$s_a^*=(s_a)^{-1}$.
\begin{figure}[h!]
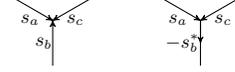

\centering
   \Ygraph{s_a}{s_b}{s_c}\qquad \Ygraph[0]{s_a}{s^*_b}{s_c}
\caption{A string--net vertex.}\label{fig:SNvertices}
\label{fig:snVertex}
\end{figure}

The $6j$ symbol $F$ are complex numbers that obey the following self--consistency conditions
\be
\begin{cases}
\label{eq:6jcond}
F^{ijm}_{kln}=F^{mij}_{nk^{*}l^{*}}\frac{v_m v_n}{v_j v_l}
=F^{klm^{*}}_{ijn^{*}}=F^{jim}_{lkn^*}=\overline{F^{j^*i^*m^*}_{l^*k^*n}},\vspace{0.5em}\\
F^{mlq}_{kp^{*}n}F^{jip}_{mns^{*}}F^{js^{*}n}_{lkr^{*}}
=F^{jip}_{q^{*}kr^{*}}F^{riq^{*}}_{mls^{*}},\vspace{0.5em}\\
F^{mlq}_{kp^{*}n}F^{l^{*}m^{*}i^{*}}_{pk^{*}n}
=\delta_{iq}\delta_{mlq}\delta_{k^{*}ip},
\end{cases}
\ee
where $v_j=\sqrt{d_j}$ (and $v_e=1$ for the identity group element $e$). The first line is a symmetry over the indices of $F$, where the last equality is meant for the Hamiltonian to be Hermitian, the second line the pentagon identity, and the last line the orthogonality condition. We remark here that it is the solution of the $F$--symbols and the quantum dimensions to Eq. \eqref{eq:6jcond} that dictates for which the abstract string types label the irreducible representations. Often, the string types turn out to label the irreducible representations of a rather complicated algebra $\mathfrak{A}$ although the group $G$ that supplies the fusion rules is very simple. See Section \ref{subsec:ExZ2Z3} for an example.

The usual string--net Hamiltonian takes the form
\be\label{eq:snH}
H=-\sum_v \hat{A}_v-\sum_p \hat{B}_p,
\ee
where $\hat{A}_v\Bigl\vert\bmm\scalebox{0.6}{\Ygraph{i}{j}{k}}\emm\Bigr\rangle =\delta_{ijk}\Bigl\vert\bmm\scalebox{0.6}{\Ygraph{i}{j}{k}}\emm\Bigr\rangle$ is the vertex operator defined at each string--net vertex, and $\hat{B}_p=\sum_{s=0}^N a_s \hat{B}^s_p$, with $a_s=d_s/D$ and $D=\sum_{i=0}^N d^2_i$, is the "magnetic--flux" operator defined for each hexagonal plaquette of the string--net lattice. Each operator $\hat{B}^s_p$ in $\hat{B}_p$ acts on a plaquette as follows\cite{Hung2012}.
\be\label{eq:BpsLW}
\hat{B}^s_p \BLvert\scalefont{0.6}\Psix[1]{l}{e}\Brangle
=\prod\limits_{a=1}^{6}F^{l_a e^*_a e_{a-1}}_{s^*{e'}_{a-1}{e'}_a^*}\BLvert\scalefont{0.6}\Psix[1]{l}{e'}\Brangle,
\ee
where ${e'}_a=e_a s$.

The vertex operators $\hat{A}_v$ are projectors. The parameter $a_s$ ensures that the operators $\hat{B}_p$ are also projectors. It can be shown that $\{\hat{A}_v,\hat{B}_p|\forall v,p\}$ is a set of commuting operators, whose common eigenstates span the Hilbert space of the model. The ground states of the model are thus the $+1$ eigenstates of $\hat{A}_v$ and $\hat{B}_p$, which are known as the string--net condensed states.

We emphasize that in Levin--Wen models, the plaquette operator $\hat{B}_p$ is identically zero outside of the subspace of $\hat{A}_v=1$ because the $F$--symbols are automatically zero. This is in contrast to our TQD models in which the vertex operators are well--defined and nontrivial outside $\Hil^{B_f=1}$.

The honeycomb lattice has as its dual lattice the triangular lattice, which is a regular case of the triangle graph $\Gamma$ on which our TQD models are defined.  We further notice that the way we enumerate the vertices and assign group elements on the edges of $\Gamma$ does induce orientations on the links of the dual honeycomb lattice, as seen in Eq. \eqref{eq:orientationOfLink}, in which only the dual honeycomb plaquettes are shown. Bear in mind that the enumerations of the vertices of $\Gamma$ now label the plaquettes.
\begin{align}
  \label{eq:orientationOfLink}
  \YYa{0.5}{v_1}{v_2}{v_3}{v_4}\quad \YYb\quad\YYc
\end{align}
The pair $[v_1 v_2]$ indicates that the plaquette $v_1$ is on the left of their common edge,
while $v_2$ is on the right. Hence, the group element on the edge can be denoted as $g_{[v_1v_2]}$ and obviously
satisfies $g_{[v_1 v_2]}=g_{[v_2 v_1]}^{-1}$.
This rule can be applied to the entire graph.

These observations imply that there may exist a kind of duality between the concerned type of Levin--Wen models and certain TQD models, in fact an inclusion of the former into the latter, as we now explore.

We claim that any $F$--symbol that solves Eq. \eqref{eq:6jcond} can be mapped to a $3$--cocycle $\alpha$ (up to a $3$--coboundary) that defines a TQD model. That said, a Levin--Wen model with such an $F$ can be identified with a TQD model with the corresponding $\alpha$. Indeed, a tensor $F$ has three independent indices and can be expressed as
\begin{align}
  \label{eq:Falpha}
  F^{ijm}_{kln}=\alpha(i,j,k)\delta_{m,(ij)^{-1}}\delta_{n,jk}\delta_{l,(ijk)^{-1}}
\end{align}
where $\alpha\in \mathds{C}^{\times}=\mathbb{C}\setminus\{0\}$ is a function of $i,j,k\in G$. The self--consistency conditions in Eq. $\eqref{eq:6jcond}$ become
\be
  \label{eq:6jcondalpha}
\begin{cases}
\begin{aligned}
\alpha(i,j,k) &=\alpha\left((ij)^{-1},i,jk\right)\frac{v_{ij}v_{jk}}{v_j v_{ijk}} \\
  &=\alpha\left(k,(ijk)^{-1},i\right)\\
  &={\alpha\left(j,i,{ijk}^{-1}\right)}=\overline{{\alpha\left(j^{-1},i^{-1},ijk\right)}},
\end{aligned}\vspace{0.5em}\\
\dfrac{\alpha(l,k,j)}{\alpha(ml,k,j)}\dfrac{\alpha(m,lk,j)}{\alpha(m,l,kj)} \alpha(m,l,k)=1,\vspace{0.5em}\\
\alpha(m,l,k)\alpha(l^{-1},m^{-1},mlk)=1,
\end{cases}
\ee
respectively. The pentagon identity in the second line is readily the $3$--cocycle condition of this $\alpha$ that turns out to be a $3$--cocycle in $H^3(G,\mathds{C}^{\times})$. It does not harm to assume that $\alpha$ is $U(1)$--valued in $H^3(G,U(1))\cong H^3(G,\mathds{C}^{\times})$. Then we see that the orthogonality condition is identified with one equality in the symmetry condition.

The quantum dimension $d_j$ then takes the definition
\begin{equation}
  \label{eq:djalpha}
  d_i=\alpha(i^{-1},i,i^{-1}),
\end{equation}
as is verified by setting $j=i^{-1}$ and $k=i$ in the symmetry condition and by the $3$--cocycle condition.

The conditions $d_i d_k=d_{ik}$ and $d_i=d_{i^{-1}}=\pm 1$ are immediate consequences of the symmetry condition in Eq. \eqref{eq:6jcondalpha}. By setting $j$ to the identity element $e$ in the first equality of the symmetry condition \eqref{eq:6jcondalpha} and by applying the $3$--cocycle condition $\delta\alpha\left((ij)^{-1},i,j,k\right)=1$, we obtain
\be
\frac{v_i v_k}{v_{ik}}=\frac{\alpha(e,e,k)}{\alpha(i^{-1},i,k)\alpha(i^{-1},i,e)}.
\ee
Since $\alpha(i^{-1},i,k)^2=1$ from the symmetry condition for all $i,k\in G$, taking a square of the above equation yields $d_i d_k=d_{ik}$. The condition $d_i=d_{i^{-1}}=\pm 1$ is due to $\alpha(i^{-1},i,i)^2=1$ and $d_i d_{i^{-1}}=d_e=1$.

Substituting the $\alpha$ in Eq. (\ref{eq:Falpha}) into the $\hat{B}^s_p$ in Eq. (\ref{eq:BpsLW}), we find  that $A_p^s|_{B_f=1}$ can be identified with $d_s\hat{B}^s_p$ for all $p$ up to a common unitary transformation that depends on the enumeration of the vertices and the choice of $\{v_i\}$, where $A_p^s|_{B_f=1}$ is the TQD model vertex operator acting on the vertex $p$ dual to the plaquette $p$ on the honeycomb lattice, with the action restricted to the subspace $\Hil^{B_f=1}$. Notice that the enumeration dependence of the transformation is related to the enumeration dependence of the definition of $A_p^s$.

Therefore, we infer that the Levin--Wen model with the fusion rule $\delta_{ijk}$ determined by the group multiplication law of $G$ and the $F$--symbols meeting Eq. (\ref{eq:6jcond}) can be identified as the TQD model restricted to  $\Hil^{B_f=1}$ that is defined by the (not necessarily normalized) $U(1)$--valued $3$--cocycle $\alpha$ that satisfies
\begin{align}
  \label{eq:AlphaForLW}
  \alpha(i,j,k)
  &=\alpha\left((ij)^{-1},i,jk\right)\frac{v_{ij}v_{jk}}{v_j v_{ijk}}
  =\alpha\left(k,(ijk)^{-1},i\right)
  \nonumber\\
  &={\alpha\left(j,i,{ijk}^{-1}\right)}={\alpha\left(j^{-1},i^{-1},ijk\right)}^{-1}
\end{align}
where $v_i=\sqrt{d_i}$ and $d_i=\alpha(i,i^{-1},i)$.

Nevertheless, not all $3$--cocycles satisfy the conditions \eqref{eq:AlphaForLW}, in which case the TQD models do not correspond to any Levin-Wen models. For example, with $\Z_2$ fusion rule, there are two Levin--Wen models, namely the toric code model and the double semion model, which can be identified with the TQD models respectively with the $[\alpha^0]$ and $[\alpha^1]$ given in Eq. \eqref{eq:ZmCocycle}, which correspond to $d_1=1$ and $d_1=-1$ respectively, according to Eq. \eqref{eq:djalpha}. There is, however, only one Levin-Wen model with $\Z_3$ fusion rule that has a dual TQD model---the one with the $[\alpha^0]$ given in Eq. (\ref{eq:PowerCocycle}). Note that here, we do not distinguish the TQD models defined by equivalent $3$--cocycles because they describe the same topological phases, as explained in Section \ref{subsec:equivModel}.
Details of the TQD models with $\Z_2$ and $\Z_3$ are found in Section \ref{subsec:ExZ2Z3}.

The study in this section partially answers the question when and how Levin--Wen models can be characterized by group cohomology, which was raised in Ref\cite{Hung2012}.
\section{Discussions and Outlook}\label{sec:disc}
In this very last section, we shall summarize our major results along with discussions on a few questions bonded to these results that are yet not fully answered in this paper but deserve future exploration.

First of all, we fabricated a new model---the Twisted Quantum Double model---of 2d topological phases by a $3$--cocycle $[\alpha]\in H^3(G,U(1))$ of a finite group $G$ on a graph composed of triangles, each edge of which is decorated by an element of $G$. This model constitutes a very rich class of topological phases, which are otherwise missing in some other models, such as the Kitaev model and Levin--Wen model. The topological properties of the TQD model are reflected in the topological numbers---GSD, topological spin, etc---associated with the topological observables of the model.

We further classified these topological numbers, which either directly depend on the defining $3$--cocycle of the model or indirectly via a twisted $2$--cocycle determined by the $3$--cocycle. Two TQD models defined by two equivalent $3$--cocycles are shown to bear the same topological phase. We thus expect that the classification of the topological numbers does the job as well for the topological phases described by the TQD models. We expect but do not affirm this yet because we have not been able to explicitly prove that two inequivalent $3$--cocycles never yield the same topological phase. This and detailed studies of the topological phases certainly calls for more efforts in future works.

Second, our TQD model appears to be certain generalization of the Kitaev model in the following sense. A TQD model is precisely a Kitaev model when its defining $3$--cocycle is trivial. In this situation, the Hamiltonian consists of local gauge transformations and local flux projections. As a collective effect, the ground states are classified by the irreducible representations of quantum double of the finite group $G$, and this is expected to be true also for the quasiparticle excitations. When the defining $3$--cocycle is non--trivial, the Hamiltonian can be viewed as consisting of local twisted gauge transformations and local flux projections. Similarly the ground states are classified by the irreducible representations of the twisted quantum double of the finite group $G$.

Third, we relate our TQD models to the Dijkgraaf--Witten (DW) topological Chern--Simons theories, by viewing ours as a Hamiltonian extension of the latter. In fact, we have shown that the GSD of a TQD model defined by some $3$--cocycle on the boundary of a $3$--manifold coincides with the partition function of the DW topological Chern-Simons theory in the bulk, whose topological action is given by the same $3$--cocycle.

This connection motivates a correspondence between our TQD models on a torus and the RCFTs that are the toric orbifolds by a finite group of a holomorphic CFT and are twisted by nontrivial $3$--cocycles. This correspondence identifies the ground states, the GSD, and the modular matrices of a TQD model, respectively, with the holomorphic characters, the  number of primary fields, and also the modular matrices of the corresponding RCFT. Provided with the description of fractional quantum Hall effect by CFT, we are encouraged to expect that the statistical and topological properties of the quasiparticle excitations and hence the topological phase of a TQD model can be investigated in terms of the modular matrices of the model.

Fourth, to echo the fact that our model is partly motivated by the Levin--Wen model, we studied the relation between TQD models and the type of Levin--Wen models where the fusion rules coincide with the multiplication laws of finite groups: we demonstrated that each such Levin--Wen model on a graph can be directly translated to a TQD model of the type on the dual graph. The reverse is not true, however, indicating that TQD models embodies more topological phases than this type of Levin--Wen models. In our study of this in Section \ref{sec:LW}, we adopted simply the original settings of the Levin--Wen model\cite{Levin2004}.

Furthermore, as pointed out in Section \ref{sec:LW} and in Ref\cite{Hung2012}, the fusion rules in a Levin--Wen model can in principle be identified with the tensor product rules of the irreducible representations of certain group or algebra. If this is the case, the pentagon identity in Eq. \eqref{eq:6jcond} of the $F$--symbols contains a summation over the index $n$ on the LHS due to the summation that would appear in the fusion rules, which comprises the interpretation of the $F$--symbols as $3$--cocycles and the pentagon identity as the corresponding $3$--cocycle condition. Then clearly, this type of Levin--Wen models, apart from the special cases where the representations are restricted so as to remove the summation, cannot be dual to our TQD models. This type of Levin--Wen models are believed to be classified in terms of tensor categories. This distinction between the two types of Levin--Wen models is related to the question when and how Levin--Wen model can be classified by group cohomology, which is raised in Ref\cite{Hung2012} that is inspired by the duality between certain SPT phases and long range entangled topological phases described by certain Levin--Wen models\cite{Chen2011e,Levin2012,Hung2012}. The duality found between our TQD models and Levin--Wen models then partially answered this question.

Above all, a main purpose of this paper is to reveal the topological properties of the ground states. We propose the following topological properties of the elementary excitations. The number of quasiparticle species in the elementary excitations is equal to the GSD on a torus. Moreover, the topological charge that identifies the quasiparticles are classified by the twisted quantum double of the finite group $G$, and The $\str$ and $\ttr$ statistical matrices are the same as the modular $\str$ and $\ttr$ matrices derived from the topological observable in the ground states on a torus. Work is in progress in this direction.
In general, two inequivalent 3 cocycles may yield the same topological phase, because of possible relabeling of the quasiparticles. For example, 
$H^3(\Z_2\times \Z_2, U(1))$ has eight equivalence classes of $3$-cocycles. If we assume that the set $\{GSD, \str, \ttr\}$ gives the number of topological phases, then it is verified that there are only 4 independent sets of {GSD, S, T} in the case of $\Z_2 \times \Z_2$. At this moment, we are lack of a systematic understanding of the underlying principle and general pattern, and shall try to address this issue in our future work.

\section*{Acknowledgements}

YH and YW thank Department of Physics, Fudan University, where 
this work was initiated, for warm hospitality they received 
during a visit in summer 2012. YT thanks Dr. Z.C. Gu for his helpful discussions. YW appreciates Prof. Mikio 
Nakahara, Prof. Seigo Tarucha, and Prof. Rod Van Meter for their 
generosity and tolerance. YW's gratitudes also go to Dr. Ling--Yan Hung, 
Dr. Peng Gao, Prof. Guifre Vidal for their helpful discussions, and in particular Juven Wang for his proof reading of many sections of the manuscript. 
YS Wu thanks Dr. Z.C. Gu for discussions. He is supported in part 
by US NSF through grant No. PHY-1068558.

\begin{appendix}
\section{Basics of $H^n(G,U(1))$}\label{app:HnGU1}
The $3$-cocycles concerned in this paper correspond to the topological actions in the Dijkgraaf-Witten Chern-Simons theory realized on simplicial triangulations of $3$-manifolds. This physics is introduced in Section\ref{sec:relation2DW}. Here in this appendix, to be self-contained, we briefly catalog basic definitions of cohomology groups $H^n(G,U(1))$ of finite groups $G$.

The $n$-th \textit{cochain group} $C^n(G,U(1))$ of a finite group $G$ is an Abelian group of $n$-\textit{cochains} $c(g_1,\dots,g_n): G^{\times n}\to U(1)$, where $g_i\in G$, with the group multiplication: $c(g_1,\dots,g_n)c'(g_1,\dots,g_n)=(cc')(g_1,\dots,g_n)$. There is a natural derivation from $C^n$ to $C^{n+1}$, namely the \textit{coboundary operator} $\delta$ defined as follows.
\begin{align*}
\delta &:C^n\to C^{n+1}\\
&: c(g_1,\dots,g_n)\mapsto \delta c(g_0,g_1\dots,g_n),
\end{align*}
where
\begin{align*}
&\delta c(g_0,g_1\dots,g_n)\\
=& \prod_{i=0}^{n+1}c(\dots,g_{i-2},g_{i-1}g_i,g_{i+1},\dots)^{(-1)^i},
\end{align*}
where it is understood as when $i=0$, the arguments start at $g_0$, and when $i=n+1$, the arguments end at $g_{n-1}$. Equation (\ref{3CocycleCondition}) is the example for $n=3$. It is easy to verify that $\delta^2c=1$, the nilpotency of $\delta$, by which the following exact sequence is established:
\[
\cdots C^{n-1}\stackrel{\delta}{ \to} C^n\stackrel{\delta}{ \to} C^{n+1}\cdots ,
\] 
where the $n$-cochains in $\mathrm{im}(\delta:C^{n-1}\to C^n)$ are called $n$-\textit{coboundaries}, and those in $\ker(\delta:C^n\to C^{n+1})$ are called $n$-\textit{\textbf{cocycles}}, i.e. those satisfying the \textit{cocycle condition} $\delta c=1$. Again, Eq. (\ref{3CocycleCondition}) is the example for $n=3$. This exact sequence gives rise to the definition of the cohomology group
\[
H^n(G,U(1)):=\frac{\ker(\delta:C^n\to C^{n+1})}{\mathrm{im}(\delta:C^{n-1}\to C^n)},
\]
which is the Abelian group of equivalence classes of $n$-cocyles that defer from each other by merely an $n$-coboundary.
Trivial $n$-cocycles are those in the equivalence class with the unit 1.

\section{Algebra of local operators}\label{app:algAvBf}

In this appendix, we show that the local operators in the Hamiltonian Eq. (\ref{eq:Hamiltonian}) forms the following algebra.
\begin{align}
  \label{AlgebraAB}
  &\mathrm{(i)}
  &[B_{f'},B_f]=0,
  \quad [B_f,A^g_v]=0,
  \nonumber\\
  &\mathrm{(ii)}
  &[A^g_v,A^h_w]=0\text{ if } v\neq w,
  \nonumber\\
  &\mathrm{(iii)}
  &A^g_{v'}A^h_v = A^{g\cdot h}_v,
\end{align}
where in (iii), $[v'v]=g$ is understood. The equality (iii) in the above implies that $A_v=\sum_gA^g_v/|G|$ is a projector.

\smallskip

\noindent(i).
That $[B_{f'},B_f]=0$ follows immediately from the definition of $B_f$ in Eq. \eqref{eq:actionOfBf}.
Since $A_v$ affects only the group elements on the boundary of the vertex $v$, $[B_f,A_v^g]=0$ holds obviously when the vertex $v$ is not on the boundary of $f$. In the case where $v$ is right on the boundary of $v$, without loss of generality, let us consider two actions, $A_{v_2}^g B_f$ and $B_f A_{v_2}^g$, on the basis vector
\[
  \BLvert \oneTriangle{v_1}{v_2}{v_3} \Brangle.
\]
We have
\begin{align}
&A_{v_2}^g B_f\BLvert \oneTriangle{v_1}{v_2}{v_3} \Brangle \\ =&\dots \frac{\dlt{v_1}{v'_2}{v_3}}{\elf{v_1}{v'_2}{v_2}{v_3}}\BLvert \oneTriangle{v_1}{v'_2}{v_3} \Brangle\nonumber\\
=&\dots \frac{\dlt{v_1}{v_2}{v_3}}{\elf{v_1}{v'_2}{v_2}{v_3}}\BLvert \oneTriangle{v_1}{v'_2}{v_3} \Brangle\nonumber\\
=&\dots \frac{\dlt{v_1}{v_2}{v_3}}{\elf{v_1}{v'_2}{v_2}{v_3}}\BLvert \oneTriangle{v_1}{v'_2}{v_3} \Brangle\nonumber\\
=&B_f A_{v_2}^g\BLvert \oneTriangle{v_1}{v_2}{v_3} \Brangle,
\end{align}
where $[v'_2v_2]=g$ is understood, the dots $\dots$ collects all other factors irrelevant and thus omitted, and the second equality follows from applying the chain rule $[v_1v'_2]\cdot[v'_2v_3]=[v_1v'_2]\cdot[v'_2v_2] \cdot[v_2v'_2]\cdot[v'_2v_3]=[v_1v_2]\cdot[v_2v_3]$. Hence,
we conclude that $[B_f,A_v^g]=0$ holds for all $f,v\in\Gamma$.



\smallskip

\noindent(ii).
It is clear by the definition of $B_v$ that if $v_1$ and
$v_2$ are not connected by any edge, $[A_{v_1}^g ,A_{v_2}^h]=0$ is true. We then need only to check the case where $v_1$ and $v_2$ are neighboring
to each other. Let us first check the following action of $A_{v_1}^g A_{v_2}^h$ on a relevant basis vector.
\begin{align}
  \label{B1B2}
  &A_{v_1}^g A_{v_2}^h \BLvert \twoTriangles{v_1}{v_3}{v_2}{v_4}{1} \Brangle
  \nonumber\\
  =&\Bigl(
  \elf{v_1}{v'_2}{v_2}{v_4}^{-1}
  \nonumber\\
  &
  \elf{v_1}{v'_2}{v_2}{v_3}
  ...\Bigr) A_{v_1}^g \BLvert \twoTriangles{v_1}{v_3}{v'_2}{v_4}{1} \Brangle
  \nonumber\\
  =&\Bigl(
  \elf{v_1}{v'_2}{v_2}{v_4}^{-1}
  \elf{v_1}{v'_2}{v_2}{v_3}...\Bigr)
  \nonumber\\
  &\Bigl(
  \elf{v'_1}{v_1}{v'_2}{v_4}
  \nonumber\\
  &\elf{v'_1}{v_1}{v'_2}{v_3}^{-1}...\Bigr)
  \BLvert \twoTriangles{v'_1}{v_3}{v'_2}{v_4}{1} \Brangle
\end{align}
with $[v'_2v_2]=h,[v'_1v_1]=g$. Only those 3-cocycles corresponding
to the two common boundary vertices are written down.

By using twice the 3-cocycle condition in \eqref{3CocycleCondition}, we have
\begin{align}
  &\elf{v'_1}{v'_2}{v_2}{v_4}
  \alpha\left([v'_1v_1]\cdot[v_1v'_2],[v'_2v_2],[v_2v_4]\right)^{-1}
  \nonumber\\
  &\alpha\left([v'_1v_1],[v_1v'_2]\cdot[v'_2v_2],[v_2v_4]\right)
  \nonumber\\
  &\alpha\left([v'_1v_1],[v_1v'_2],[v'_2v_2]\cdot[v_2v_4]\right)^{-1}\times
  \nonumber\\
  &\alpha\left([v'_1v_1],[v_1v'_2],[v'_2v_2]\right)=1
\end{align}
and
\begin{align}
  &\alpha\left([v'_1v'_2],[v'_2v_2],[v_2v_3]\right)
  \nonumber\\
  &\alpha\left([v'_1v_1]\cdot[v_1v'_2],[v'_2v_2],[v_2v_3]\right)^{-1}
  \nonumber\\
  &\alpha\left([v'_1v_1],[v_1v'_2]\cdot[v'_2v_2],[v_2v_3]\right)
  \nonumber\\
  &\alpha\left([v'_1v_1],[v_1v'_2],[v'_2v_2]\cdot[v_2v_3]\right)^{-1}
  \nonumber\\
  &\alpha\left([v'_1v_1],[v_1v'_2],[v'_2v_2]\right)=1
  \nonumber\\
\end{align}
together with the chain rule Eq. (\ref{eq:ChainRuleInBph}), we find the action of $A_{v_1}^g A_{v_2}^h$ is the same as
$A_{v_2}^h A_{v_1}^g$:
\begin{align}
  \label{B2B1}
    &A_{v_2}^h A_{v_1}^g \BLvert \twoTriangles{v_1}{v_3}{v_2}{v_4}{1} \Brangle
    \nonumber\\
    =&\Bigl(\alpha\left([v'_1v_1],[v_1v_2],[v_2v_4]\right)
    \nonumber\\
    &\alpha\left([v'_1v_1],[v_1v_2],[v_2v_3]\right)^{-1}...\Bigr)
    B_{v_2}^h \BLvert \twoTriangles{v'_1}{v_3}{v_2}{v_4}{1} \Brangle
    \nonumber\\
    =&\Bigl(\alpha\left([v'_1v_1],[v_1v_2],[v_2v_4]\right)
    \alpha\left([v'_1v_1],[v_1v_2],[v_2v_3]\right)^{-1}...\Bigr)
    \nonumber\\
    &\Bigl(\alpha\left([v'_1v'_2],[v'_2v_2],[v_2v_4]\right)^{-1}
    \nonumber\\
    &\alpha\left([v'_1v'_2],[v'_2v_2],[v_2v_3]\right)...\Bigr)
    \BLvert \twoTriangles{v'_1}{v_3}{v'_2}{v_4}{1} \Brangle
\end{align}

Notice that the chain rule in Eq. (\ref{eq:ChainRuleInBph})
guarantees that each group element indexed by the same pair
of enumerations is the same in the above evaluations.
Therefore we arrive at $A_{v_1}^g A_{v_2}^h =A_{v_2}^h A_{v_1}^g$.

\smallskip

\noindent(iii). We show $A^g_{v'} A^h_v = A^{g\cdot h}_v$, where $v$ is assumed to become $v'$ after the action of $A^h_v$ with $[v'v]=h$, while $v$ is turned to be $v''$ with $[v''v]=g\cdot h$. We begin with the action of $A^g_{v'} A^h_v $ on the vertex $v_2$of  the basis vector as follows.
\begin{align}
  \label{BgBh}
  &A^g_{v'_2} A^h_{v_2}\BLvert \threeTriangles{v_1}{v_3}{v_2}{v_4}{3}\Brangle
  \nonumber\\
  =&\elf{v_1}{v'_2}{v_2}{v_3}
  \nonumber\\
  &\elf{v_1}{v'_2}{v_2}{v_4}^{-1}
  \nonumber\\
  &\elf{v'_2}{v_2}{v_3}{v_4}^{-1}
  A^g_{v'}\BLvert \threeTriangles{v_1}{v_3}{v'_2}{v_4}{3} \Brangle
  \nonumber\\
  =&\Bigl(\elf{v_1}{v'_2}{v_2}{v_3}
  \elf{v_1}{v'_2}{v_2}{v_4}^{-1}
  \nonumber\\
  &\elf{v_2}{v'_2}{v_3}{v_4}^{-1}  \Bigr)
  \nonumber\\
  &\Bigl(\elf{v_1}{v''_2}{v'_2}{v_3}
  \elf{v_1}{v''_2}{v'_2}{v_4}^{-1}
  \nonumber\\
  &\elf{v''_2}{v'_2}{v_3}{v_4}^{-1}
  \Bigr)\BLvert \threeTriangles{v_1}{v_3}{v''_2}{v_4}{3} \Brangle.
\end{align}

Using three times the 3-cocycle condition \eqref{3CocycleCondition},
\begin{align}
  &\elf{v_1}{v'_2}{v_2}{v_3}
  \elf{v_1}{v''_2}{v'_2}{v_3}
  \nonumber\\
  =&\elf{v_1}{v''_2}{v'_2}{v_2}
  \elf{v_1}{v''_2}{v_2}{v_3}
  \nonumber\\
  &\elf{v''_2}{v'_2}{v_2}{v_3}
\end{align}

\begin{align}
  &\elf{v_1}{v''_2}{v'_2}{v_4}^{-1}
  \elf{v''_1}{v'_2}{v_2}{v_4}^{-1}
  \nonumber\\
  =&\elf{v_1}{v''_2}{v'_2}{v_2}^{-1}
  \elf{v_1}{v''_2}{v_2}{v_4}^{-1}
  \nonumber\\
  &\elf{v''_2}{v'_2}{v_2}{v_4}^{-1}
\end{align}

\begin{align}
  &\elf{v'_2}{v_2}{v_3}{v_4}^{-1}
  \elf{v''_2}{v'_2}{v_3}{v_4}^{-1}
  \nonumber\\
  =&\elf{v''_2}{v'_2}{v_2}{v_3}^{-1}
  \elf{v''_2}{v'_2}{v_2}{v_4}
  \nonumber\\
  &\elf{v''_2}{v_2}{v_3}{v_4}
\end{align}
we obtain
\begin{align}
  &\elf{v_1}{v''_2}{v_2}{v_3}
  \elf{v_1}{v''_2}{v_2}{v_4}^{-1}
  \nonumber\\
  &\elf{v''_2}{v_2}{v_3}{v_4}
  \BLvert \threeTriangles{v_1}{v_3}{v''_2}{v_4}{3} \Brangle
\end{align}

According to the chain rule in Eq. (\ref{eq:ChainRuleInBph}),
we have $[v''_2v_2]=[v''_2v'_2]\cdot[v'_2v_2]$.
The above action is identified as action of $A_{v_2}^{g\cdot h}$.
Therefore we conclude that
\[
  A^g_{v'} A^h_v = A^{g\cdot h}_v
\]
Though the above proof is done on a triangle plaquette,
the general proof on a plaquette of any other shape is straightforward.

\section{Mutations are Unitary Symmetry transformations}

\subsection{Symmetry}

When restricted to ground states
in $\Hil^0_{\Gamma}\subset\Hil^{B_f=1}_{\Gamma}$, we can impose the following \textit{chain rule} on all triangles in $\Gamma$:
 \begin{equation}
  \label{ChainRuleInGroundState}
   [v_i v_j]=[v_iv_j]\cdot[v_jv_k],
 \end{equation}
 where $v_i$, $v_j$, and $v_k$ are the three vertices of any triangle. Since the mutation operators are defined on the subspace $\Hil^{B_f=1}_{\Gamma}$ for all $\Gamma$, to show that $T_iP_{\Gamma}=P_{\Gamma'}T_i$, where $\Gamma'=T_i(\Gamma)$, we can neglect the face operators in the projectors $P_{\Gamma}$ and $P_{\Gamma'}$ defined in Eq. (\ref{eq:GSDprojector}). We thus need to show that
\be\label{eq:TBcommute}
 T_i\prod_{v\in\Gamma}A_v=\prod_{v\in\Gamma'}A_v T_i
\ee
holds for all mutation operators $T_i$, $i=1,2,3$, and any state in $\Hil^{B_f=1}_{\Gamma}$ on any graph $\Gamma$. Since a $T_i$ acts on at most four three triangles and does not affect any other triangle in the same graph, $T_i$ certainly commutes with any $A_v$ at any vertex $v$ that does not lie on the boundary of the triangles on which the $T_i$ acts. Hence, in Eq. (\ref{eq:TBcommute}) we can neglect the part of $\Gamma$ out of the scope of the action of $T_i$ and hence the $A_v$ acting on this part. We now pursue the proof in the following equations respectively for $T_1$, $T_2$, and $T_3$.
In the sequel, $[ijk]$ denotes a triangle whose vertices are $i$, $j$, and $k$ counter--clockwise, and \textquotedblleft$\cdots$" represents all the irrelevant factors, which are thus omitted.

\begin{align}
  & T_1([1'2'4'],[2'3'4'])\prod_{i=1}^4A_i\BLvert\twoTriangles{1}{2}{3}{4}{0}\Brangle\nonumber\\
= &\frac{1}{|G|^4}  \sum_{\substack{[1'1],[2'2],\\ [3'3],[4'4]\in G}}  \cdots\elf{1'}{1}{2}{4} \nonumber\\
&\times \elf{1'}{2'}{2}{4}^{-1}  \elf{2'}{2}{3}{4}\nonumber\\
&\times\elf{2'}{3'}{3}{4}^{-1}\elf{1'}{2'}{4'}{4}\nonumber\\ &\times\elf{2'}{3'}{4'}{4}T_1   \BLvert\twoTriangles{1'}{2'}{3'}{4'}{0}\Brangle\nonumber\\
= &\frac{1}{|G|^4}\sum_{\substack{[1'3'],[1'1],[2'2],\\ [3'3],[4'4]\in G}}  \cdots\elf{1'}{1}{2}{4} \nonumber\\
&\times \elf{1'}{2'}{2}{4}^{-1}  \elf{2'}{2}{3}{4}\nonumber\\ &\times \elf{2'}{3'}{3}{4}^{-1}\elf{1'}{2'}{4'}{4}\nonumber\\ &\times\elf{2'}{3'}{4'}{4}\elf{1'}{2'}{3'}{4'}\nonumber\\
&\times\BLvert\twoTriangles{1'}{2'}{3'}{4'}{1}\Brangle\label{eq:T1Bv}
\end{align}
\begin{align}
  & \prod_{i=1}^4A_iT_1([124],[234])\BLvert\twoTriangles{1}{2}{3}{4}{0}\Brangle\nonumber\\
= &  \sum_{[13]\in G} \elf{1}{2}{3}{4}\prod_{i=1}^4A_i     \BLvert\twoTriangles{1}{2}{3}{4}{0}\Brangle\nonumber\\
= &\frac{1}{|G|^4}\sum_{\substack{[13],[1'1],[2'2],\\ [3'3],[4'4]\in G}}  \cdots\elf{1'}{1}{2}{3}\nonumber\\
&\times \elf{1'}{1}{3}{4}  \elf{1'}{2'}{2}{3}^{-1}\nonumber\\
&\times \elf{1'}{2'}{3'}{3}\elf{1'}{3'}{3}{4}\nonumber\\
&\times \elf{1'}{3'}{4'}{4}\elf{1}{2}{3}{4}\nonumber\\
&\times\BLvert\twoTriangles{1'}{2'}{3'}{4'}{1}\Brangle\label{eq:BvT1}
\end{align}

It is straightforward to show that the RHS of Eq. (\ref{eq:T1Bv}) is equal to that of Eq. (\ref{eq:BvT1}) by knowing that $\sum_{[1'1],[13],[3'3]}=\sum_{[1'1],[1'3'],[3'3]}$ because of the chain rules $[1'1]\cdot[13]=[1'3]$ and $[1'3']\cdot[3'3]=[1'3]$, and by applying the following four $3$--cocycle conditions in order.
\begin{widetext}
\begin{align*}
& \elf{1'}{2'}{4'}{4}\elf{2'}{3'}{4'}{4}=\elf{1'}{3'}{4'}{4}   \elf{1'}{2'}{3'}{4}\elf{1'}{2'}{3'}{4'}^{-1},\\
& \elf{1'}{1}{2}{4}=\elf{1}{2}{3}{4}\elf{1'}{2}{3}{4}^{-1}     \elf{1'}{1}{3}{4}\elf{1'}{1}{2}{3},\\
& \elf{2'}{2}{3}{4}\elf{1'}{2}{3}{4}^{-1}=\elf{1'}{2'}{3}{4}^{-1}   \elf{1'}{2'}{2}{4}\elf{1'}{2'}{2}{3},\\
& \elf{2'}{3'}{3}{4}^{-1}\elf{1'}{2'}{3'}{4}\elf{1'}{2'}{3}{4}^{-1}   =\elf{1'}{3'}{3}{4}^{-1}\elf{1'}{2'}{3'}{3}.
\end{align*}
\end{widetext}
Thus, Eq. (\ref{eq:TBcommute}) holds for $T_1$.

The case of $T_2$ is a bit trickier. Let us write down how the LHS and RHS of Eq. (\ref{eq:TBcommute}) act on a state as follows.
\begin{widetext}
\begin{align}
&T_2([2'3'4'])\prod_{i=1}^3 A_i\BLvert\oneTriangle{2}{3}{4}\Brangle\nonumber\\
=& \frac{1}{|G|^3}\sum_{\substack{[2'2],[3'3],\\ [4'4]\in G}}\elf{2'}{2}{3}{4} \elf{2'}{3'}{3}{4}^{-1}\elf{2'}{3'}{4'}{4} T_2\BLvert\oneTriangle{2'}{3'}{4'}\Brangle\nonumber\\
=& \frac{1}{|G|^3}\sum_{\substack{[2'2],[3'3],[4'4],\\ [12'],[13'],[14'] \in G}}\elf{2'}{2}{3}{4} \elf{2'}{3'}{3}{4}^{-1}\nonumber\\
&\qquad\qquad\qquad\quad\times  \elf{2'}{3'}{4'}{4}\elf{1}{2'}{3'}{4'} \BLvert\threeTriangles{2'}{3'}{1}{4'}{4}\Brangle\label{eq:T2Bv}
\end{align}
\begin{align}
&\prod_{i=1}^3A_i T_2([234])\BLvert\oneTriangle{2}{3}{4}\Brangle= \sum_{\substack{[12],[13],\\ [14]\in G}}\elf{1}{2}{3}{4}^{-1} \prod_{i=1}^4A_i \BLvert\threeTriangles{2}{3}{1}{4}{4}\Brangle\nonumber\\
=& \frac{1}{|G|^4}\sum_{\substack{[12],[13],[14],[1'1],\\ [2'2],[3'3],[4'4]\in G}}\elf{1}{2}{3}{4}\elf{1'}{1}{3}{4}\elf{1'}{1}{2}{4}^{-1}\elf{1'}{1}{2}{3}\nonumber\\
&\qquad\qquad\qquad\quad \times\elf{1'}{2'}{2}{3}^{-1} \elf{1'}{2'}{2}{4}\elf{1'}{2'}{3'}{3}\nonumber\\
&\qquad\qquad\qquad\quad \times\elf{1'}{3'}{3}{4}^{-1}\elf{1'}{2'}{4'}{4}^{-1} \elf{1'}{3'}{4}{4}\BLvert\threeTriangles{2'}{3'}{1'}{4'}{4}\Brangle\nonumber\\
=& \frac{1}{|G|^3}\sum_{\substack{[2'2],[3'3],[4'4],\\ [1'2'],[1'3'],[1'4'] \in G}}\elf{2'}{2}{3}{4} \elf{2'}{3'}{3}{4}^{-1}\nonumber\\
&\qquad\qquad\qquad\quad\times  \elf{2'}{3'}{4'}{4}\elf{1'}{2'}{3'}{4'} \BLvert\threeTriangles{2'}{3'}{1'}{4'}{4}\Brangle,\label{eq:BvT2}
\end{align}
\end{widetext}
where the last equality is obtained by first plugging into the second row the following four $3$--cocycle conditions
\begin{align*}
&\tfrac{\elf{1'}{1}{3}{4}\elf{1'}{1}{2}{3}}{\elf{1'}{1}{2}{4}}=\tfrac{ \elf{1'}{2}{3}{4}}{\elf{1}{2}{3}{4}}\\
&\tfrac{\elf{1'}{2'}{2}{4}}{\elf{1'}{2'}{2}{3}}=\tfrac{\elf{2'}{2}{3}{4} \elf{1'}{2'}{3}{4}}{\elf{1'}{2}{3}{4}}\\
&\tfrac{\elf{1'}{2'}{3'}{3}}{\elf{1'}{3'}{3}{4}}=\tfrac{\elf{1'}{2'}{3'}{4}}{\elf{1'}{2'}{3}{4} \elf{2'}{3'}{3}{4}}\\
&\tfrac{\elf{1'}{3'}{4'}{4}}{\elf{1'}{2'}{4'}{4}}=\tfrac{\elf{2'}{3'}{4'}{4} \elf{1'}{2'}{3'}{4'}}{\elf{1'}{2'}{3'}{4}},
\end{align*}
and then by applying to the summations these chain rules:
\begin{align*}
& [1'1]\cdot[12]\cdot[2'2]^{-1}=[1'2'],\\
& [1'1]\cdot[13]\cdot[3'3]^{-1}=[1'3'],\\
& [1'1]\cdot[14]\cdot[4'4]^{-1}=[1'4'],
\end{align*}
which are guaranteed by the restriction of $B_f=1$ and the properties of the vertex operators $A_i$. The summations over $[2'2]$, $[3'3]$, and $[4'4]$ can be replaced by those over $[1'2']$, $[1'3']$, and $[1'4']$ respectively. Since $[1'1]$ does not appear in the $3$--cocycles any more, $\sum_{[1'1]}$ contributes a factor a $|G|$. Clearly, the vertex enumerated by $1'$ on the RHS of the last equality of Eq. (\ref{eq:BvT2}) is now dummy and thus can be re--enumerated by $1$ without altering its order relative the enumerations of the other three vertices. As such, we can see that the RHS of Eq. (\ref{eq:BvT2}) and that of Eq. (\ref{eq:T2Bv}) are actually identical. That is, Eq. (\ref{eq:TBcommute}) is true for $T_2$.

When it comes to the case of $T_3$, we have the action of the LHS and that of the RHS of Eq. (\ref{eq:TBcommute}) on the same state respectively being
\begin{widetext}
\begin{align}
  & T_3([1'2'3'4'])\prod_{i=1}^4A_i\BLvert\threeTriangles{1}{3}{2}{4}{3}\Brangle\nonumber\\
= & \frac{1}{|G|^4} \sum_{\substack{[1'1],[2'2],\\ [3'3],[4'4]\in G}}  \cdots\elf{1'}{1}{2}{4}  \elf{1'}{1}{2}{3}^{-1} \elf{1'}{2'}{2}{3}\nonumber\\
  &\qquad\quad\;\;\times\elf{1'}{2'}{2}{4}^{-1}\elf{2'}{2}{3}{4}\elf{1'}{2'}{3'}{3}^{-1}\nonumber\\ &\qquad\quad\;\;\times\elf{2'}{3'}{3}{4}^{-1}\elf{1'}{2'}{4'}{4}\elf{2'}{3'}{4'}{4}T_3     \BLvert\threeTriangles{1'}{3'}{2'}{4'}{3}\Brangle\nonumber\\
= & \frac{1}{|G|^4} \sum_{\substack{[1'1],[2'2],\\ [3'3],[4'4]\in G}}  \cdots\elf{1'}{1}{2}{4}  \elf{1'}{1}{2}{3}^{-1} \elf{1'}{2'}{2}{3}\elf{1'}{2'}{2}{4}^{-1}\nonumber\\
  &\qquad\quad\;\;\times\elf{2'}{2}{3}{4}\elf{1'}{2'}{3'}{3}^{-1}\elf{2'}{3'}{3}{4}^{-1}\nonumber\\ &\qquad\quad\;\;\times\elf{1'}{2'}{4'}{4}\elf{2'}{3'}{4'}{4}\elf{1'}{2'}{3'}{4'}\BLvert\oneTriangle{1'}{3'}{4'}\Brangle,\label{eq:T3Bv}
\end{align}
\be\label{eq:BvT3}
\begin{aligned}
&\prod_{i=1}^3A_iT_3([1234])\BLvert\threeTriangles{1}{3}{2}{4}{3}\Brangle =\elf{1}{2}{3}{4}\prod_{i=1}^3A_i\BLvert\oneTriangle{1}{3}{4}\Brangle\\
&=\frac{1}{|G|^3}\elf{1}{2}{3}{4}\sum_{\substack{[1'1],[3'3],\\ [4'4]\in G}}\elf{1'}{1}{3}{4}\elf{1'}{3'}{3}{4}^{-1}\elf{1'}{3'}{4'}{4} \BLvert\oneTriangle{1'}{3'}{4'}\Brangle.
\end{aligned}
\ee
\end{widetext}
The RHS of Eq. (\ref{eq:T3Bv}) and that of Eq. (\ref{eq:BvT3}) can be identified by applying the four $3$--cocycle conditions to the corresponding $3$--cocycles in Eq. (\ref{eq:T3Bv}).
\begin{widetext}
\be
\begin{aligned}
&\elf{1'}{1}{2}{4}\elf{1'}{1}{2}{3}^{-1}=\elf{1}{2}{3}{4}\elf{1'}{1}{3}{4}\elf{1'}{2}{3}{4}^{-1}\\
&\elf{1'}{2'}{2}{3}\elf{1'}{2'}{2}{4}^{-1}\elf{2'}{2}{3}{4}=\elf{1'}{2}{3}{4}\elf{1'}{2'}{3}{4}^{-1}\\
&\elf{1'}{2'}{3'}{3}^{-1}\elf{2'}{3'}{3}{4}^{-1}=\elf{1'}{3'}{3}{4}^{-1}\elf{1'}{2'}{3}{4}\elf{1'}{2'}{3'}{4}^{-1}\\
&\elf{2'}{3'}{4'}{4}\elf{1'}{2'}{4'}{4}=\elf{1'}{3'}{4'}{4}\elf{1'}{2'}{3'}{4}\elf{1'}{2'}{3'}{4'}^{-1}.
\end{aligned}
\ee
\end{widetext}
These $3$--cocycle conditions render the $3$--cocycles resulted independent of the $\sum_{[2'2]}$ in Eq. (\ref{eq:T3Bv}), which then contributes a factor of $|G|$. Therefore, we conclude that Eq. (\ref{eq:TBcommute}) holds for all mutation operators and any state in $\Hil^{B_f=1}_{\Gamma}$, which means that the mutation operators preserve the space of  ground states.

\subsection{Unitarity}

Now we need to show that all mutation transformations are unitary, i.e.,
they satisfy Eq. \eqref{unitary}. It is sufficient to show all generators
$T_1$, $T_2$, and $T_3$ are unitary. We show them respectively in the following.

We first demonstrate that $T_1$ is unitary not only on the ground states but also so over the entire subspace $\Hil^{B_f=1}$. Let us consider the action of $T^2_1$ on a generic basis state as follows, in which only the relevant part of the graph is shown.
\begin{align*}
  & T^2_1\BLvert \twoTriangles{v_1}{v_2}{v_3}{v_4}{0}\Brangle\\=
  &\sum_{[v_1v_3]\in G}\elf{v_1}{v_2}{v_3}{v_4}
   T_1\BLvert \twoTriangles{v_1}{v_2}{v_3}{v_4}{1} \Brangle\\
  =&\sum_{[v_1v_3]\in G}\sum_{[v_2v_4]'\in G}\elf{v_1}{v_2}{v_3}{v_4}\\ &\times\elf{v_1}{v_2}{v_3}{v_4}^{-1}\BLvert \twoTriangles{v_1}{v_2}{v_3}{v_4}{0} \Brangle\\
  =&\sum_{[v_1v_3]\in G}\sum_{[v_2v_4]'\in G}\delta_{[v_2v_4]',[v_2v_4]} \BLvert \twoTriangles{v_1}{v_2}{v_3}{v_4}{0} \Brangle\\
  =&\sum_{[v_1v_3]\in G}
  \BLvert \twoTriangles{v_1}{v_2}{v_3}{v_4}{0} \Brangle\\
  =&\BLvert \twoTriangles{v_1}{v_2}{v_3}{v_4}{0} \Brangle\quad,
\end{align*}
which deserves some explanation. In the second equality, the action of $T_1$ on two triangles sharing a horizontal edge is understood from rotating by $\pi/2$ either way the action of $T_1$ on two triangles with a vertical common edge as in the first equality because there is a global rotation symmetry on the graph. The inverse $3$--cocycle $\alpha^{-1}$ follows the rule described below Eq. (\ref{eq:T1move}). The group element $[v_2v_4]'$ that is summed over in the second equality is brought by the action of the second $T_1$, as the edge $[v_2v_4]$ in the second equality is a new edge relative to the $[v_2v_4]$ on the LHS of the equation, despite the same end vertices, and thus should be graced with a different group element $[v_2v_4]'$, which is however constrained by $B_f=1$ on the four triangles to be equal to $[v_2v_4]$. The last equality is also due to the restriction to the subspace $\Hil^{B_f=1}$.
As a result, $T_1=T_1^{-1}$, such that we can define that $T_1^{\dag}=T_1$ and infer that $T_1$ is unitary over the entire $\Hil^{B_f=1}$.

Next, we show that $T_3T_2=1$ on the entire subspace $\Hil^{B_f=1}$. We consider the action of $T_3T_2$ on a generic basis state as follows, in which only the relevant part of the graph is shown.
\begin{align*}
&T_3 T_2 \BLvert \oneTriangle{v_1}{v_2}{v_3} \Brangle\\
=&\sum_{\substack{[v_1q],[v_2q],\\ [v_3q]\in G}}\elf{q}{v_1}{v_2}{v_3}T_3\BLvert \threeTriangles{v_1}{v_2}{q}{v_3}{4} \Brangle\\
=&\sum_{[v_1q],[v_2q],[v_3q]\in G}\elf{q}{v_1}{v_2}{v_3}\\
&\times \elf{q}{v_1}{v_2}{v_3}^{-1}\BLvert \oneTriangle{v_1}{v_2}{v_3}\Brangle\\
=&\BLvert \oneTriangle{v_1}{v_2}{v_3}\Brangle,
\end{align*}
where the inverse $3$--cocycle $\alpha^{-1}$ follows from the rule introduced below Eq. (\ref{eq:T3move}). At this point, one may think that $T_3$ is the inverse of $T_2$ on $\Hil^{B_f=1}$. But this is not true because $T_2T_3\neq 1$ in general. Nevertheless, fortunately, as we now show, $T_2T_3=1$ on the ground states $\Hil^0$.

Since $T_2T_3P^0=T_2P^0T_3=P^0T_2T_3$ on $\Hil^0$, we have
\begin{align}
&T_2T_3P^0(1,2,3,4)\BLvert \threeTriangles{1}{3}{2}{4}{3} \Brangle\nonumber\\
=& T_2 P^0(1,2,3,4) T_3\BLvert \threeTriangles{1}{3}{2}{4}{3} \Brangle\nonumber\\
=& \elf{1}{2}{3}{4}T_2 P^0(1,2,3)\BLvert \oneTriangle{1}{3}{4}\Brangle\nonumber\\
=& \elf{1}{2}{3}{4}P^0(1,2,3) T_2\BLvert \oneTriangle{1}{3}{4}\Brangle\nonumber\\
=& P^0(1,2,3)\sum_{\substack{[12'],[2'3],\\ [2'4]\in G}}\elf{1}{2}{3}{4}\nonumber\\ \times & \elf{1}{2'}{3}{4}^{-1}\BLvert \threeTriangles{1}{3}{2'}{4}{3} \Brangle\nonumber\\
=& P^0(1,2,3)\sum_{\substack{[12'],[2'3],\\ [2'4]\in G}}\elf{1}{2'}{2}{3}\nonumber\\
\times & \elf{2'}{2}{3}{4}\elf{1}{2'}{2}{4}^{-1}\BLvert \threeTriangles{1}{3}{2'}{4}{3} \Brangle\nonumber\\
=& P^0(1,2,3)\frac{1}{|G|}\sum_{[2'2]\in G}\elf{1}{2'}{2}{3}\nonumber\\
\times & \elf{2'}{2}{3}{4}\elf{1}{2'}{2}{4}^{-1}\BLvert \threeTriangles{1}{3}{2'}{4}{3} \Brangle\nonumber\\
=& P^0(1,2,3)A_2\BLvert \threeTriangles{1}{3}{2}{4}{3} \Brangle\nonumber\\
=& P^0(1,2,3,4)\BLvert \threeTriangles{1}{3}{2}{4}{3} \Brangle, \label{eq:T2T3is1onH0}
\end{align}
where $P^0(1,2,3)$ and $P^0(1,2,3,4)$ are projectors acting on the vertices of the corresponding basis graph. The fifth equality in the equation above is obtained by applying to the two $3$--cocycles in the fourth row the following $3$--cocycle condition.
\begin{align*}
&\frac{\elf{1}{2}{3}{4}}{\elf{1}{2'}{3}{4}}=
\frac{\elf{1}{2'}{2}{3}\elf{2'}{2}{3}{4}}{\elf{1}{2'}{2}{4}}.
\end{align*}
We now explain why this $3$--cocycle condition is applicable. The action of $T_3$ on the basis graph in the first row of Eq. \eqref{eq:T2T3is1onH0} followed by an action of $T_2$ on the resulted basis vector in the third row of the equation can be viewed (see Fig. \ref{fig:T2T3isA} and assume the line $[2'2]$ nonexisting for the moment) as if the original tetrahedron $[1 2 3 4]$ is first flattened to be the triangle $[1 3 4]$ and then lifted again to a new tetrahedron $[1 2' 3 4]$. These two tetrahedra share a face, i.e., the triangle $[1 3 4]$, so there are seven triangles all told.The restriction to the subspace $\Hil^{B_f=1}$ imposes a chain rule of the three group elements on each of the seven triangles, such that the following identities on the group elements hold.
\begin{align*}
&[2' 1]\cdot [1 2]=[2' 3]\cdot [3 2],\\
&[2' 3]\cdot [3 2]=[2' 4]\cdot [4 2],\\
&[2' 4]\cdot [4 2]=[2' 1]\cdot [1 2].
\end{align*}
This enables one to add the line $[2' 2]$ and assign $[2' 2]= [2' 1]\cdot [1 2]=[2' 3]\cdot [3 2]=[2' 4]\cdot [4 2]$. Hence there are now five tetrahedra in Fig. \ref{fig:T2T3isA} with $B_f=1$ on any of the ten triangles, justifying the $3$--cocycle condition above. Furthermore, $[2' 2]$ determines $[12']$, $[2'3]$ and $[2'4]$, the summation in the fifth equality in Eq. \eqref{fig:T2T3isA} can be replaced by $\sum_{[2'2]}/|G|$, where the factor of $1/|G|$ arises because $[2'2]\in G$ is a new element to be summed over.
\begin{figure}[h!]
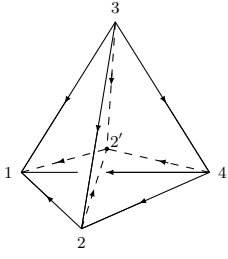

\centering
\BvTri{1}{2}{3}{4}{2}
\caption{The topology of the action of $T_2T_3$.}
\label{fig:T2T3isA}
\end{figure}

We infer from Eq. \eqref{fig:T2T3isA} that $T_2T_3=1$ on the ground states, which together with $T_3T_2=1$ on $\Hil^{B_f=1}$, implies that $T_2=T_3^{\dag}$ and $T_3=T_2^{\dag}$ on $\Hil^0$. That is, $T_2$ and $T_3$ are unitary on the ground states.

An interesting byproduct of this proof is that on the subspace $\Hil^{B_f=1}$,
\be
T_2T_3=A_v,
\ee
where $v$ is the vertex annihilated by the action of $T_3$.

\section{Ground state operators}
We prove Eq. \eqref{torusAxRewrite} in this appendix.
We start with the coefficient in Eq. \eqref{eq:torusAx},
which can be expressed by $I^x(a,b)/I^x(b,a)$, where $I^x(a,b)$ is
\begin{align}
  \label{Axab}
  I^x(a,b)=
  \alpha(a,bx^{-1},x)
  \alpha(bx^{-1},x,ax^{-1})
  \alpha(x,ax^{-1},xbx^{-1}).
\end{align}

We use the 3--cocycle condition \eqref{3CocycleCondition}
together with the normalization condition \eqref{NormalizationCondition}
to rewrite the three 3--cocycles above. By
\begin{align}
  \label{dtdtdt}
  &\delta\alpha(a,b,x^{-1},x)=1
  \nonumber\\
  &\delta\alpha(b,x^{-1},x,ax^{-1})=1
  \nonumber\\
  &\delta\alpha(x,x^{-1},xax^{-1},xbx^{-1})=1
  \nonumber\\
  &\delta\alpha(x^{-1},x,x^{-1},xax^{-1})=1
  \nonumber
\end{align}
and have
\begin{align}
  &\alpha(a,bx^{-1},x)=\frac{\alpha(ab,x^{-1},x)}{\alpha(b,x^{-1},x)\alpha(a,b,x^{-1})}
  \nonumber\\
  &\alpha(bx^{-1},x,ax^{-1})=\frac{\alpha(x^{-1},x,ax^{-1})\alpha(b,x^{-1},x)}{\alpha(b,x^{-1},xax^{-1})}
  \nonumber\\
  &\alpha(x,ax^{-1},xbx^{-1})=
  \nonumber\\
  &\quad\quad \frac{1}{\alpha(x^{-1},xax^{-1},xbx^{-1})\alpha(x,x^{-1},xax^{-1})}
  \nonumber\\
  &\alpha(x^{-1},x,ax^{-1})=\alpha(x,x^{-1},xax^{-1})\alpha(x^{-1},x,x^{-1})
  \nonumber
\end{align}

These four identities leads to
\begin{align}
  &I^x(a,b)=
  \nonumber\\
  &\frac{\alpha(ab,x^{-1},x)\alpha(x^{-1},x,x^{-1})}
  {\alpha(a,b,x^{-1})\alpha(b,x^{-1},xax^{-1})\alpha(x^{-1},xax^{-1},xbx^{-1})}
\end{align}
Using $ab=ba$, We evaluate $I^x(a,b)/I^x(b,a)$ directly and arrive at Eq. \eqref{torusAxRewrite}.

\section{Modular Transformations}\label{app:modularTrans}
Here we show how we construct the modular transformation operators $\str^x$ and $\ttr^x$.
\begin{align}
 & \str^x:  \BLvert\torusgraphST{1}{}{2}{}{3}{}{4}{}{0}\Brangle
  \nonumber\\
\mapsto &\elf{2'}{1}{3}{4}{\elf{2'}{1}{2}{4}}^{-1} \BLvert \torusgraphST{1}{}{2}{}{3}{}{2}{'}{0}\Brangle
  \nonumber\\
\mapsto & \elf{2'}{1}{3}{4}{\elf{2'}{1}{2}{4}}^{-1}
  \nonumber\\
 \times &{\elf{2'}{4'}{1}{3}}^{-1} \BLvert\torusgraphST{1}{}{2}{}{4}{'}{2}{'}{0}\Brangle
    \nonumber\\
\mapsto & \frac{\elf{2'}{1}{3}{4}}{\elf{2'}{1}{2}{4}\elf{2'}{4'}{1}{3}}
  \nonumber\\
 \times  & {\elf{1'}{2'}{1}{2}}^{-1} \BLvert\torusgraphST{1}{}{1}{'}{4}{'}{2}{'}{0}\Brangle
      \nonumber\\
\mapsto & \elf{2'}{1}{3}{4}{\elf{2'}{1}{2}{4}}^{-1}
  \nonumber\\
 \times  & {\elf{2'}{4'}{1}{3}}^{-1}{\elf{1'}{2'}{1}{2}}^{-1}
   \nonumber\\
 \times  & \frac{\elf{2'}{3'}{4'}{1}}{\elf{1'}{2'}{3'}{1}} \BLvert \torusgraphST{3}{'}{1}{'}{4}{'}{2}{'}{0}\Brangle
         \nonumber\\
\mapsto & \elf{2'}{1}{3}{4}{\elf{2'}{1}{2}{4}}^{-1}
  \nonumber\\
 \times  & {\elf{2'}{4'}{1}{3}}^{-1}{\elf{1'}{2'}{1}{2}}^{-1}
   \nonumber\\
 \times  & \elf{2'}{3'}{4'}{1}{\elf{1'}{2'}{3'}{1}}^{-1}
   \nonumber\\
 \times  & {\elf{1'}{2'}{3'}{4'}} \BLvert \torusgraphST{3}{'}{1}{'}{4}{'}{2}{'}{1}\Brangle
\end{align}
\begin{align}
 & \ttr^x :\BLvert\torusgraphST{1}{}{2}{}{3}{}{4}{}{0}\Brangle
  \nonumber\\
\mapsto &\elf{2'}{1}{3}{4}{\elf{2'}{1}{2}{4}}^{-1} \BLvert\torusgraphST{1}{}{2}{}{3}{}{2}{'}{0}\Brangle
  \nonumber\\
\mapsto &\elf{2'}{1}{3}{4}{\elf{2'}{1}{2}{4}}^{-1}
  \nonumber\\
 \times  & \elf{3''}{2'}{1}{3} \BLvert\torusgraphST{1}{}{2}{}{3}{''}{2}{'}{0}\Brangle
  \nonumber\\
\mapsto &\elf{2'}{1}{3}{4}{\elf{2'}{1}{2}{4}}^{-1}
  \nonumber\\
 \times  & \frac{\elf{3''}{2'}{1}{3} }{\elf{4''}{2'}{1}{2}} \BLvert \torusgraphST{1}{}{4}{''}{3}{''}{2}{'}{0}\Brangle
  \nonumber\\
\mapsto &\elf{2'}{1}{3}{4}{\elf{2'}{1}{2}{4}}^{-1}
  \nonumber\\
 \times  & \elf{3''}{2'}{1}{3} {\elf{4''}{2'}{1}{2}}^{-1}
  \nonumber\\
 \times  &\frac{ \elf{1''}{3''}{2'}{1}}{\elf{1''}{4''}{2'}{1}} \BLvert \torusgraphST{1}{''}{4}{''}{3}{''}{2}{'}{0}\Brangle
  \nonumber\\
\mapsto &\elf{2'}{1}{3}{4}{\elf{2'}{1}{2}{4}}^{-1}
  \nonumber\\
 \times  & \elf{3''}{2'}{1}{3} {\elf{4''}{2'}{1}{2}}^{-1}
  \nonumber\\
 \times  & \elf{1''}{3''}{2'}{1} {\elf{1''}{4''}{2'}{1}}^{-1}
  \nonumber\\
 \times  & \elf{1''}{3''}{4''}{2'}
  \BLvert \torusgraphST{1}{''}{4}{''}{3}{''}{2}{'}{1}\Brangle
\end{align}

\section{Solutions for $\str$ and $\ttr$ matrices}\label{app:solModularTrans}
\subsection{Projective Characters.}

The centralizer subgroups $Z_g$ are isomorphic for all elements $g$ of the conjugacy class $C^A$. Therefore the corresponding projective representations are also isomorphic. Given the representation $\widetilde{\rho}^g_{\mu}$ for a fixed $g\in C^A$, we construct $\widetilde{\rho}^{xgx^{-1}}_{\mu}$ as follows. For all $x\in G$, the elements $xhx^{-1}$ runs over all elements in $xZ_gx^{-1}$ while $h$ runs over all elements in $Z_g$. We can define a projective representation $\widetilde{\rho}^{xgx^{-1}}_{\mu}$ of $xZ_gx^{-1}$ from a given representation $\widetilde{\rho}^{g}_{\mu}$ of $Z_g$, by
\begin{align}
  \label{eq:rhoxgx}
  \widetilde{\rho}^{xgx^{-1}}_{\mu}(xhx^{-1})
  =\frac{\beta_g(x^{-1},xhx^{-1})}{\beta_g(h,x^{-1})}\widetilde{\rho}^g_{\mu}(h)
\end{align}

We verify that $\widetilde{\rho}^{xgx^{-1}}_{\mu}$ is indeed a $\beta_{xgx^{-1}}$ representation. We check the the multiplication rule for all $h_1,h_2\in Z_g$
\begin{align}
  \label{eq:betaxgxrepresentation}
  &\widetilde{\rho}^{xgx^{-1}}_{\mu}(xh_1x^{-1}) \widetilde{\rho}^{xgx^{-1}}_{\mu}(xh_2x^{-1})
  \nonumber\\
  =&\frac{\beta_g(x^{-1},xh_1x^{-1})}{\beta_g(h_1,x^{-1})}\frac{\beta_g(x^{-1},xh_2x^{-1})}{\beta_g(h_2,x^{-1})}
  \widetilde{\rho}^g_{\mu}(h_1)\widetilde{\rho}^g_{\mu}(h_2)
  \nonumber\\
  =&\frac{\beta_g(x^{-1},xh_1x^{-1})}{\beta_g(h_1,x^{-1})}\frac{\beta_g(x^{-1},xh_2x^{-1})}{\beta_g(h_2,x^{-1})}
  \beta_g(h_1,h_2)\widetilde{\rho}^g_{\mu}(h_1h_2)
  \nonumber\\
  =&\frac{\beta_g(x^{-1},xh_1x^{-1})}{\beta_g(h_1,x^{-1})}\frac{\beta_g(x^{-1},xh_2x^{-1})}{\beta_g(h_2,x^{-1})}
  \beta_g(h_1,h_2)
  \nonumber\\
  &\times
  \left[\frac{\beta_g(x^{-1},xh_1h_2x^{-1})}{\beta_g(h_1h_2,x^{-1})}\right]^{-1}
  \widetilde{\rho}^{xgx^{-1}}_{\mu}(xh_1h_2x^{-1})
  \nonumber\\
  =&\frac{\eta^g(h_1,x)\eta^g(h_2,x)}{\eta^g(h_1h_2,x)}\beta_g(h_1,h_2) \widetilde{\rho}^{xgx^{-1}}_{\mu}(xh_1h_2x^{-1})\nonumber\\
  =&\beta_{xgx^{-1}}(xh_1x^{-1},xh_2x^{-1})\widetilde{\rho}^{xgx^{-1}}_{\mu}(xh_1h_2x^{-1})
\end{align}
where the last equality is obtained by using the following relation:
\be\label{eq:3etaRel}
\frac{\eta^g(h_1,x)\eta^g(h_2,x)}{\eta^g(h_1h_2,x)}=\frac{\beta_{xgx^{-1}}(xh_1x^{-1}, xh_2x^{-1})}{\beta_g(h_1,h_2)}
\ee
which can be verified by directly applying the twisted 2--cocycle conditions $\widetilde{\delta}\beta_g=1$ in \eqref{eq:betacondition} successively to the
triples $(h_1,h_2,x^{-1})$, $(h_1,x^{-1},xh_2x^{-1})$, $(x^{-1},xh_1x^{-1},xh_2x^{-1})$ and make the corresponding substitutions.

An immediate consequence of the above isomorphism is the relation between the projective characters,
\be
\label{eq:CharacterInConjugacyClassesAppendix}
\widetilde{\chi}^{xgx^{-1}}_{\mu}(xhx^{-1})=\eta^g(h,x)\widetilde{\chi}^g_{\mu}(h),
\ee
which is the very Eq. (\ref{eq:CharacterInConjugacyClasses}). This relation leads to the following proposition.
\begin{proposition}\label{prop:chiIs0}
If $h\in Z_g$ is not $\beta_g$--regular, $\widetilde{\chi}^g_{\mu}(h)=0$.
\end{proposition}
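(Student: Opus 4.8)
The plan is to exploit the \emph{twisted} conjugation behaviour of the projective characters, Eq.~\eqref{eq:CharacterInConjugacyClassesAppendix}, specialized to the case where the conjugating element $x$ is taken inside the smaller centralizer $Z_{g,h}$, so that conjugation fixes both $g$ and $h$. In that situation Eq.~\eqref{eq:CharacterInConjugacyClassesAppendix} reads $\widetilde{\chi}^g_{\mu}(h)=\eta^g(h,x)\widetilde{\chi}^g_{\mu}(h)$ for every $x\in Z_{g,h}$, where $\eta^g(h,x)|_{Z_{g,h}}=\rho^g(h,x)$ is the one--dimensional representation of $Z_{g,h}$ introduced in Eq.~\eqref{eq:rhoEta}. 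Hence $\widetilde{\chi}^g_{\mu}(h)\big(1-\rho^g(h,x)\big)=0$ for all $x\in Z_{g,h}$.

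First I would record what $\beta_g$--regularity means here: $h$ is \emph{not} $\beta_g$--regular precisely when $\beta_g(h,k)\neq\beta_g(k,h)$ for some $k\in Z_{g,h}$, i.e.\ when $\rho^g(h,\cdot)=\beta_g(\cdot^{-1},h)/\beta_g(h,\cdot^{-1})$ is a nontrivial character of $Z_{g,h}$ (cf.\ the discussion below Eq.~\eqref{eq:rhoEta} and the proof of Eq.~\eqref{eq:SumBetaToDelta}). So by hypothesis there exists $x_0\in Z_{g,h}$ with $\rho^g(h,x_0)\neq 1$. Plugging this $x_0$ into the identity of the previous paragraph gives $\widetilde{\chi}^g_{\mu}(h)\,(1-\rho^g(h,x_0))=0$ with $(1-\rho^g(h,x_0))\neq 0$, whence $\widetilde{\chi}^g_{\mu}(h)=0$, which is the claim.

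The one genuine point needing care—and the step I expect to be the main obstacle—is justifying that Eq.~\eqref{eq:CharacterInConjugacyClassesAppendix} may legitimately be applied with $x\in Z_{g,h}$ and that in this restricted regime $\eta^g(h,x)$ really does collapse to $\rho^g(h,x)$ as defined in Eq.~\eqref{eq:rhoEta}. Concretely: starting from $\eta^g(h,x)=\beta_g(x^{-1},xhx^{-1})/\beta_g(h,x^{-1})$, one must use $xhx^{-1}=h$ (valid since $x$ commutes with $h$) to reduce this to $\beta_g(x^{-1},h)/\beta_g(h,x^{-1})$, which is exactly $\rho^g(h,x)$. This is a short manipulation, but it is the hinge of the argument, so I would state it explicitly. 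Everything else—the fact that $\rho^g(h,\cdot)$ is a homomorphism $Z_{g,h}\to U(1)$ (a consequence of $A^xA^y=A^{xy}$, already established) and the trivial algebraic step $c(1-w)=0,\ w\neq 1\Rightarrow c=0$—is routine. An alternative, equivalent route I might mention is to note that $\widetilde\rho^g_\mu(h)$ commutes with $\widetilde\rho^g_\mu(x)$ only up to the scalar $\beta_g(h,x)/\beta_g(x,h)$; when this scalar is not identically $1$ on $Z_{g,h}$, averaging $\widetilde\rho^g_\mu(x)\widetilde\rho^g_\mu(h)\widetilde\rho^g_\mu(x)^{-1}$ over $x\in Z_{g,h}$ and taking traces forces $\mathrm{tr}\,\widetilde\rho^g_\mu(h)=0$; this is the same computation repackaged via Schur--type averaging and leads to the same conclusion.
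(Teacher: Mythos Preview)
Your proposal is correct and follows essentially the same argument as the paper: pick $k\in Z_{g,h}$ with $\beta_g(h,k)\neq\beta_g(k,h)$, specialize Eq.~\eqref{eq:CharacterInConjugacyClassesAppendix} to $x=k^{-1}\in Z_{g,h}$ so that both $g$ and $h$ are fixed under conjugation, reduce $\eta^g(h,x)$ to $\beta_g(k,h)/\beta_g(h,k)\neq 1$, and conclude $\widetilde{\chi}^g_{\mu}(h)=0$. The paper's proof is just the one--line version of what you wrote; your extra packaging via $\rho^g$ and the Schur--averaging remark are fine but not needed.
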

\begin{proof}
The proof is straightforward. If $h\in Z_g$ is not $\beta_g$--regular, there must exist $k\in Z_{g,h}$, such that $\beta_g(h,k)\neq\beta_g(k,h)$. By setting $x=k^{-1}$ in Eq. \eqref{eq:CharacterInConjugacyClassesAppendix} we have
\[
\widetilde{\chi}^{g}_{\mu}(h)= \frac{\beta_g(k,h)} {\beta_g(h,k)} \widetilde{\chi}^g_{\mu}(h)\Longrightarrow\widetilde{\chi}^g_{\mu}(h)=0.
\]
\end{proof}

\subsection{Some Proofs}
We prove the equalities in Eq. \eqref{eq:TonAmu} in particular the last one as follows.
We first write down the action of $\ttr\ket{A,\mu}$ explicitly.
\begin{align*}
\ttr\ket{A,\mu}=&\ttr^e P^0\ket{A,\mu}=\ttr^e\ket{A,\mu}\\
  =&\frac{1}{|G|}\sum_{g\in C^A,h\in Z_g}\sum_{\nu=1}^{r(Z^A,\beta_{g^A})}\\
  \times &\overline{\widetilde{\chi}^g_{\mu}(h)}\alpha(g^{-1}h^{-1},g,h)\alpha(g^{-1}h^{-1},h,g)^{-1}\\
  \times & \alpha(h,g^{-1}h^{-1},g)\alpha(g,g^{-1}h^{-1},h)^{-1}\\
  \times &\alpha(g,h,g^{-1}h^{-1})\alpha(h,g,g^{-1}h^{-1})^{-1}\\
  \times &\alpha(g,hg^{-1},g)
  \overline{\widetilde{\chi}^{g}_{\nu}(g^{-1}h^{-1})}\ket{A,\nu},
\end{align*}
where the inverse transformation in Eq. \eqref{eq:ghToAmu} is used. Note that $g$ must belong to precisely one conjugacy class, so in the equation above the classes $C^A$ and $C^B$
are identified. Knowing this, we rewrite the action of $\ttr$ in the equation above by grouping the $3$--cocycles into the twisted $2$--cocycles.
\begin{align}
  &\ttr\ket{A,\mu}
  \nonumber\\
  =&\frac{1}{|G|}\sum_{\substack{g\in C^A\\ h\in Z_g}}\sum_{\nu=1}^{r(Z^A,\beta_{g^A})}
  {\beta_g(g,g^{-1}h)}
  {\widetilde{\chi}^g_{\mu}(h)}
  \overline{\widetilde{\chi}^g_{\nu}(g^{-1}h)}
  \nonumber\\
  & \times\frac{\beta_g(h,g^{-1}h^{-1})}{\beta_g(g^{-1}h^{-1},h)}
  \ket{A,\nu}\\
  =&\frac{1}{|G|}\sum_{\substack{g\in C^A\\ h\in Z_g}}\sum_{\nu=1}^{r(Z^A,\beta_{g^A})}
  {\beta_g(g,g^{-1}h)}
  {\widetilde{\chi}^g_{\mu}(h)}
  \overline{\widetilde{\chi}^g_{\nu}(g^{-1}h)}
  \nonumber\\
  & \times  [\eta^g(h,gh)]^{-1}\ket{A,\nu}\nonumber\\
=&\frac{1}{|G|}\sum_{\substack{g\in C^A\\ h\in Z_g}}\sum_{\nu=1}^{r(Z^A,\beta_{g^A})}
  {\beta_g(g,g^{-1}h)}
  {\widetilde{\chi}^g_{\mu}(h)}
  \overline{\widetilde{\chi}^g_{\nu}(g^{-1}h)}\ket{A,\nu}
  \nonumber
\end{align}
where in the last step use is made of Eq. (\ref{eq:CharacterInConjugacyClasses}) to absorb the $\eta^g$ term into the projective characters $\widetilde{\chi}^g_{\mu}$ and $\widetilde{\chi}^g_{\nu}$.

Because Eq. \eqref{eq:rhoggSchurLemma} says that $\widetilde{\rho}^g_{\mu}(g)$ is a multiple of the identity matrix, we have
\begin{align*}
\beta_g(g,g^{-1}h)\widetilde{\chi}^g_{\mu}(h)
  & =\mathrm{tr}\left[\beta_g(g,g^{-1}h)\widetilde{\rho}^g_{\mu}(h)\right]\\
  &=\mathrm{tr}[\widetilde{\rho}^g_{\mu}(g)\widetilde{\rho}^g_{\mu}(g^{-1}h)]\\
  &=\mathrm{tr}[\frac{\widetilde{\chi}^{g^A}_{\mu}(g^A)}{\text{dim}_{\mu}} \mathds{1}\widetilde{\rho}^g_{\mu}(g^{-1}h)]\\
&= \frac{\widetilde{\chi}^{g^A}_{\mu}(g^A)}{\text{dim}_{\mu}} \widetilde{\chi}^g_{\mu}(g^{-1}h).
\end{align*}
%
Clearly, $\sum_{h}=\sum_{g^{-1}h}$, together with the orthogonality condition in Eq. \eqref{eq:characterrelation}, the summation evaluates to
\be
\ttr
\ket{A,\mu}=\frac{\widetilde{\chi}^{g^A}_{\mu}(g^A)}{\text{dim}_{\mu}} \ket{A,\mu},
\ee
confirming Eq. \eqref{eq:TonAmu}.

In the sequel, we derive the $\str$--matrix in Eq. (\ref{eq:smatrix}). We first act the $\str$ operator on a generic eigenvector $\ket{B,\nu}$ of the $\ttr$ operator.
\begin{align*}
\str\ket{B,\nu}=&\str^e P^0\ket{B,\nu}=\str^e\ket{B,\nu}\\
  =&\frac{1}{\sqrt{|G|}}\sum_{g'\in C^B,h'\in Z_{g'}} \widetilde{\chi}^{g'}_{\nu}(h')\\
  \times &\alpha(g'^{-1}h'^{-1},g',h')\alpha(g'^{-1}h'^{-1},h',g')^{-1}\\
  \times & \alpha(h'^{-1},g'^{-1},g')^{-1}\alpha(g',g'^{-1}h'^{-1},h')^{-1}\\
  \times &\alpha(g'^{-1}h'^{-1},g',g'^{-1})\alpha(g'^{-1},g^{-1}h^{-1},e)^{-1}\\
  \times &\alpha(g',g'^{-1}h'^{-1},g')\ket{h'^{-1},g'}.
\end{align*}
Notice that in the equation above, the first two and the fourth $\alpha$ terms define a twisted $2$--cocycle $\beta_{g'}(g'^{-1}h'^{-1},h')^{-1}$ the sixth $\alpha$ equals 1 as it is normalized, and by the $3$--cocycle condition the fifth and seventh $\alpha$ terms are equal to $\alpha(h'^{-1},g',g'^{-1})\alpha(g', g'^{-1}h'^{-1},g')^{-1}$, which together with the third $\alpha$ term, define another twisted $2$--cocycle $\beta_{g'}(h'^{-1},g'^{-1})^{-1}$. As such, the above equation becomes
\begin{align*}
\str\ket{B,\nu}=&\frac{1}{\sqrt{|G|}}\sum_{g'\in C^B, h'\in Z_{g'}} \widetilde{\chi}^{g'}_{\nu}(h')\\
\times & \beta_{g'}(g'^{-1}h'^{-1},h')^{-1}\beta_{g'}(h'^{-1},g'^{-1})^{-1} \ket{h'^{-1},g'}\\
=&\frac{1}{\sqrt{|G|}}\sum_{g'\in C^B, h'\in Z_{g'}} \overline{\widetilde{\chi}^{g'}_{\nu}(h'^{-1})} \beta_{g'}(h'^{-1},h')\\
\times & \beta_{g'}(g'^{-1}h'^{-1},h')^{-1}\beta_{g'}(h'^{-1},g'^{-1})^{-1} \ket{h'^{-1},g'}\\
=&\frac{1}{\sqrt{|G|}}\sum_{g'\in C^B, h'\in Z_{g'}} \overline{\widetilde{\chi}^{g'}_{\nu}(h'^{-1})}\ket{h'^{-1},g'},
\end{align*}
where the second equality and third equality are respectively the results of the following two identities.
\begin{align*}
&\widetilde{\rho}^{g'}_{\nu}(h')\widetilde{\rho}^{g'}_{\nu}(h'^{-1})=\beta_{g'}(h'^{-1},h') \mathds{1}\\
\Leftrightarrow \quad& \widetilde{\rho}^{g'}_{\nu}(h')=\left[\widetilde{\rho}^{g'}_{\nu}(h'^{-1})\right]^{\dag} \beta_{g'}(h'^{-1},h')\\
\Leftrightarrow \quad& \widetilde{\chi}^{g'}_{\nu}(h')=\overline{\widetilde{\chi}^{g'}_{\nu}(h'^{-1})} \beta_{g'}(h'^{-1},h'),
\end{align*}
which is due to the unitarity of the projective representation $\widetilde{\rho}$.
\begin{align*}
&\frac{\beta_{g'}(h'^{-1},h')\beta_{g'}(g'^{-1},e)}{\beta_{g'}(g'^{-1}h'^{-1},h') \beta_{g'}(g'^{-1},h'^{-1})}=1\\
\Longleftrightarrow\quad & \frac{\beta_{g'}(h'^{-1},h')}{\beta_{g'}(g'^{-1}h'^{-1},h') \beta_{g'}(h'^{-1},g'^{-1})}=1,
\end{align*}
which is a consequence of the (twisted) $2$--cocycle condition, the normalization of $\beta_{g'}$, and the fact that $\beta_{g'}(g'^{-1},h'^{-1})= \beta_{g'}(h'^{-1},g'^{-1})$ because $h'^{-1}$ must be $\beta_{g'}$--regular otherwise $\widetilde{\chi}^{g'}_{\nu}(h'^{-1})$ vanishes.

As such, the $\str$--matrix reads
\begin{align*}
&s_{(A\mu)(B\nu)}\\
=&\bbra{A,\mu}\str\ket{B,\nu}\\
=&\frac{1}{|G|}\sum_{\substack{g\in C^A\\ h\in Z_g}}\sum_{\substack{g'\in C^B\\ h'\in Z_{g'}}}\overline{\widetilde{\chi}^g_{\mu}(h)} \overline{\widetilde{\chi}^{g'}_{\nu}(h'^{-1})} \langle g,h\ket{h'^{-1},g'}\\
=&\frac{1}{|G|}\sum_{\substack{g\in C^A\\ h\in Z_g}}\sum_{\substack{h\in C^B\\ g\in Z_{h}}}\overline{\widetilde{\chi}^g_{\mu}(h) \widetilde{\chi}^h_{\nu}(g)},
\end{align*}
where $\langle g,h\ket{h'^{-1},g'}=\delta_{h'^{-1},g}\delta_{g',h}$ is understood. This proves Eq. (\ref{eq:smatrix}).
\end{appendix}

\bibliographystyle{apsrev}
\bibliography{StringNet}

\begin{thebibliography}{44}
\expandafter\ifx\csname natexlab\endcsname\relax\def\natexlab#1{#1}\fi
\expandafter\ifx\csname bibnamefont\endcsname\relax
  \def\bibnamefont#1{#1}\fi
\expandafter\ifx\csname bibfnamefont\endcsname\relax
  \def\bibfnamefont#1{#1}\fi
\expandafter\ifx\csname citenamefont\endcsname\relax
  \def\citenamefont#1{#1}\fi
\expandafter\ifx\csname url\endcsname\relax
  \def\url#1{\texttt{#1}}\fi
\expandafter\ifx\csname urlprefix\endcsname\relax\def\urlprefix{URL }\fi
\providecommand{\bibinfo}[2]{#2}
\providecommand{\eprint}[2][]{\url{#2}}

\bibitem[{\citenamefont{Wen}(1989)}]{Wen1989}
\bibinfo{author}{\bibfnamefont{X.}~\bibnamefont{Wen}},
  \bibinfo{journal}{Physical Review B} \textbf{\bibinfo{volume}{40}},
  \bibinfo{pages}{7387} (\bibinfo{year}{1989}).

\bibitem[{\citenamefont{Tsui et~al.}(1982)\citenamefont{Tsui, Stormer, and
  Gossard}}]{Tsui1982}
\bibinfo{author}{\bibfnamefont{D.~C.} \bibnamefont{Tsui}},
  \bibinfo{author}{\bibfnamefont{H.~L.} \bibnamefont{Stormer}},
  \bibnamefont{and} \bibinfo{author}{\bibfnamefont{A.~C.}
  \bibnamefont{Gossard}}, \bibinfo{journal}{Physical Review Letters}
  \textbf{\bibinfo{volume}{48}}, \bibinfo{pages}{1559} (\bibinfo{year}{1982}).

\bibitem[{\citenamefont{Laughlin}(1983)}]{Laughlin1983}
\bibinfo{author}{\bibfnamefont{R.~B.} \bibnamefont{Laughlin}},
  \bibinfo{journal}{Physical Review Letters} \textbf{\bibinfo{volume}{50}},
  \bibinfo{pages}{1395} (\bibinfo{year}{1983}).

\bibitem[{\citenamefont{Read and Sachdev}(1991)}]{Read1991}
\bibinfo{author}{\bibfnamefont{N.}~\bibnamefont{Read}} \bibnamefont{and}
  \bibinfo{author}{\bibfnamefont{S.}~\bibnamefont{Sachdev}},
  \bibinfo{journal}{Physical Review Letters} \textbf{\bibinfo{volume}{66}},
  \bibinfo{pages}{1773} (\bibinfo{year}{1991}).

\bibitem[{\citenamefont{Wen}(1991)}]{Wen1991}
\bibinfo{author}{\bibfnamefont{X.}~\bibnamefont{Wen}},
  \bibinfo{journal}{Physical Review B} \textbf{\bibinfo{volume}{44}},
  \bibinfo{pages}{2664} (\bibinfo{year}{1991}).

\bibitem[{\citenamefont{Moessner and Sondhi}(2001)}]{Moessner2001}
\bibinfo{author}{\bibfnamefont{R.}~\bibnamefont{Moessner}} \bibnamefont{and}
  \bibinfo{author}{\bibfnamefont{S.~L.} \bibnamefont{Sondhi}},
  \bibinfo{journal}{Physical Review Letters} \textbf{\bibinfo{volume}{86}},
  \bibinfo{pages}{1881} (\bibinfo{year}{2001}).

\bibitem[{\citenamefont{Kalmeyer and Laughlin}(1987)}]{Kalmeyer1987}
\bibinfo{author}{\bibfnamefont{V.}~\bibnamefont{Kalmeyer}} \bibnamefont{and}
  \bibinfo{author}{\bibfnamefont{R.}~\bibnamefont{Laughlin}},
  \bibinfo{journal}{Physical Review Letters} \textbf{\bibinfo{volume}{59}},
  \bibinfo{pages}{2095} (\bibinfo{year}{1987}).

\bibitem[{\citenamefont{Wen et~al.}(1989)\citenamefont{Wen, Wilczek, and
  Zee}}]{Wen1989a}
\bibinfo{author}{\bibfnamefont{X.}~\bibnamefont{Wen}},
  \bibinfo{author}{\bibfnamefont{F.}~\bibnamefont{Wilczek}}, \bibnamefont{and}
  \bibinfo{author}{\bibfnamefont{A.}~\bibnamefont{Zee}},
  \bibinfo{journal}{Physical Review B} \textbf{\bibinfo{volume}{39}},
  \bibinfo{pages}{11413} (\bibinfo{year}{1989}).

\bibitem[{\citenamefont{Read and Green}(2000)}]{Read2000}
\bibinfo{author}{\bibfnamefont{N.}~\bibnamefont{Read}} \bibnamefont{and}
  \bibinfo{author}{\bibfnamefont{D.}~\bibnamefont{Green}},
  \bibinfo{journal}{Physical Review B} \textbf{\bibinfo{volume}{61}},
  \bibinfo{pages}{10267} (\bibinfo{year}{2000}).

\bibitem[{\citenamefont{Gurarie and Radzihovsky}(2007)}]{Gurarie2007}
\bibinfo{author}{\bibfnamefont{V.}~\bibnamefont{Gurarie}} \bibnamefont{and}
  \bibinfo{author}{\bibfnamefont{L.}~\bibnamefont{Radzihovsky}},
  \bibinfo{journal}{Physical Review B} \textbf{\bibinfo{volume}{75}}
  (\bibinfo{year}{2007}).

\bibitem[{\citenamefont{Levin and Wen}(2005)}]{Levin2004}
\bibinfo{author}{\bibfnamefont{M.}~\bibnamefont{Levin}} \bibnamefont{and}
  \bibinfo{author}{\bibfnamefont{X.-G.} \bibnamefont{Wen}},
  \bibinfo{journal}{Physical Review B} \textbf{\bibinfo{volume}{71}},
  \bibinfo{pages}{21} (\bibinfo{year}{2005}).

\bibitem[{\citenamefont{Gu and Wen}(2009)}]{Gu2009}
\bibinfo{author}{\bibfnamefont{Z.-C.} \bibnamefont{Gu}} \bibnamefont{and}
  \bibinfo{author}{\bibfnamefont{X.-G.} \bibnamefont{Wen}},
  \bibinfo{journal}{Physical Review B} \textbf{\bibinfo{volume}{80}}
  (\bibinfo{year}{2009}).

\bibitem[{\citenamefont{Pollmann et~al.}(2012)\citenamefont{Pollmann, Berg,
  Turner, and Oshikawa}}]{Pollmann2012}
\bibinfo{author}{\bibfnamefont{F.}~\bibnamefont{Pollmann}},
  \bibinfo{author}{\bibfnamefont{E.}~\bibnamefont{Berg}},
  \bibinfo{author}{\bibfnamefont{A.}~\bibnamefont{Turner}}, \bibnamefont{and}
  \bibinfo{author}{\bibfnamefont{M.}~\bibnamefont{Oshikawa}},
  \bibinfo{journal}{Physical Review B} \textbf{\bibinfo{volume}{85}},
  \bibinfo{pages}{075125} (\bibinfo{year}{2012}).

\bibitem[{\citenamefont{Haldane}(1983)}]{Haldane1983}
\bibinfo{author}{\bibfnamefont{F.}~\bibnamefont{Haldane}},
  \bibinfo{journal}{Physics Letters A} \textbf{\bibinfo{volume}{93}},
  \bibinfo{pages}{464} (\bibinfo{year}{1983}).

\bibitem[{\citenamefont{Kane and Mele}(2005{\natexlab{a}})}]{Kane2005}
\bibinfo{author}{\bibfnamefont{C.~L.} \bibnamefont{Kane}} \bibnamefont{and}
  \bibinfo{author}{\bibfnamefont{E.~J.} \bibnamefont{Mele}},
  \bibinfo{journal}{Physical Review Letters} \textbf{\bibinfo{volume}{95}}
  (\bibinfo{year}{2005}{\natexlab{a}}).

\bibitem[{\citenamefont{Kane and Mele}(2005{\natexlab{b}})}]{Kane2005a}
\bibinfo{author}{\bibfnamefont{C.~L.} \bibnamefont{Kane}} \bibnamefont{and}
  \bibinfo{author}{\bibfnamefont{E.~J.} \bibnamefont{Mele}},
  \bibinfo{journal}{Physical Review Letters} \textbf{\bibinfo{volume}{95}}
  (\bibinfo{year}{2005}{\natexlab{b}}).

\bibitem[{\citenamefont{Bernevig and Zhang}(2006)}]{Bernevig2006}
\bibinfo{author}{\bibfnamefont{B.~A.} \bibnamefont{Bernevig}} \bibnamefont{and}
  \bibinfo{author}{\bibfnamefont{S.-C.} \bibnamefont{Zhang}},
  \bibinfo{journal}{Physical Review Letters} \textbf{\bibinfo{volume}{96}}
  (\bibinfo{year}{2006}).

\bibitem[{\citenamefont{Moore and Balents}(2007)}]{Moore2007}
\bibinfo{author}{\bibfnamefont{J.}~\bibnamefont{Moore}} \bibnamefont{and}
  \bibinfo{author}{\bibfnamefont{L.}~\bibnamefont{Balents}},
  \bibinfo{journal}{Physical Review B} \textbf{\bibinfo{volume}{75}}
  (\bibinfo{year}{2007}).

\bibitem[{\citenamefont{Fu et~al.}(2007)\citenamefont{Fu, Kane, and
  Mele}}]{Fu2007}
\bibinfo{author}{\bibfnamefont{L.}~\bibnamefont{Fu}},
  \bibinfo{author}{\bibfnamefont{C.}~\bibnamefont{Kane}}, \bibnamefont{and}
  \bibinfo{author}{\bibfnamefont{E.}~\bibnamefont{Mele}},
  \bibinfo{journal}{Physical Review Letters} \textbf{\bibinfo{volume}{98}}
  (\bibinfo{year}{2007}).

\bibitem[{\citenamefont{Qi et~al.}(2008)\citenamefont{Qi, Hughes, and
  Zhang}}]{Qi2008}
\bibinfo{author}{\bibfnamefont{X.-L.} \bibnamefont{Qi}},
  \bibinfo{author}{\bibfnamefont{T.~L.} \bibnamefont{Hughes}},
  \bibnamefont{and} \bibinfo{author}{\bibfnamefont{S.-C.} \bibnamefont{Zhang}},
  \bibinfo{journal}{Physical Review B} \textbf{\bibinfo{volume}{78}}
  (\bibinfo{year}{2008}).
  
\bibitem[{\citenamefont{Gu and Wen}(2012)}]{Gu2012}
\bibinfo{author}{\bibfnamefont{Z.-C.} \bibnamefont{Gu}} \bibnamefont{and}
  \bibinfo{author}{\bibfnamefont{X.-G.} \bibnamefont{Wen}},
  p.~\bibinfo{pages}{49} (\bibinfo{year}{2012}), \eprint{1201.2648}.

\bibitem[{\citenamefont{Hung and Wan}(2012)}]{Hung2012}
\bibinfo{author}{\bibfnamefont{L.-y.} \bibnamefont{Hung}} \bibnamefont{and}
  \bibinfo{author}{\bibfnamefont{Y.}~\bibnamefont{Wan}},   \bibinfo{journal}{Physical Review B} \textbf{\bibinfo{volume}{86}},
  \bibinfo{pages}{25312} (\bibinfo{year}{2012}), \eprint{1207.6169}.

\bibitem[{\citenamefont{Chen et~al.}(2011{\natexlab{a}})\citenamefont{Chen,
  Liu, and Wen}}]{Chen2011e}
\bibinfo{author}{\bibfnamefont{X.}~\bibnamefont{Chen}},
  \bibinfo{author}{\bibfnamefont{Z.-X.} \bibnamefont{Liu}}, \bibnamefont{and}
  \bibinfo{author}{\bibfnamefont{X.-G.} \bibnamefont{Wen}},
  \bibinfo{journal}{Physical Review B} \textbf{\bibinfo{volume}{84}},
  \bibinfo{pages}{53} (\bibinfo{year}{2011}{\natexlab{a}}), \eprint{1106.4772}.

\bibitem[{\citenamefont{Chen et~al.}(2011{\natexlab{b}})\citenamefont{Chen, Gu,
  Liu, and Wen}}]{Chen2011d}
\bibinfo{author}{\bibfnamefont{X.}~\bibnamefont{Chen}},
  \bibinfo{author}{\bibfnamefont{Z.-C.} \bibnamefont{Gu}},
  \bibinfo{author}{\bibfnamefont{Z.-X.} \bibnamefont{Liu}}, \bibnamefont{and}
  \bibinfo{author}{\bibfnamefont{X.-G.} \bibnamefont{Wen}},
  p.~\bibinfo{pages}{53} (\bibinfo{year}{2011}{\natexlab{b}}),
  \eprint{1106.4772}.

\bibitem[{\citenamefont{Lu and Vishwanath}(2012)}]{Lu2012}
\bibinfo{author}{\bibfnamefont{Y.-M.} \bibnamefont{Lu}} \bibnamefont{and}
  \bibinfo{author}{\bibfnamefont{A.}~\bibnamefont{Vishwanath}},
  \bibinfo{journal}{Physical Review B} \textbf{\bibinfo{volume}{86}},
  \bibinfo{pages}{125119} (\bibinfo{year}{2012}).

\bibitem[{\citenamefont{Dijkgraaf et~al.}(1989)\citenamefont{Dijkgraaf, Vafa,
  Verlinde, and Verlinde}}]{Dijkgraaf1989}
\bibinfo{author}{\bibfnamefont{R.}~\bibnamefont{Dijkgraaf}},
  \bibinfo{author}{\bibfnamefont{C.}~\bibnamefont{Vafa}},
  \bibinfo{author}{\bibfnamefont{E.}~\bibnamefont{Verlinde}}, \bibnamefont{and}
  \bibinfo{author}{\bibfnamefont{H.}~\bibnamefont{Verlinde}},
  \bibinfo{journal}{Commun. Math. Phys.} \textbf{\bibinfo{volume}{526}},
  \bibinfo{pages}{485} (\bibinfo{year}{1989}).

\bibitem[{\citenamefont{{Dijkgraaf, Robbert and Witten}}(1990)}]{Dijkgraaf1990}
\bibinfo{author}{\bibfnamefont{E.}~\bibnamefont{{Dijkgraaf, Robbert and
  Witten}}}, \bibinfo{journal}{Communications in Mathematical Physics}
  \textbf{\bibinfo{volume}{429}}, \bibinfo{pages}{393} (\bibinfo{year}{1990}).

\bibitem[{\citenamefont{Propitius}(1995)}]{Propitius1995}
\bibinfo{author}{\bibfnamefont{M.}~\bibnamefont{Propitius}}, Ph.D. thesis,
  \bibinfo{school}{University of Amsterdam} (\bibinfo{year}{1995}),
  \eprint{9511195v1}.

\bibitem[{\citenamefont{and Arnold and Khesin}(1998)}]{ArnoldHydrobook1998}
\bibinfo{author}{\bibfnamefont{V.~I.~A.} \bibnamefont{and Arnold}}
  \bibnamefont{and} \bibinfo{author}{\bibfnamefont{B.~A.}
  \bibnamefont{Khesin}}, \emph{\bibinfo{title}{{Topological Methods in
  Hydrodynamics}}} (\bibinfo{publisher}{Springer}, \bibinfo{year}{1998}).

\bibitem[{\citenamefont{Kitaev}(2003)}]{Kitaev2003a}
\bibinfo{author}{\bibfnamefont{A.}~\bibnamefont{Kitaev}},
  \bibinfo{journal}{Annals of Physics} \textbf{\bibinfo{volume}{303}},
  \bibinfo{pages}{2} (\bibinfo{year}{2003}).

\bibitem[{\citenamefont{Kitaev}(2006)}]{Kitaev2006}
\bibinfo{author}{\bibfnamefont{A.}~\bibnamefont{Kitaev}},
  \bibinfo{journal}{Annals of Physics} \textbf{\bibinfo{volume}{321}},
  \bibinfo{pages}{2} (\bibinfo{year}{2006}).

\bibitem[{\citenamefont{Pachner}(1978)}]{Pachner1978}
\bibinfo{author}{\bibfnamefont{U.}~\bibnamefont{Pachner}},
  \bibinfo{journal}{Arch. Math.} \textbf{\bibinfo{volume}{30}},
  \bibinfo{pages}{89} (\bibinfo{year}{1978}).

\bibitem[{\citenamefont{Pachner}(1987)}]{Pachner1987}
\bibinfo{author}{\bibfnamefont{U.}~\bibnamefont{Pachner}},
  \bibinfo{journal}{Ahb. Math. Sem. Univ. Hamburg}
  \textbf{\bibinfo{volume}{57}}, \bibinfo{pages}{69} (\bibinfo{year}{1987}).

\bibitem[{\citenamefont{Coste et~al.}(2000)\citenamefont{Coste, Gannon, and
  Ruelle}}]{Coste2000}
\bibinfo{author}{\bibfnamefont{A.}~\bibnamefont{Coste}},
  \bibinfo{author}{\bibfnamefont{T.}~\bibnamefont{Gannon}}, \bibnamefont{and}
  \bibinfo{author}{\bibfnamefont{P.}~\bibnamefont{Ruelle}},
  \bibinfo{journal}{Nuclear Physics B} \textbf{\bibinfo{volume}{581}},
  \bibinfo{pages}{679} (\bibinfo{year}{2000}).

\bibitem[{\citenamefont{Dijkgraaf et~al.}(1991)\citenamefont{Dijkgraaf,
  Pasquier, and Roche}}]{Dijkgraaf1991}
\bibinfo{author}{\bibfnamefont{R.}~\bibnamefont{Dijkgraaf}},
  \bibinfo{author}{\bibfnamefont{V.}~\bibnamefont{Pasquier}}, \bibnamefont{and}
  \bibinfo{author}{\bibfnamefont{P.}~\bibnamefont{Roche}},
  \bibinfo{journal}{Nuclear Physics B-Proceedings \ldots} pp.
  \bibinfo{pages}{60--72} (\bibinfo{year}{1991}).

\bibitem[{\citenamefont{Bais et~al.}(2002)\citenamefont{Bais, Schroers, and
  Slingerland}}]{Bais2002}
\bibinfo{author}{\bibfnamefont{F.}~\bibnamefont{Bais}},
  \bibinfo{author}{\bibfnamefont{B.}~\bibnamefont{Schroers}}, \bibnamefont{and}
  \bibinfo{author}{\bibfnamefont{J.}~\bibnamefont{Slingerland}},
  \bibinfo{journal}{Physical Review Letters} \textbf{\bibinfo{volume}{89}},
  \bibinfo{pages}{181601} (\bibinfo{year}{2002}).

\bibitem[{\citenamefont{Moore and Seiberg}(1989)}]{Moore1989}
\bibinfo{author}{\bibfnamefont{G.}~\bibnamefont{Moore}} \bibnamefont{and}
  \bibinfo{author}{\bibfnamefont{N.}~\bibnamefont{Seiberg}},
  \bibinfo{journal}{Communications in Mathematical Physics}
  \textbf{\bibinfo{volume}{123}}, \bibinfo{pages}{177} (\bibinfo{year}{1989}).

\bibitem[{\citenamefont{Wakui}(1992)}]{Wakui1992}
\bibinfo{author}{\bibfnamefont{M.}~\bibnamefont{Wakui}},
  \bibinfo{journal}{Osaka J. Math.} \textbf{\bibinfo{volume}{29}},
  \bibinfo{pages}{675} (\bibinfo{year}{1992}).

\bibitem[{\citenamefont{Cincio and Vidal}(2012)}]{Cincio2012}
\bibinfo{author}{\bibfnamefont{L.}~\bibnamefont{Cincio}} \bibnamefont{and}
  \bibinfo{author}{\bibfnamefont{G.}~\bibnamefont{Vidal}},
  p.~\bibinfo{pages}{10} (\bibinfo{year}{2012}), \eprint{1208.2623}.

\bibitem[{\citenamefont{Hu et~al.}(2012)\citenamefont{Hu, Stirling, and
  Wu}}]{Hu2012}
\bibinfo{author}{\bibfnamefont{Y.}~\bibnamefont{Hu}},
  \bibinfo{author}{\bibfnamefont{S.}~\bibnamefont{Stirling}}, \bibnamefont{and}
  \bibinfo{author}{\bibfnamefont{Y.-s.} \bibnamefont{Wu}},
  \bibinfo{journal}{Physical Review B} \textbf{\bibinfo{volume}{85}},
  \bibinfo{pages}{075107} (\bibinfo{year}{2012}), \eprint{arXiv:1105.5771v3}.

\bibitem[{\citenamefont{Wang}(2010)}]{Wang2010}
\bibinfo{author}{\bibfnamefont{Z.}~\bibnamefont{Wang}},
  \emph{\bibinfo{title}{{Topological Quantum Computation}}}
  (\bibinfo{publisher}{American Mathematical Society}, \bibinfo{year}{2010}),
  \bibinfo{edition}{1st} ed., ISBN \bibinfo{isbn}{978-0821849309}.

\bibitem[{\citenamefont{Turaev}(1994)}]{Turaev1994}
\bibinfo{author}{\bibfnamefont{V.~G.} \bibnamefont{Turaev}},
  \emph{\bibinfo{title}{{Quantum Invariants of Knots and 3-manifolds}}}
  (\bibinfo{publisher}{Walter de Gruyter}, \bibinfo{address}{Berlin},
  \bibinfo{year}{1994}).

\bibitem[{\citenamefont{K\'{A}D\'{A}R et~al.}(2009)\citenamefont{K\'{A}D\'{A}R,
  MARZUOLI, and RASETTI}}]{KADAR2009}
\bibinfo{author}{\bibfnamefont{Z.}~\bibnamefont{K\'{A}D\'{A}R}},
  \bibinfo{author}{\bibfnamefont{A.}~\bibnamefont{MARZUOLI}}, \bibnamefont{and}
  \bibinfo{author}{\bibfnamefont{M.}~\bibnamefont{RASETTI}},
  \bibinfo{journal}{International Journal of Quantum Information}
  \textbf{\bibinfo{volume}{07}}, \bibinfo{pages}{195} (\bibinfo{year}{2009}),
  ISSN \bibinfo{issn}{0219-7499}, \eprint{arXiv:0806.3883v1}.

\bibitem[{\citenamefont{Levin and Gu}(2012)}]{Levin2012}
\bibinfo{author}{\bibfnamefont{M.}~\bibnamefont{Levin}} \bibnamefont{and}
  \bibinfo{author}{\bibfnamefont{Z.-C.} \bibnamefont{Gu}},
  p.~\bibinfo{pages}{16} (\bibinfo{year}{2012}), \eprint{1202.3120}.

\end{thebibliography}

\end{document}